\journal{Journal of \LaTeX\ Templates}
\renewcommand{\paragraph}[1]{\vspace{1mm}\noindent{\bf #1}}
\newcommand{\eat}[1]{}
\newtheorem{definition}{Definition}[section]
\newtheorem{proposition}[definition]{Proposition}
\newtheorem{lemma}[definition]{Lemma}
\newtheorem{theorem}[definition]{Theorem}
\newtheorem{example}[definition]{Example}
\newcommand{\squishlist}{
\begin{list}{$\bullet$}
{ \setlength{\itemsep}{0pt} \setlength{\parsep}{3pt}
\setlength{\topsep}{3pt} \setlength{\partopsep}{0pt}
\setlength{\leftmargin}{1.5em} \setlength{\labelwidth}{1em}
\setlength{\labelsep}{0.5em} } }
\newcommand{\squishlisttwo}{
\begin{list}{$\bullet$}
{ \setlength{\itemsep}{0pt} \setlength{\parsep}{0pt}
\setlength{\topsep}{0pt} \setlength{\partopsep}{0pt}
\setlength{\leftmargin}{2em} \setlength{\labelwidth}{1.5em}
\setlength{\labelsep}{0.5em} } }
\newcommand{\squishend}{
\end{list}  }
\newcommand{\symif}{:-}
\newcommand{\CM}{{\cal M}}
\newcommand{\C}{{\cal C}}
\newcommand{\V}{{\cal V}}
\newcommand{\R}{{\cal R}}
\newcommand{\I}{{\cal I}}
\newcommand{\CP}{{\cal P}}
\newcommand{\CS}{{\cal S}}
\newcommand{\CL}{{\cal L}}
\newcommand{\CQ}{{\cal Q}}
\newcommand{\CA}{{\cal A}}
\newcommand{\CB}{{\cal B}}
\begin{document}

\begin{frontmatter}

\title{Semi-interval Comparison Constraints in Query Containment and Their Impact on Certain Answer Computation}
%
\author{Foto N. Afrati}
\ead{afrati@gmail.com}
\address{National Technical University of Athens}
\author{Matthew Damigos}
\ead{mgdamig@gmail.com}
\address{Ionian University, Corfu}

\begin{abstract}

We consider conjunctive queries with arithmetic comparisons (CQAC) and investigate the computational complexity of the problem: Given two CQAC queries, $Q$ and $Q'$, is $Q'$ contained in $Q$? We know that, for CQAC queries, the problem of testing containment is $\Pi_2 ^p$ -complete.    However, there are broad classes of queries with semi-interval arithmetic comparisons in the containing query that render the problem solvable in NP. In all cases examined the contained query is allowed to be any CQAC. Interestingly, we also prove that there are simple cases where the problem remains $\Pi_2 ^p$ -complete.

 We also investigate the complexity of computing certain answers in the framework of answering CQAC queries with semi-interval comparisons using any CQAC views. We prove that maximally contained rewritings in the language of union of CQACs always compute exactly all certain answers. We find cases where we can compute certain answers in polynomial time using maximally contained rewritings.

\end{abstract}

\begin{keyword}

\end{keyword}

\end{frontmatter}


\section{Introduction}
\label{intro-sect}

A conjunctive query with arithmetic comparisons (CQAC) is a select-project-join query in SQL. Query containment and query equivalence play a prominent role in efficient query processing, e.g., for minimizing the number of joins in a query. Query equivalence can be reduced to a query containment problem. Data integration is often put into the framework of answering queries using views via contained rewritings that are found based on properties of query containment. Recently, the problem of determinacy has attracted the interest of researchers and query containment tests offer tools for its investigation.

For conjunctive queries (CQs), the query containment problem is shown to be NP-complete~\cite{Chandra77-1}. 
Membership in NP is proven via a containment mapping from the variables of one query to the variables of the other which preserves the  relation symbol, i.e., it is  a homomorphism. 
 For conjunctive queries with arithmetic comparisons the query containment problem is $\Pi^p_2$-complete~\cite{Klug88-1},
\cite{Meyden97-12}.  

We denote conjunctive queries with arithmetic comparisons (CQAC) by $Q=Q_0+\beta$ where $Q_0$ denotes the relational subgoals and $\beta$ the arithmetic comparison subgoals.
The containment test now uses  all containment mappings $\mu_1,\dots , \mu_k$ from the variables of one query to the variables of the other.
The containment test decides whether $Q_2\sqsubseteq Q_1$ by checking whether the following containment entailment is true~\cite{Gupta94-1,Zhang94-4}:
$$\phi: \beta_2  \Rightarrow \mu_1(\beta_1) \vee \cdots \vee
	\mu_k(\beta_1).\footnote{we assume the queries are normalized; see definition shortly.}$$
Previous work has considered CQACs with only left semi-interval arithmetic comparisons or only right semi-interval ACs~\cite{Klug88-1,AfratiLM06}.  A left semi-interval  (LSI) AC is an AC of type $var\leq const$ or $var< const$, where $var$ is a variable and $const$ is a constant and a right semi-interval  (RSI) AC is an AC of type $var\geq const$ or $var> const$.   The LSI AC with $\leq$ is called closed (CLSI) and the LSI AC with $<$ is called open
(OLSI) and similarly for RSI, we define ORSI and CRSI.
Klug in~\cite{Klug88-1} noticed that if only LSI (RSI respectively) are used then one containment mapping suffices to prove query containment. Further work  in~\cite{Afrati04-1} considers certain broader classes of CQACs with LSI (RSI respectively) and shows that a single containment mapping suffices to prove query containment (in this case, we say that the homomorphism property holds). In~\cite{AfratiLM06} a more elaborate approach that works on the containment entailment is taken to prove that for certain cases of queries with both LSI and RSI ACs, we can check containment in non-deterministic polynomial time although we may need more than one containment mappings to make the containment entailment true. 

\begin{table*}
  \small{ \begin{center} \begin{tabular} {|c|c|l|c|c|} \hline
{\bf Contained Query} & {\bf Containing Query} &  {\bf norm} &{\bf Complexity} & {\bf
Reference} \\\hline \hline

 OSI and $\neq$&   OSI   &n/a &     $\Pi^p_2$-complete        &  Theorem \ref{thm-hardness}  \\\hline
 LSI & OLSI, constant    &n/a &     $\Pi^p_2$-complete        &  Theorem \ref{thm-hardness123}  \\\hline
 CLSI, $\neq$  &  OLSI   &n/a &     $\Pi^p_2$-complete        &  Theorem \ref{thm-hardness123}  \\\hline
  $\neq$  &  $\neq$   &n/a &     $\Pi^p_2$-complete        &  Theorem \ref{thm-hardness123}, \cite{Kolaitis98-1}  \\\hline
%

    any & one AC & no & NP & Theorem \ref{thm-one} \\\hline

  CLSI  &   CLSI      & no   &   NP/HP          & \cite{Klug88-1} \\  \hline

  any  &   CLSI      & no   &   NP/HP          & Theorem \ref{them-norm-lsi}, \cite{Afrati04-1} \\  \hline

     any$^c$ &LSI  & no & NP/HP &  Theorem \ref{them-norm-lsi}, \cite{Afrati04-1}\\ \hline 

SI &   CLSI 1CRSI   & yes&        NP     &  \cite{AfratiLM06} \\\hline

any closed ACs& CLSI 1CRSI     & no&     NP        &  \cite{Afrati19} \\\hline

 any& CLSI 1CRSI     & no&     NP        &  Theorem \ref{them-norm-lsi-rsi1} \\\hline

  any$^c$& CLSI 1ORSI     & no&     NP        &  Theorem \ref{them-norm-lsi-rsi1} \\\hline

  any$^c$& OLSI 1ORSI     & no&     NP        & Theorem \ref{them-norm-lsi-rsi1} \\\hline

  any$^c$& OLSI 1CRSI     & no&     NP        &  Theorem \ref{them-norm-lsi-rsi1} \\\hline




%
\end{tabular}
\caption{Complexity of CQAC query containment focusing on cases the containing query uses only semi-interval comparisons.
}

\label{fig:table-results}
\end{center} }  \end{table*}

In data integration, in the local-as-view approach, views are maintained over the sources in order to be used to answer queries. Query answering is usually made possible via a rewriting that expresses the query in terms of the views. 
Since the views do not provide all the information that the base relations that form the query would require, we are looking into computing certain answers, i.e., all the answers that we are certain would be in the answers of the query given a specific view instance. 
For conjunctive queries, it is shown that maximally contained rewritings  (MCR) in the language of union of conjunctive queries compute all certain answers in polynomial time \cite{AfratiK10}. For CQAC queries and views, this may not  always be the case. There is a large body of work on the topic of obtaining MCRs for certain cases \cite{2017Afrati}. Recent work developed an implementation for computing certain answers~\cite{BenediktCGGTMO22}.  In \cite{Abiteboul98-1}, the problem of finding certain answers is proven coNP-hard when the queries have inequalities ($\neq$) and the views are CQs.  

When only CLSI are used in the containing query the containment problem is in NP via the homomorphism property.
Interestingly, for the case the containing query uses only OLSIs, we prove here that the containment problem is still 
$\Pi^p_2$-complete. However, to prove the hardness result we need to use queries which share constants in a rather  intricate way, thus creating a dichotomy which  leaves a class with the ``majority'' of the queries to be such that query containment is an NP problem.  In fact, the subcase for which complexity is in NP,  membership in NP is shown via the homomorphism property, i.e., one mapping suffices to prove containment. 
The only other result we are aware of concerns CQAC queries that use only $\neq$ comparisons
 \cite{Kolaitis98-1} where a boundary is delineated depending on the number of relational subgoals of the query.

The following example shows two queries that belong in the class for which containment checking is not an NP problem:


$Q_1() :- a(X,5),  X< 5$~~~~~~~$Q_2() :- a(X,5), a(Y,X),X\leq 5, Y<5$

The above example shows another challenge that we need to take into account  before we find all the containment mappings to use in the containment entailment: we need to normalize the queries. A CQAC query is normalized if each variable appears only once in relational subgoals and to compensate for that, we add equalities.
Also no constants are allowed  in relational subgoals. Thus, a normalized query uses explicit equalities in addition to inequalities. The above queries in the normalized form are as follows:\\
$Q_1() :- a(X,Z),  X< 5, Z=5$\\
$Q_2() :- a(X,Z), a(Y,W), X\leq 5, Y<5,Z=5,W=X$

The cases that we treat in this paper and all related previous results are listed in 
Table \ref{fig:table-results}.  where, in the first two columns we mention the arithmetic comparisons allowed.
Notation  any$^c$  in the table means that any AC can be used in $Q_2$ up to a certain condition which is stated in the corresponding theorem.  Notation NP/HP means that it is in the complexity class NP via the homomorphism property. The third column in the table refers to whether normalization is needed.   
The table only contains results from the literature that concern SI comparisons. All the symmetrical results are valid  too,  i.e., when we  interchange  RSI with LSI ACs and vice versa. We do not state explicitly the symmetrical results.

 In the framework of answering queries using views,  
when only a view instance is available, we want to find answers to a given query which are always computed on any database that produces the view instance or a superset of the view instance. Through this consideration the concept of certain answers is defined. 
Query rewriting  techniques are used to find efficiently some certain answers to queries when 
 only a view instance is available. Actually, for CQ query and views, maximally contained rewritings  (MCR) can compute exactly all certain answers, and hence certain answers can be computed in polynomial time.
In this paper, we prove that the same result is true for CQAC query and views. 
However, unlike CQs, it is not easy to find MCRs that are unions of CQAC queries and the problem of finding certain answers for CQ views and queries with only ``$\neq$'' is coNP-hard, as is proven in  \cite{Abiteboul98-1}. Here we focus again on queries with LSI and RSI ACs only and show that, for any CQAC view set, we can find an MCR in the language of (possibly infinite) union of CQACs.
This extends the results in  \cite{AfratiLM06}  and settles an error in there because, as the following example demonstrates, even when the views use only SI ACs in their definition, an MCR may have to use  an AC that relates two variables. The following query and views use only SI ACs: \\ $Q() \symif e(X,Y),e(Y,Z),X\geq 5,Z\leq 5$, \\ $V_1(Z) \symif e(X,Y),e(Y,Z),X\geq 5$,   $V_2(X) \symif e(X,Y),e(Y,Z),Z\leq 5$.\\
 However, the  following  CQAC:  
$Q() \symif  V_1(Z),V_2(X),X\geq Z$ is a contained rewriting that uses an AC that is not an SI.
 The cases that we treat in this paper and all related previous results are listed in 
Table \ref{fig:table-results2}.  
\begin{table*}
  \small{ \begin{center} \begin{tabular} {|c|c|l|c|c|} \hline
{\bf Views  } & {\bf Query } &  {\bf MCR} &{\bf Certain } & {\bf
Reference} \\
{\bf } & {\bf } &   &{\bf Answers} &  \\
\hline \hline

CQ-LSI&   CQ-LSI  &Union of CQ-LSI &     PTIME       &  \cite{Pottinger00-1,AfratiLM06} \\\hline

CQ-SI&   CQ-RSI1  &Datalog with ACs &     PTIME       &  \cite{AfratiLM06} \\\hline
CQAC&   CQ with closed RSI1  &Datalog with ACs &     PTIME       & Thm. \ref{thm-mainsec61certain}
 \\\hline
 CQAC&   CQ$^*$ w. closed/open RSI1  &Datalog with ACs &     PTIME       &  Thm. \ref{thm-mainsec61certainMCR}  \\\hline
CQAC&   CQ$^*$ w. open/closed RSI1  &Datalog with ACs &     PTIME       & Thm.  \ref{thm-mainsec61certainMCR} \\\hline

CQAC&   CQ$^*$ w. open RSI1  &Datalog with ACs &     PTIME       &  Thm. \ref{thm-mainsec61certainMCR} \\\hline
CQ&   CQ$^{\neq}$   & ~~~~~~~~n/a  &  coNP-hard      &   \cite{Abiteboul98-1} \\\hline



\end{tabular}
\caption{Work on finding maximally contained rewritings (“MCR”) and certain answers.}

\label{fig:table-results2}
\end{center} }  \end{table*}


The main contributions in this paper are:
\squishlist
\item The results on the complexity of the query containment problem for conjunctive queries with arithmetic comparisons  (CQACs) where the containing query uses only semi-interval arithmetic comparisons. These results are summarized in
Table
\ref{fig:table-results} where we mention only results from previous work that concern semi-interval comparisons.
 (Sections \ref{sec-hardd} -- \ref{sec-exte-single-mappinga-var})
\item We prove that for  CQAC queries and views, if there is a maximally contained rewriting (MCR) in the language of union of CQACs, then this MCR computes exactly all certain answers. (Sections \ref{sec-mcr-intro} -- \ref{sec-6} ) 
\item  For queries that are CQAC with semi-interval comparisons with a single RSI comparison  and  any CQAC views, we build an MCR in the language of Datalog with arithmetic comparisons. (Sections  \ref{sec-mcrs-cert-answe} -- \ref{subsec-mcr-datalogAC}), hence proving that in this case, we can compute certain answers in polynomial time.
\squishend

The structure of the paper is as follows:
After Related Work (Section   \ref{sec-related-}) and Preliminaries (Section \ref{sec-prelims}),  Section \ref{sec-reason-AC}  presents a sound and complete set of elemental implications to derive an arithmetic comparison from a given set of arithmetic comparisons. It also presents the preliminaries to analyze the containment test in the case of CQACs.
Section \ref{sec-hardd} presents the reduction for the hardness result.

The three sections that follow investigate the cases where the problem of query containment is in NP. Section   
\ref{sec-one-ac-cont} considers the case where the containing query has only one AC.
 Section  \ref{sec-si-no-norm}   considers the case where the  containing query uses SI ACs. Section \ref{sec-normaliz}  discusses the issue of normalization and extends the results of Section  \ref{sec-si-no-norm}. Section  \ref{sec-exte-single-mappinga-var}  takes advantage of the observation that the head variables (and possibly some more variables) of the containing query map on the same variables of the contained query for every containment mapping and, thus, extends the results of the previous sections.
%
%
%
%

Section \ref{sec-mcr-intro} introduces maximally contained rewritings (MCRs) and Section \ref{sec-mcr-more} defines expansion of a rewriting and makes remarks that concern idiosyncrasy for CQAC and contained rewritings. Section \ref{sec-6} proves the next major result which says that an MCR in the language of union of CQACs computes all certain answers for CQAC query and views. The rest of the sections focus on finding MCRs for the case the query contains LSI ACs and only one RSI AC. For this to happen, we make use of another containment test, for this particular case, which is based in a transformation of the containing query to a Datalog query (without ACs) and a transformation of the contained query to a CQ. We present this test in Sections \ref{sec-mcrs-cert-answe} and \ref{subsec-construct-Datalog} which leads to Theorem \ref{thm:main123}, the proof of which is in  \ref{prf:thm-main123p} while  Section \ref{subsec-simplefacts} serves as introduction to the proof in  \ref{prf:thm-main123p}. In Section \ref{subsec-mcr-datalogAC} the algorithm based on Theorem \ref{thm:main123} is presented for finding an MCR for the aforementioned special case.


\section{Related work }

\label{sec-related-}

\textit{CQ and CQAC containment:} The problem of containment between conjunctive queries (CQs, for short) has been studied in \cite{Chandra77-1}, where the authors show that the problem is NP-complete, and the containment can be tested by finding a containment mapping. As we already mentioned, considering CQs with arithmetic comparisons (CQACs), the problem of query containment is $\Pi^p_2$-complete \cite{Meyden97-12}. 
Zhang and Ozsoyoglu, in \cite{Zhang93-1}, showed that  testing containment of two CQACs can be done by checking the containment entailment.
Kolaitis et al. \cite{Kolaitis98-1} studied the computational complexity of the query-containment problem of queries with disequations ($\neq$). In particular, the authors showed that the problem remains $\Pi^p_2$-hard even in the cases where the acyclicity property holds and each predicate occurs at most three times. However, they proved that if each predicate occurs at most twice then the problem is in coNP. 
Karvounarakis and Tannen, in \cite{karvounarakis2008conjunctive}, also studied CQs with disequations ($\neq$) and identified special cases where query containment can be tested by checking for a containment mapping (i.e., the containment problem for these cases is NP-complete).

The homomorphism property for query containment of conjunctive queries with arithmetic comparisons was  studied in \cite{Klug88-1,Zhang94-4,Afrati04-1,Afrati19}, where classes of queries were
%
identified for which the homomorphism property holds.

\vspace{10px}
\textit{Rewritings and finding MCRs:} The problem of answering queries using views has been extensively investigated in the relevant literature (e.g., \cite{Levy95-1, Levy00-1, 2017Afrati}); including finding equivalent and contained rewriting. Algorithms for finding maximally contained rewritings (MCRs) have also been studied in the past \cite{Abiteboul98-1, Grahne99-1, Levy96-1, Pottinger00-1, Mitra99-1, Duschka97-3, Afrati99-1, AfratiLM06}.  The authors in \cite{Pottinger00-1} and \cite{Mitra99-1} propose two algorithms, the Minicon and shared-variable algorithm, respectively, for finding MCRs in the language of unions of CQs when both queries and views are CQs.  \cite{Pottinger00-1} also considers restricted cases of arithmetic comparisons (LSI and RSIs) in both queries and views.
\cite{GeckKSS23} examines equivalent  CQ rewritings for acyclic CQ queries.      
The works in \cite{Duschka97-3} and \cite{Afrati99-1} studied the problem where the query is given by a Datalog query, while the views are given by CQs and union of CQs, respectively. In both papers, the language of MCRs is  Datalog. The authors in \cite{CaoFGL18} studied the problem of finding MCRs in the framework of bounded query rewriting. They investigated several query classes, such as CQs, union of CQs, and first order queries, and analyzed the complexity in each class. Work in \cite{AfratiK10} proposed an efficient algorithm that finds MCRs in the language of union of CQs in the presence of dependencies. The work in \cite{AfratiLM06} investigated the problem of finding MCRs for special cases of CQACs. \cite{CimaCLP23} is a recent account on view-based query processing as  an abstraction in data integration. Determinacy is another related problem investigated recently, where we ask about  the existence of equivalent rewritings in the case where the answers to the views uniquely determine
the answers to the query. In 
\cite{NashSV07}, \cite{SegoufinV05}, \cite{Afrati-determ}, \cite{CalvaneseGLV02}, \cite{CalvaneseGLV05} notions related to determinacy are considered, in \cite{BenediktPW23}    determinacy for nested relational queries is investigated, in \cite{KwiecienMO22}  determinacy for multiset semantics, and \cite{BenediktEM17}       \cite{Marcinkowski20}, \cite{BenediktKOR23} investigates determinacy for recursive queries and views.

\vspace{10px}
\textit{Certain answers and MCRs:} The problem of finding certain answers has been extensively investigated in the context of data integration and data exchange, the last 20 years (e.g., \cite{AndritsosFFHHHKMNPVVY02, Abiteboul98-1, Grahne99-1, fagin2005data, AfratiK10, KonstantinidisA13}). In \cite{Grahne99-1, fagin2005data}, the authors investigated the problem of finding certain answers in the context of data exchange, considering CQs. The work in \cite{fagin2005data} was extended for arithmetic and linear arithmetic CQs  in \cite{CateKO13}.
In \cite{Abiteboul98-1}, the authors investigated the relationship between MCRs and certain answers.
 In \cite{Abiteboul98-1}, the problem of finding certain answers is proven coNP-hard when the queries have inequalities ($\neq$) and the views are CQs. 
In \cite{AfratiK10}, the authors proved that an MCR of a union CQs computes all the certain answers, where MCR is considered in the language of union of CQs.  \cite{BenediktCGGTMO22} developed an implementation for computing certain answers.

\vspace{10px}
\textit{Other work with arithmetic comparisons in queries:} As concerns studying other related problems of queries  in the presence of arithmetic comparisons recent work can be found
in \cite{FanLLT18}, where the authors propose to extend
graph functional dependencies with linear arithmetic expressions
and arithmetic comparisons. They study the problems of testing satisfiability and related problems over integers (i.e., for non-dense orders).
In  \cite{PapadimitriouY97} the complexity  of evaluating conjunctive queries  with
arithmetic comparisons is investigated  for acyclic queries, while query containment for acyclic conjunctive queries was investigated in \cite{ChekuriRaj97}.
Other works \cite{CateKO13,AfratiLP08}  have added arithmetic to extend the expressiveness of  tuple generating dependencies and data exchange mappings, and studied the complexity of related problems. 
Queries with arithmetic comparisons on incomplete databases are considered in \cite{ConsoleHL20}.

\section{Preliminaries}
\label{sec-prelims}


A \textit{relation schema} is a named relation defined by its name (called \textit{relational symbol}) and a vector of attributes. An \textit{instance} of a relation schema is a collection of tuples with values over its attribute set. These tuples are called {\em facts}. The schemas of the relations in a database constitute its \textit{database schema}. A relational \textit{database instance} (database, for short) is a collection of  relation instances.

A {\em conjunctive query (CQ for short)} $Q$ over a database schema $\CS$ is a query of
the form: $h( \overline{X})\ :-\  e_1( \overline{X}_1),\ldots, e_k( \overline{X}_k)$,
where $h( \overline{X})$ and $ e_i( \overline{X}_i)$ are atoms, i.e., 
they contain a relational symbol (also called \textit{predicate} - here, $h$ and $e_i$ are predicates) and a
vector of variables and constants.
The atoms that contain only constants are called \textit{ground} atoms and they represent   \textit{facts}.

The {\em head} $h(\overline{X})$, denoted $head(Q)$,
represents the results of the query, and $e_1 \ldots e_k$ represent
database relations (also called base relations) in $\CS$.
The variables in $\overline{X}$
are called \textit{head} or \textit{distinguished} variables,
while the variables in $\overline{X}_i$ and not in  $\overline{X}$  are called \textit{body} or \textit{nondistinguished}
variables of the query. 
The part of the conjunctive query on the right of symbol $:-$ is called the {\em body} of the query and is denoted $body(Q)$.
Each atom in the body of a conjunctive query
is said to be a {\em subgoal}. 
A conjunctive query is said to be
\textit{safe} if all its distinguished variables also occur in its
body. We only consider safe queries here.

The \textit{result} (or \textit{answer}), denoted $Q(D)$, of a CQ $Q$ when it is applied on a database instance $D$ is the set of atoms such that for each assignment $h$ of variables of $Q$ that makes all the atoms in the body of $Q$ true the atom $h(head(Q))$ is in $Q(D)$. 

{\em Conjunctive queries with arithmetic comparisons (CQAC for short)} are conjunctive queries that, besides the
relational subgoals, use also  subgoals that are arithmetic comparisons (AC for short), i.e., of the form
$X\theta Y$ where $\theta$ is one of the following: $<, >, \leq, \geq, =, \neq$, and  $X$ is a variable and
$Y$ is either a variable or constant. If $\theta$ is either $<$ or $>$ we say that it is an open arithmetic comparison
and if $\theta$ is either $\leq$ or $\geq$ we say that it is a closed AC. If the AC is either of the form $X<c$ or $X\leq c$ (either $X>c$ or $X\geq c$, respectively), where $X$ is a variable and $c$ is a constant,  then it is called \textit{left semi-interval}, LSI for short (\textit{right semi-interval}, RSI for short, respectively).
In the following, we use the notation $Q=Q_0+\beta$ to describe a CQAC query $Q$, where
$Q_0$ are the relational subgoals of $Q$ and $\beta$ are the arithmetic comparison
subgoals of $Q$. We define the {\em closure} of a set of ACs to be all the ACs that are implied by it.

The result $Q(D)$ of a CQAC $Q$, when it is applied on a database $D$, is given by considering all the assignments of variables (in the same fashion as  in CQs) such that the  atoms in the body  are included in $D$ and the ACs are true. For each such assignment, we produce a fact in the output  $Q(D)$.

All through this paper, we assume the following setting for a CQAC:
\squishlist
	\item [1.] Values for the arguments in the arithmetic comparisons are
	chosen from an infinite, totally densely ordered open interval, such as the
	rationals or reals.
	\item [2.] The arithmetic comparisons are  consistent, i.e., they are satisfiable.
	\item [3.] All the comparisons are safe, i.e., each variable in the
	comparisons also appears in some relational subgoal.
\squishend

A \textit{union of CQs} (resp. CQACs) is defined by a set $\CQ$ of CQs (resp. CQACs) whose heads have the same arity, and its answer $\CQ(D)$ is given by the union of the answers of the queries in $\CQ$ over the same database instance $D$; i.e., $\CQ(D)=\bigcup_{Q_i\in\CQ}Q_i(D)$.

A query $Q_1$ {\em is contained} in a query $Q_2$, denoted
$Q_1 \sqsubseteq Q_2$, if for any database $D$ of the base
relations, the answer  computed by $Q_1$ is a subset of the answer
computed by $Q_2$, i.e., $Q_1(D) \subseteq Q_2(D)$. The two
queries are {\em equivalent}, denoted $Q_1 \equiv Q_2$, if $Q_1
\sqsubseteq Q_2$ and $Q_2 \sqsubseteq Q_1$.

A {\em homomorphism} $h$ from a set of relational atoms $\CA$ to another set of relational atoms $\CB$ is a mapping
of variables and constants from one set to variables or constants of the other set that maps
each variable to a single variable or constant and each constant to the same constant.
Each atom of the former set should map to  an atom of the latter set with the same relational symbol. 

A {\em containment mapping} from a conjunctive query $Q_1$ to a conjunctive query $Q_2$  is a homomorphism from the atoms in the body of $Q_1$ to the atoms in the body of $Q_2$ that maps
the head of $Q_1$ to the head of $Q_2$.  All the mappings we refer to in this paper are containment mappings unless we say otherwise. Chandra and Merlin \cite{Chandra77-1} show that a conjunctive query $Q_2$
is contained in another conjunctive query $Q_1$ if and only if
there is a containment mapping from $Q_1$ to $Q_2$. The query containment problem for CQs is NP-complete.


\subsection{Testing query containment for CQACs}
\label{subsec:canonical-dbs}
In this section, we describe two   tests for CQAC query containment; using containment mappings and using canonical databases. 
%

First, we present the test using containment mappings (see, e.g., in \cite{2017Afrati}). Although finding a single containment mapping suffices to test query containment for CQs (see the previous section), it is not enough in the case of CQACs. In fact, all the containment mappings from the containing query to the contained one should be considered. Before we describe how containment mappings can be used in order to test query containment between two CQACs, we define the concept of normalization of a CQAC.

\begin{definition}
	\label{dfn-normalization}
	Let $Q_1$ and $Q_2$ be two conjunctive queries with arithmetic
	comparisons (CQACs).  We want to test whether $Q_2
	\sqsubseteq  Q_1$. To do the testing, we  first  normalize
	each of $Q_1$ and $Q_2$ to $Q'_1$ and $Q'_2$, respectively. We {\em normalize} a CQAC query as
	follows:
	
	\squishlist
		\item For each occurrence of a shared variable $X$ in a normal (i.e., relational) subgoal,
		except for the first occurrence, replace the occurrence of $X$ by a fresh
		variable $X_i$, and add $X = X_i$ to the comparisons of the
		query; and
		
		\item For each constant $c$ in a  normal subgoal,  replace the constant by a
		fresh variable $Z$, and add $Z = c$ to the comparisons of the
		query.
	\squishend
\end{definition}

Theorem~\ref{thm:cont-CQAC}\cite{Gupta94-1,Zhang94-4}  describes how we can test  query containment of two CQACs using containment mappings.

\begin{theorem}
	\label{thm:cont-CQAC}
	
	Let $Q_1,Q_2$ be CQACs, and $Q'_1=Q'_{10}+\beta'_1 ,Q'_2=Q'_{20}+\beta'_2$
	be the respective queries after normalization.
	Suppose there is at least one containment  mapping from $Q'_{10}$ to $Q'_{20}$.
	Let $\mu_1, \ldots , \mu_k $ be all
	the containment mappings  from $Q'_{10}$ to $Q'_{20}$. Then $Q_2
	\sqsubseteq  Q_1$ if and only if the following logical implication $\phi$
	is true:
	$$\phi: \beta'_2  \Rightarrow \mu_1(\beta'_1) \vee \cdots \vee
	\mu_k(\beta'_1).$$
	(We refer to $\phi$  as the {\em containment entailment} in the rest of this paper.)
\end{theorem}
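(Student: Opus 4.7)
The plan is to prove both directions after first observing that normalization preserves query semantics, so that it suffices to work with $Q'_1 = Q'_{10} + \beta'_1$ and $Q'_2 = Q'_{20} + \beta'_2$. The post-normalization property that I will lean on heavily is that every variable occurs at most once in relational subgoals, and that every constant inside a relational subgoal has been replaced by a fresh variable constrained by an equality.

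For the ``if'' (soundness) direction, I would take an arbitrary database $D$ and an arbitrary tuple $t \in Q'_2(D)$, witnessed by a valuation $\alpha$ from the variables of $Q'_2$ to the domain such that $\alpha(body(Q'_{20})) \subseteq D$, $\alpha \models \beta'_2$, and $\alpha(head(Q'_2)) = t$. By the hypothesis $\phi$, there exists some $j$ with $\alpha \models \mu_j(\beta'_1)$. I then form the valuation $\beta = \alpha \circ \mu_j$ for the variables of $Q'_1$. Each subgoal $g$ of $Q'_{10}$ satisfies $\mu_j(g) \in body(Q'_{20})$ (because $\mu_j$ is a containment mapping), hence $\beta(g) = \alpha(\mu_j(g)) \in D$. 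Since $\beta$ also satisfies $\beta'_1$ and $\beta(head(Q'_1)) = \alpha(head(Q'_2)) = t$, we conclude $t \in Q'_1(D)$.

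For the ``only if'' (completeness) direction, I would contrapose: assume $\phi$ is false, so there is a valuation $\alpha$ of the variables of $Q'_2$ into the dense totally ordered domain witnessing $\beta'_2 \wedge \bigwedge_j \neg \mu_j(\beta'_1)$. I would then form the canonical database $D_\alpha = \{\alpha(g) : g \in body(Q'_{20})\}$, which immediately yields $\alpha(head(Q'_2)) \in Q'_2(D_\alpha)$. The goal is to show that this tuple is not in $Q'_1(D_\alpha)$, which contradicts $Q_2 \sqsubseteq Q_1$. Suppose toward a contradiction there is a valuation $\beta$ for $Q'_1$ with $\beta(body(Q'_{10})) \subseteq D_\alpha$, $\beta \models \beta'_1$, and $\beta(head(Q'_1)) = \alpha(head(Q'_2))$. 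For each subgoal $g$ of $Q'_{10}$ I would choose a subgoal $g'$ of $Q'_{20}$ with $\beta(g) = \alpha(g')$; by normalization, each variable of $Q'_{10}$ occurs in exactly one relational position, so these choices assemble into a variable map $\mu: vars(Q'_{10}) \to vars(Q'_{20})$ that is a containment mapping (the head condition follows because $\beta$ and $\alpha$ agree on the heads). Hence $\mu = \mu_j$ for some $j$, and $\beta = \alpha \circ \mu_j$ on the relevant positions, so $\beta \models \beta'_1$ gives $\alpha \models \mu_j(\beta'_1)$, contradicting the choice of $\alpha$.

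The main obstacle is to ensure that the extracted map $\mu$ is well defined and truly is a containment mapping, rather than a spurious reconstruction produced when $\alpha$ accidentally identifies values of distinct, unrelated variables of $Q'_{20}$. To handle this, I would first refine $\alpha$ to be injective modulo the equivalence relation on variables of $Q'_2$ induced by the equality atoms in the closure of $\beta'_2$, exploiting density of the order to perturb values within intervals so that all strict comparisons in $\beta'_2$ and all falsified disjuncts $\mu_j(\beta'_1)$ remain falsified. After this refinement, two variables of $Q'_{20}$ take the same value under $\alpha$ only if they are forced equal by $\beta'_2$, so the ambiguity in choosing $g'$ above is harmless: any choice yields the same $\mu$ up to $\beta'_2$-equivalence, and the equalities in $\beta'_1$ pulled back through $\mu$ are still implied by $\beta'_2$. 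This technical use of the dense order, together with the normalization discipline that every variable appears at most once in the relational body, is what makes the extraction of $\mu_j$ clean and closes the argument.
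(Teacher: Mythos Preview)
The paper does not prove this theorem; it is stated with a citation to \cite{Gupta94-1,Zhang94-4}. So there is no in-paper argument to compare against, and I evaluate your sketch on its own.

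Your ``if'' direction is correct and standard. In the ``only if'' direction you correctly isolate the real obstacle: from a matching valuation $\beta$ into $D_\alpha$ you can always extract a \emph{body} homomorphism $\mu$ with $\beta=\alpha\circ\mu$, but $\mu$ need not satisfy the head condition when $\alpha$ identifies distinct variables of $Q'_{20}$. Your proposed fix, however, has a gap. The claim that one can always perturb $\alpha$ to be injective modulo the $\beta'_2$-equality closure while keeping \emph{every} $\mu_j(\beta'_1)$ falsified is not true in general. If some $\mu_j(\beta'_1)$ contains a disequality, its negation is an equality, and that equality may be the only reason $\mu_j(\beta'_1)$ is false under $\alpha$; separating the two variables then makes $\mu_j(\beta'_1)$ true. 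Concretely, take $Q_1\colon q(X)\ \symif\ r(X,Y),\,X\neq Y$ and $Q_2\colon q(A)\ \symif\ r(A,B)$. The only containment mapping sends $(X,Y)\mapsto(A,B)$, $\phi$ is ``$\text{true}\Rightarrow A\neq B$'', and every $\alpha$ falsifying $\phi$ has $\alpha(A)=\alpha(B)$; no injective refinement falsifies $\phi$.

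This does not doom your strategy, but it shows the perturbation step needs more care. In the example above the \emph{unrefined} $\alpha$ already yields a counterexample $D_\alpha=\{r(c,c)\}$, because the unique body homomorphism happens to be the containment mapping. Conversely, in situations where a non-containment body homomorphism $\nu$ witnesses $\alpha(head_2)\in Q_1(D_\alpha)$ (for example $Q_1\colon q(X)\ \symif\ p(X,Z),\,Z>5$ and $Q_2\colon q(A)\ \symif\ p(A,B),p(C,D),\,D>5$ with $\alpha(A)=\alpha(C)$), the accidental equality $\alpha(\nu(head_1_i))=\alpha(head_{2,i})$ is \emph{not} needed to falsify any $\mu_j(\beta'_1)$, so your perturbation does go through there. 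A clean way to phrase the correct argument is to show the equivalence
\[
\big[\beta'_2\Rightarrow\textstyle\bigvee_{\nu}\big(\nu(\beta'_1)\wedge\bigwedge_i \nu(head_{1,i})=head_{2,i}\big)\big]
\ \Longleftrightarrow\
\big[\beta'_2\Rightarrow\textstyle\bigvee_{j}\mu_j(\beta'_1)\big],
\]
where $\nu$ ranges over all body homomorphisms; the left side is exactly ``$\alpha(head_2)\in Q_1(D_\alpha)$ for all $\alpha\models\beta'_2$''. The nontrivial implication then reduces to showing that any $\alpha$ satisfying the right-hand negation can be adjusted (separating only those pairs $\nu(head_{1,i}),head_{2,i}$ with $\nu\notin\{\mu_j\}$) to also satisfy the left-hand negation, and for those specific pairs the separation never interferes with $\bigwedge_j\neg\mu_j(\beta'_1)$. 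Your write-up should make this two-case structure explicit rather than assert a blanket perturbation.
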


However, in special cases the following test is also valid: 
We consider, in each query, all the equations that are derived by its set of ACs and we replace all variables that are shown  to be equal with a single variable or constant. Then we  find all mappings (which, in general, are fewer in number because there are fewer variables).  The containment entailment that uses these mappings is guaranteed, in such spacial cases, to prove query containment. 
For such special cases we say that normalization is not necessay.

The second  containment test for CQACs uses \textit{canonical databases} (see, e.g., in \cite{2017Afrati}). Considering a CQ $Q$, a canonical database is a database instance constructed as follows. We consider an assignment of the variables in $Q$ to constants such that a distinct constant which is not included in any query subgoal is assigned to each variable. Then, the facts produced through this assignment define a canonical database of $Q$. Note that although there is an infinite number of assignments and canonical databases, depending on the constants selection, all the canonical databases are isomorphic; hence, we refer to such a database instance as the canonical database of $Q$.  To test whether   $Q_2\sqsubseteq  Q_1$,   we compute the canonical database, $D$, of $Q_2$ and check if $Q_2(D)\subseteq Q_1(D)$.

Extending this test  to CQACs, a single canonical database does not suffice. We construct a set of canonical databases of a CQAC $Q_2$ with respect to a CQAC $Q_1$ as follows. Consider the set $S=S_V\cup S_C$ including the variables $S_V$ of $Q_2$, and the constants  $S_C$ of both $Q_1$ and $Q_2$. Then, we partition the elements of $S$ into blocks such that no two distinct constants are in the same block. Let $\CP$ be such a partition; for each block in the partition $\CP$, we equate all the variables in the block to the same variable and, if there is a constant in the block, we equate all the variables to the constant. For each partition $\CP$, we create a number of {\em canonical databases}, one for each total ordering on the variables and constants that are present.

Although there is an infinite number of canonical databases, depending of the constants selected, there is a bounded set of canonical databases such that every other canonical database is isomorphic to one in this set.
Such a set is referred to as \textit{the set of canonical databases} of $Q_2$ with respect to $Q_1$. To test whether $Q_2\sqsubseteq  Q_1$, we construct all the canonical databases of $Q_2$ with respect to $Q_1$ and, for each canonical database $D$, we check if $Q_2(D)\subseteq Q_1(D)$.

\begin{theorem}
	A CQAC query $Q_2$ is contained into a CQAC query $Q_1$ if and only if, for each canonical database, $D$, of $Q_2$ with respect to $Q_1$, $Q_2(D)\subseteq  Q_1(D)$.
%
\end{theorem}

\subsection{Answering queries using views}

A \textit{view} is a named query which can be treated as a regular relation. The query defining the view is called \textit{definition} of the view (see, e.g., in \cite{2017Afrati}). 

%
%
%
 Considering a set of views $\V$ and a query $Q$ over a database schema $\CS$, we want to answer $Q$ by accessing only the instances of views~\cite{Levy95-1, Halevy01-1, 2017Afrati}.  To answer the query $Q$ using $\V$ we could rewrite $Q$ into a new query $R$ such that $R$ is defined in terms of views in $\V$ (i.e., the predicates of the subgoals of $R$ are view names in $\V$). 
We denote by $\V(D)$ the output of applying all the view definitions on a database instance $D$. Thus, $\V(D)$ and any subset of it defines a view instance $\I$ for which there is a database $D$ such that $\I \subseteq \V(D)$.

If,  for every database instance $D$, we have $R(\V(D))=Q(D)$ then $R$ is an \textit{equivalent rewriting} of $Q$ using $\V$. If $R(\V(D))\subseteq Q(D)$, then $R$ is a \textit{contained rewriting} of $Q$ using $\V$. 

\begin{definition}
	A rewriting $R$ is called a {\em maximally contained rewriting} ({\em MCR}) of query $Q$ using views $\V$ with respect to query language $\CL$ if 
\squishlist
		\item [1.] $R$ is a contained rewriting of $Q$ using $\V$ in $\CL$, and 
		\item [2.] every contained rewriting of $Q$ using $\V$ in language $\CL$ is contained in $R$.
\squishend
\end{definition}

Other concepts, like Datalog queries, and computing certain answers  of a query given a view instance, $\I$, will be defined in the sections where they appear.

\section{Reasoning with Arithmetic Comparisons}
\label{sec-reason-AC}

\subsection{Notation in Presentation of Results}

We present our results using the following notation:
We use $var$ and $const$ to denote any variable or any constant.
We define:
\squishlist
\item {\em Semi-interval (SI for short)} arithmetic comparisons are of the form $var\leq const$, $var<const$
(these are called {\em left semi-interval} arithmetic comparisons, {\em closed} and {\em open} respectively ) or of the form $var\geq const$, $var>const$
(these are called {\em right semi-interval} arithmetic comparisons, {\em closed} and {\em open} respectively).

For short, we use the notation CLSI for closed left semi interval arithmetic comparisons, ORSI for open right arithmetic comparisons, and similarly, we use ORSI, CLSI, or RSI, LSI if we refer to both closed and open.

%
\squishend

\begin{definition}
\label{dfn-ac-type}
An {\em AC-type} is one of the elements of the following set $T_{AC}$:
{\small
$$T\!_{AC}\!= \!\{var\!\leq \!var, var\!<\!var,  var\!\leq \!const, var\!<\!const,   const\!\leq\! var, $$ $$const\!<\!var, var\!=\!var, var\!=\!const, var\!\neq \!var, var\!\neq \!const \}$$}
Let $\theta$ be one of the $\{ <,>,\leq, \geq, =,\neq \} $.
We say that an AC $X\theta  Y$ is of type  ''$var~\theta~var$''  if both $X$ and $Y$ are variables. If $X$ is a variable and $Y$ is a constant then we say that  it is of type ''$var~\theta~const$.''
\end{definition}

For example, a closed LSI AC is of type  $var\leq const$. 

An {\em AC-family}, $T_A$ is defined by a subset of $T_{AC}$. An AC belongs to a specific AC-family if 
it is of the  type that belongs in the family. 

Let $T_A$ be an AC-family. Then $T_A$  defines a class, $\mathcal{Q}$, of CQAC queries as follows: A query $Q$ belongs in $\mathcal{Q}$ 
 if $Q$ uses ACs only of types  in $T_A$. 
\begin{table*}
  \footnotesize{ \begin{center} \begin{tabular} {|c|c|} \hline
{\bf Notation} & {\bf Meaning}  \\\hline \hline
 CQ&  conjunctive query \\\hline
AC& arithmetic comparison  \\\hline
CQAC& conjunctive quer with   arithmetic comparison\\\hline
SI &  semi-interval AC \\\hline
LSI (RSI)& left semi-interval AC (right semi-interval AC) \\\hline
OLSI (ORSI)&   open  left (right) semi-interval AC\\\hline
CLSI (CRSI)&   closed  left (right) semi-interval AC\\\hline
OSI (CSI)& open (closed) AC  \\\hline
CQSI or CQAC-SI & conjunctive query with semi-interval ACs  \\\hline
RSI1&  set of (or conjunctive query with)  SI ACs of which only one is RSI \\\hline
$var$ ($const$)&  variable (constant) \\\hline
lhs (rhs)& left hand side (right hand side) \\\hline
Datalog-expansion& The CQ (CQAC) that results from unfolding the rules of a Datalog query  \\\hline
view-expansion& The CQ (CQAC) that results by replacing the subgoals of a rewriting with   \\\hline
&the view definitions  \\\hline

\end{tabular}

\caption{Notation and Abbreviations. }

\label{fig:table-notation}
\end{center} }  \end{table*}

Table \ref{fig:table-notation} explains the notation. It is not exhaustive. The notation that is missing follows the same pattern. 
In  Table \ref{fig:table-results} we present the results using abbreviations (e.g., LSI,CLSI) whereas in the corresponding theorems, in order to present the results in a homogenous  way, we define classes of queries as an AC-family.

\subsection{ Computing the closure of a set of ACs}

%

We list below a  set of  eight {\em elemental implications} which can be used to derive any 
AC, $b$, from any consistent set of  ACs, $F$ which does not include equalities. However, when there are equalities in $F$, we incorporate them in the rest of ACs, i.e., if $X=Y$ is in $F$, we rewrite the ACs by substituting $X$ for $Y$.

{\small
\squishlist
\item [ 1.] $\mathsf{\mathsf{\mathsf{X}}} \leq \mathsf{\mathsf{X}}$
\item [ 2.] $ \mathsf{\mathsf{X}}<\mathsf{Y} \Rightarrow \mathsf{X}\leq \mathsf{Y}$
\item [ 3.] $\mathsf{X}<\mathsf{Y} \Rightarrow  \mathsf{X}\neq \mathsf{Y}$
\item [ 4.] $ 
\mathsf{X}\leq \mathsf{Y} ~\wedge~ \mathsf{X}\neq \mathsf{Y} \Rightarrow \mathsf{X}<\mathsf{Y}$

\item [ 5.] $
\mathsf{X}\neq \mathsf{Y} \Rightarrow \mathsf{Y}\neq \mathsf{X}$
\item [ 6.] $
\mathsf{X}<\mathsf{Y} ~\wedge~ \mathsf{Y}< \mathsf{Z} \Rightarrow \mathsf{X}<\mathsf{Z}$
\item [ 7.] $
\mathsf{X}\leq \mathsf{Y} ~\wedge~ \mathsf{Y}\leq  \mathsf{Z} \Rightarrow \mathsf{X}\leq \mathsf{Z}$
\item [ 8.] $\mathsf{X}\leq \mathsf{Z} ~\wedge~ \mathsf{Z}\leq \mathsf{Y} ~\wedge~ 
\mathsf{X}\leq \mathsf{W}~\wedge~ \mathsf{W}\leq \mathsf{Y} ~\wedge~ \mathsf{W}\neq \mathsf{Z}\Rightarrow \mathsf{X}\neq \mathsf{Y}$
\squishend

where, $\mathsf{X} ,\mathsf{Y}, \mathsf{Z},\mathsf{W}$ can be either variables or constants as follows:  In (1), $\mathsf{X}$ is a variable, in (2), (3), (4) and  (5), if $\mathsf{X}$ is a variable then $\mathsf{Y}$ is either a variable or constant and vice versa. In (6) and (7), either $\mathsf{X}$ is a variable or $\mathsf{Y}$ is a variable.  In (8), 
 one of the $\mathsf{X}$ or $\mathsf{Y}$ is a variable and the rest are variables or constants. When we apply the elemental implications, we assume that,  all the obvious ACs between two constants (e.g., $5\neq 6$ or $5<6$) 
are included in $F$.
%
}

 The proof of soundness and completeness for this set of elemental implications  when there are no constants can be found in \cite{Ullman88}. 
Now we prove:
\begin{theorem}
 The elemental implications is a sound and complete set  of implications which can be used to derive any 
AC, $b$, from any consistent set of  ACs, $F$ in the presence of  constants too. 
\end{theorem}

\begin{proof}
Let $F$ be  a set of ACs that use constants too. 
For each constant $c$, we create a variable $X^c$. We view them as   regular variables, but, for ease of reference, we call these new variables, pseudo-variables. 
We construct a set $F'$ of ACs which is $F$ where each variable $c$ is replaced by $X^c$ and $F'$ contains also a  new set 
of ACs that enforce a total ordering among the pseudo-variables which is the same as their counter parts (i.e., the constants which they replaced). $F'$ also constains disequations $\mathsf{X^c}\neq \mathsf{Y^c}$ for every pair of constants $\mathsf{X}\neq \mathsf{Y}$.
The result is an immediate consequence of the following lemma:

\begin{lemma}
Let $F$ be a set of ACs that use constants. Consider the above construction of $F'$. 
There is a truth assignment for $F$ iff there is a truth assignment for $F'$. 
\end{lemma}
\begin{proof}
The one direction is trivial. For the other direction, let $M$ be a  truth assignment for $F'$. 
 We will construct a truth assignment $M'$ for $F$.

 A pseudo-variable, $X^c$, is  assigned its constant, $c$ in $M'$, i.e., $M'(X^c)= c$. Let $I$ be the set of all constants used in $F$. We consider the total ordering among the constants in $I$ and for each pair of consecutive constants $c_1, c_2$ we do the following:

Let $X_1, X_2$ be the pseudo variables corresponding to constants $c_1, c_2$. We consider the interval $(M(X_1) M(X_2))$. For all the variables that are assigned values in this interval by $M$, we assign values in $M'$, as follows. We assign any values in the interval $c_1, c_2$ that retains their ordering in $M$.

Thus the variables have the same order in $M$ and $M'$. Moreover, the constants in $F$ are positioned in the same order with respect to these variables as with respect to the pseudo-variables.\end{proof}\end{proof}
%
%


  
%

\begin{definition}
We say that a {\em variable $X$  and a constant $c$ are related (or $X$ is related to $c$ and vice versa)} in a logically closed set $F$ if there is  $X\theta c$ in $F$.
\end{definition}

When there are semi-interval ACs in the set, $F$, of ACs, then the closure is computed with respect to the set of constants that appear in $F$. I.e., only SIs that use constants from this set are included in the closure.
The following lemma summarizes some easy observations  and we will, conveniently, refer to them often in the rest of the paper.

\begin{lemma} 
\label{lemm-3clauses}
\squishlist
\item [1.]
Suppose a set, $F$, of ACs includes $W\neq Z$. Suppose 
 elemental
 implication (8) is applied on $F$  and derives $X\neq Y$  by using $W\neq Z$. Suppose, on a second step, we apply (8) on $F\cup \{ X\neq Y \}$  to derive $X_1\neq Y_1$.  Then $X_1\neq Y_1$ can be derived from $F$ by using (8) only in one step.

\item [2.]
 If $X\leq Y$ is in the closure of a set of ACs, $F$, 
then, it can be derived by using several times  elemental implication  (2) and, after that, by using several times  elemental implication (7). The application of (7) several times creates a {\em chain,} $X_1\cdots X_k$, of variables/constants where, for each 
$i=1,\ldots, k-1$, either $X_i \leq X_{i+1}$ is in $F$ or $X_i < X_{i+1}$ is in $F$.

\item [3.]
 If $X< Y$ is in the closure of a set of ACs, $F$, 
then, either there is a chain of vars/const related by $<$ from $X$ to $Y$ or, there are two chains from $X$ to $Y$  (not necessarily distinct)  of only variables  (except $X$ or $Y$, one of which could be a constant) related by $\leq$  and there are  $Z,W$, one in each chain, such that $Z\neq W$. 

\item  [4.]  Suppose  there is a  chain from $var/const$ $X$ to  $var/const$ $Y $ as in the case (2) and (3)  above.  Then there is  a  chain from $var/const$ $X$ to  $var/const$ $Y $ that contains at most two constants.
%
%
%
%

\squishend
\end{lemma}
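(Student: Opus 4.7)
The proof splits into four structural claims about derivations using the elemental implications (1)--(9), and my strategy throughout will be induction on the length of a minimal derivation with case analysis on the outermost rule. I expect parts (1) and (4) to be largely bookkeeping, part (2) to follow from the clean observation that only rules (1), (2) and (7) produce a conclusion of the form $\leq$, and part (3) to be the main obstacle because of the interplay between rules (4), (6), (8) and the need to eliminate circular uses of rule (3).

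For part (1), I would unfold the two applications of (8) directly. The first deduces $X\neq Y$ from chains $X\leq Z\leq Y$, $X\leq W\leq Y$ together with $W\neq Z\in F$. The second takes $X\neq Y$ as its disequality input, together with chains $X_1\leq X\leq Y_1$ and $X_1\leq Y\leq Y_1$. Two applications of transitivity (7) yield $X_1\leq Z\leq Y_1$ and $X_1\leq W\leq Y_1$, so a single application of (8) to $F$ using the original $W\neq Z$ already gives $X_1\neq Y_1$. This collapsing fact is reused inside part (3).

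For part (2), I would induct on the size of a derivation of $X\leq Y$. The only rules with $\leq$ in the conclusion are (1), (2) and (7), so the derivation tree has (7)-applications internally and (1), (2) or direct $\leq$-atoms of $F$ at the leaves. Reading the leaves in order gives a chain $X=X_1,\dots,X_k=Y$ with each consecutive pair linked by $\leq$ or $<$ in $F$ (reflexive steps absorb). For part (3), I apply the same inductive template to $X<Y$. Rule (6) reduces the task to atomic strict facts, each either in $F$ or derived by (4). If every atom lies in $F$, I obtain the first case: a chain of $<$ edges from $X$ to $Y$. Otherwise some atom $X'<Y'$ is produced by (4), hence needs a $\leq$-chain (part (2)) and a disequality. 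That disequality either lies in $F$ (possibly after rule (5)) or is produced by (8); rule (3) alone would demand the very $<$-fact we are deriving and can be pruned from a minimal derivation. Part (1) then compresses any nested (8)-route to a single application, supplying the two $\leq$-chains $X\leq Z\leq Y$, $X\leq W\leq Y$ with $Z\neq W\in F$; in the sub-case where $X\neq Y$ is already in $F$, I duplicate the single $\leq$-chain and take $Z=X$ on one copy and $W=Y$ on the other.

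For part (4), I would exploit the preprocessing that closes $F$ under $c<c'$ for every pair of constants appearing in $F$ with $c<c'$. Given any chain furnished by (2) or (3), list its constants $c_1,\dots,c_m$ in the order they occur along the chain. Transitivity forces $c_1\leq\dots\leq c_m$ numerically, so the direct edge $c_1\leq c_m$, or $c_1<c_m$, or the trivial $c_1=c_m$, already lies in the preprocessed $F$. Replacing the subchain from $c_1$ to $c_m$ by this single edge leaves at most two constants in the chain, namely $c_1$ and $c_m$; if $X$ or $Y$ is a constant it coincides with one of them, so no extra constants enter. Applying the same shortcut to each of the two $\leq$-chains produced in case (ii) of part (3) gives the claim in full generality.
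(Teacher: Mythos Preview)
Your argument follows essentially the same route as the paper's: part (1) by direct unfolding and transitivity, part (2) by noting that only rules (1), (2), (7) yield $\leq$, part (3) by the (4)/(6) case split combined with part (1) to collapse nested uses of (8), and part (4) by short-circuiting between the first and last constant on the chain. You are in fact more careful than the paper in handling rule (3) circularity, rule (5), and the sub-case where the disequality already lies in $F$; the only small elision is that in part (3) the single application of (8) you obtain gives chains from $X'$ to $Y'$ rather than from $X$ to $Y$, but extending them along the remaining links of the (6)-chain (each of which yields $\leq$ via rule (2)) is routine and matches the level of detail in the paper's own proof.
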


The proof is in  \ref{app-A-A}.

\begin{lemma}
\label{lem-closure}
We can compute the closure of a set of ACs in time polynomial on the number of ACs.
\end{lemma}

All through the paper, when we refer to a set of ACs, $F$, we mean  a logically closed set of ACs.

%
%
%
%
%
%
%
%
%
%
%
%
%

In  this paper, we will often focus  on an implication of the form 
\begin{equation}
a_1~\wedge~ a_2 ~\wedge~ \cdots ~\wedge~ a_n \Rightarrow b_1\vee b_2 \vee \cdots \vee b_m \\ \label{aa}
\end{equation}
where   $a_i$'s   and $b_i$ 's  are ACs.  We define a {\em containment implication} to be a logical implication where the lhs is a conjunction of ACs and the rhs is a disjunction of ACs.
For example,      $X > 5 \Rightarrow  X < 3 \vee  X < 4$   is a  containment implication, but     $X > 5 \Rightarrow  X < 3 \wedge  X < 4$ is not because the latter uses a conjunction on the right hand side.

We say that the containment implication is {\em minimal}  or is {\em in minimal form} if the following is true: If we delete any disjunct from the rhs then the implication is not true. In  \ref{app-A-A}, a series of lemmas are presented concerning  the containment implication and the containment entailment that will be useful in the proofs of the results in this paper.

%
\section{$\Pi^p_2$-hardness Result}
\label{sec-hardd}
In this section we prove a series of theorems that refer to hardness results when only semi-interval ACs are used in the containing query. Even if the contained query uses only semi-interval ACs and an equation AC, the problem is proven here to be $\Pi^p_2$-hard.
The proof is the same as the proof in \cite{Meyden97-12} only it uses different gadgets to prove hardness results for different classes of queries, specifically queries that restrict their ACs to be only semi-interval, extending, thus, the results in  \cite{Meyden97-12}. The gadgets will be explained in detail in the proof. 

We begin with the following theorem which proves $\Pi^p_2$-hardness even in the case the contained query uses only ACs  that compare a variable to a constant.
\begin{theorem}
\label{thm-hardness}
The following problem is $\Pi^p_2$-hard: Given CQAC query $Q_2$ defined by the  AC-type $\{var\!< \!const, var\!> \!const,    var\!\neq  \!const
 \}$ and CQAC query $Q_1$ defined by the  AC-type $\{var\!< \!const, var\!> \!const
 \}$.
determine whether $Q_2\sqsubseteq Q_1$.
\end{theorem}


%
%
%
%

\begin{proof}
The reduction will be done from the $\Pi_2{-}SAT$ problem which is the following:

Instance: A $\Pi_2$ formula of quantified propositional logic, i.e., an expression 
\begin{equation}
\label{eq-PiSAT}
\forall p_1\cdots p_n\exists q_1\cdots q_m[\psi]
\end{equation}
where $\psi$ is a formula of propositional logic containing only the variables\\
 $p_1,\ldots ,p_n, q_1,\ldots ,q_m.$

Question: Is it true that formula (\ref{eq-PiSAT}) is satisfiable, i.e., is it true that for every assignment of truth values to  $p_1\cdots p_n$  there exists  an assignment of truth values to  $q_1\cdots q_m$ such that $\psi$ is true?

We will construct two Boolean CQAC queries $Q_1$ and $Q_2$ and we will prove that formula (\ref{eq-PiSAT}) is satisfiable if and only if $Q_2\sqsubseteq Q_1$.

{\bf Construction of $Q_2$:} Query $Q_2$ contains the following subgoals over constants $e$ and $f$ which encode a Boolean computation. The constants $e$ and $f$ encode True and  False respectively, the relation $a$ encodes ''and'', the relation $o$ encodes ''or'', and the relations $not$ and $true$ encode negation and the true value respectively):
$$a(e,e,e),a(e,f,f),a(f,e,f),a(f,f,f),  o(e,e,e),o(e,f,e), $$ $$o(f,e,e),o(f,f,f), not(e,f),not(f,e),true(e)$$
The above subgoals simulate the calculation of the truth value of the formula $\psi$.
In addition, we add $n$ copies of the following five subgoals:
$$a_i(U_i,e),a_i(V_i,f),a_i(W_i,e),a_i(W_i,f),U_i<7, 7< V_i,W_i\neq 7$$
for $i=1,\ldots n$, where $U_i,V_i$ are fresh and distinct variables that appear only in one relational subgoal and $W_i$ is also a fresh distinct variable which appears in two relational subgoals. 
%
%
%

In order to explain the various variants of this proof which will produce a series of different results, we refer to these four subgoals as {\em gadget 1}.

{\bf Construction of $Q_1$:} 
First, we construct the subgoals which will map to the $a_i$s above. These subgoals are $n$ copies of the following:
$$a_i(X_{1i},X_i), a_i(X_{2i},X_i), X_{1i}<7, 7< X_{2i}$$
where the variables are fresh and distinct for every copy.
We will refer to one of these copies as {\em gadget 2}.
The above formula expresses the fact that, while evaluating $Q_1$  on a canonical database of $Q_2$, each $X_i$ will take the value $e$ or $f$ and for each $e$ or $f$, there is a canonical database of $Q_2$ on which
$X_i$ takes exactly one of these two values. 
 This encodes all the combinations of truth values for the variables $p_1\cdots p_n$ in formula (\ref{eq-PiSAT}).

Finally, we construct a number of subgoals using the relations $a,o,not,true$ that will encode satisfaction of the formula 
$\psi$. 
It is based on a parsing tree for $\psi$. The construction of the parsing tree is done inductively on the structure of $\psi$. 
The leaves of the tree are labeled by  the variables $X_1,\ldots ,X_n, X_{n+1},\ldots ,X_m$ (some variables may appear in more than one leaf). Each internal node is associated with one of the symbols $\wedge, \vee, \neg$ and is labeled by a subexpression of $\psi$. If an internal node is associated with $\wedge$ and  has children labeled with subexpressions $\psi_1 $ and $\psi_2$ then the subexpression  $\psi_1 \wedge \psi_2$ labels this node. Similarly for $\vee$ and $\neg$, the latter being a node with only one child.  

Following $Tr$, we construct tree $Tr_1$ which has the same structure as $Tr$ only with different labels on the internal nodes.
The internal node of the node $\psi_1 \wedge \psi_2$ in $Tr$ is labeled in $Tr_1$ by $a(Y_1,Y_2,Y')$ where $Y_1$ is the last variable in the label of one child, $Y_2$ is the  last variable in the label of the other  child and $Y'$ is a fresh variable distinct from all the others. The subgoals that we finally add to $Q_1$ are all the atoms on the labels of the internal nodes of the tree $Tr_1$ and in the root. Moreover, we add atom $t(Y_h)$ where $Y_h$ is the last variable in the root node of the tree.

The intuition is the following:
First, the relations $a,o,not,true$ use only constants in their atoms in every canonical database of $Q_2$.
The variables in the leaves will map either to constant $e$ or to constant $f$ encoding true or false. This is enforced  by gadget 1.
The internal nodes encode truth values for the specific subexpression calculated from truth values of its children. Thus , e.g., the internal node with label $a(Y_1,Y_2,Y')$ will map to $a(e,f,f)$ encoding the fact that the subexpression (under the  specific truth assignment of the leaves) is false because the subexpression of the one child is true and the subexpression of the other child is false.

For each truth assignment to the variables $X_1,\ldots,X_n$ (that correspond to a truth assighment of $p_1,\ldots, p_n$ in  formula \ref{eq-PiSAT}), there is a canonical database  of $Q_2$ such that there is a mapping from the $a_i$s of $Q_1$ to the  $a_i$s of $Q_2$. E.g., if $p_1$'s truth assignment is True then we focus on atoms of $a_1$ and, specifically, it is a canonical database with atoms $a_1(U_1,e),a_1(V_1,f),a_1(W_1,e),a_1(W_1,f),U_1<7, 7< V_1,W_1\neq 7$
(with constants replacing the variables, of course) where we use any value such that $U_1 <7$ and $W_1 >7$.
These atoms will be used to map 
$a_1(X_{11},X_1), a_1(X_{21},X_1), X_{11}<7, 7< X_{21}$ in the mapping from $Q_1$ to this canonical database of $Q_2$.

%
%

Suppose the formula (\ref{eq-PiSAT})  is true and let $D$ be a canonical database of $Q_2$ on which  $Q_2$ and $Q_1$ evaluate to true.
By construction of $Q_2$ and especially the subgoals over the relations $a_i$, 
each $X_i $ of $Q_1$ will either map on e or f.  
The subgoals of $Q_1$ that resulted from the construction of the tree, during the evaluation, will map appropriately.
The existence of such a mapping is justified by the truth value of the formula (\ref{eq-PiSAT}) and the meaning of the subgoals built following the tree of $\psi$. Since there is a truth assignment to $q_1,\ldots ,q_m$ that make $\psi$ true, this truth assignment will enable the mappings from the labels in the internal nodes of tree $T_{tr1}$ to the canpnical database considered.

For the other direction: Suppose that on every canonical database of $Q_2$ the query $Q_1$ computes to True via mapping $\mu$. 
As we mentioned above, for every truth assignment to $p_1,\ldots, p_n$ there is a canonical database of $Q_2$ 
such that the variables $X_1,\ldots X_n$ map  via mapping $\mu$ to True or False as required by the truth assignment we assumed. 
 To prove that there are values to $q_1,\ldots q_m$ that make $\psi$ true, we need to argue inductively on the tree. Since $\mu$  is mapping the relations that label the nodes of the tree to the relations 
$a, o, not, true$  in the canonical database (recall that these relations encode logical computation) this mapping shows that the constants in the canonical database of $Q_2$ on which $X_{n+1},\ldots, X_m$ map provide the 
truth values that make the formula $\psi$ true.
%
%
Hence the formula (\ref{eq-PiSAT}) is true.
\end{proof}

Changing the gadgets as follows we obtain the same original proof of \cite{Meyden97-12} which proves hardness for the case only $<$ is allowed, but the ACs are not SIs. In the same paper it is also claimed that the hardness result also is true when only
$\leq$ is allowed. 

gadget 1: $$a_i(U_i,e),a_i(V_i,f),a_i(W_i,e),a_i(W_i,f),U_i<V_i$$

gadget 2: $$a_i(T_{1i},T_i), a_i(T_{2i},T_i), T_{1i}< T_{2i}$$

The hardness result is also valid for the case the containing query uses only open LSI comparisons.
 In particular we use, in the proof, the following two pairs of gadgets

gadget 2: $$ a_i(X_i,5,T_i),  X_i< 5$$

gadget 1: $$a_i(X_i,5,e), a_i(Y_i,X_i,f),X_i\leq 5, Y_i<5$$

gadget 2: $$ a_i(X_i,T_i),  X_i< 5$$

gadget 1: $$a_i(X_i,e), a_i(Y_i,f),X_i\leq 5, Y_i\leq5,X_i\neq Y_i$$

The first pair restricts the contained query not to use $\neq$, the second pair restricts the contained query not to use $<$. In particular, they derive the following theorem:
%
%
\begin{theorem}
\label{thm-hardness123}
1. The following problem is $\Pi^p_2$-hard: Given CQAC query $Q_2$ defined by the  AC-type $\{var\!< \!const, var\!\leq  \!const, var\!= \!const   
 \}$ and CQAC query $Q_1$ defined by the  AC-type $\{var\!< \!const, var\!= \!const 
 \}$
determine whether $Q_2\sqsubseteq Q_1$.

2. The following problem is $\Pi^p_2$-hard: Given CQAC query $Q_2$ defined by the  AC-type $\{var\!\leq  \!const, var\!\leq  \!const,    var\!\neq  \!var
 \}$ and CQAC query $Q_1$ defined by the  AC-type $\{var\!< \!const
 \}$
determine whether $Q_2\sqsubseteq Q_1$.

\end{theorem}

The following two sets of subgoals prove the case when only $\neq$ is used which is the results obtained in \cite{Kolaitis98-1}.

gadget 2: $$a_i(X_i,Y_i,T_i),  X_i\neq Y_i$$

gadget 1: $$a_i(X_i,Y_i,e),a_i(Y_i,Z_i,f),X_i \neq    Z_i$$

\section{Complexity of query containment. RSI1 queries}
\label{sec-si-no-norm}

Earlier work \cite{Klug88-1},\cite{Afrati04-1},\cite{AfratiLM06} has used the homomorphism property to prove membership of a CQAC query containment problem in NP.

\subsection{Homomorphism Property (HP)}

\begin{definition}
{\em Homomorphism property (HP)} 
Let $\mathcal{Q}_1$, $\mathcal{Q}_2$ be two classes of CQAC queries. We say that the homomorphism property holds from $\mathcal{Q}_1$ to $\mathcal{Q}_2$ if, for any pair of queries $Q_1 \in \mathcal{Q}_1$,  $Q_2 \in \mathcal{Q}_2$ the following are equivalent:

1. $Q_2 \sqsubseteq Q_1$.

2.  There is a homomorphism $\mu$ from the relational subgoals of $Q_1$ to the relational subgoals of $Q_2$ 
such that the following is true:  $\beta_2 \Rightarrow \mu(\beta_1)$.
\end{definition}

The  next theorem  is a straightforward consequence of Lemma \ref{lem-forgott} and  Lemma  \ref{lemma-direct} and extends results derived by Klug.

%
%

\begin{theorem}
\label{thm-HP}
(Complexity of query containment by homomorphism property(HP))
Consider the cases in 
Lemma \ref{lemma-direct}, where $A$ defines the set of ACs allowed in queries in the class  $\mathcal{Q}_2$ and $B$ the set of ACs allowed in  the class  $\mathcal{Q}_1.$ 

%
Then, the following problem is in $NP$:
 Given queries $Q_1 \in \mathcal{Q}_1$ and any CQAC $Q_2$, determine whether $Q_2 \sqsubseteq Q_1.$

\end{theorem}
In the rest of this paper we will go beyond homomorphism property, i.e., we will present results for cases  where the problem of deciding query containment is in NP but the homomorphism property does not hold. In this section, we consider containing queries that do not use equalities in their normalized form. In the next two sections, we 
will extend to when equalities are also present, and we will show cases where normalizations is not necessary to prove query containment via the containment entailment test.

\subsection{RSI1 queries}
We consider RSI1 queries:
\begin{definition}
An RSI1 query is a CQAC query which uses LSI ACs (open or closed) and exactly one RSI AC (open or closed).
\end{definition}
In our notation, we  use  the superscript ``$^{one}$'' to denote that only one RSI is allowed, thus, the class of closed RSI1 queries is  defined by the AC-type
 $\{var\!\leq \!const, var\!\geq \!const^{one} \}$.

%
%
%
The following proposition will be used often:

\begin{proposition}
	\label{trick-pro}
	Let $\beta$ be a conjunction of  ACs, and let each $\beta_1, \beta_2, \ldots ,\beta_k$  be a conjunction of a set of closed
	SIs of which only one is RSI. Suppose the following is true:
	$\beta \Rightarrow \beta_1 \vee \beta_2 \vee \cdots \vee \beta_k.$
	Then there is a $\beta_i$ (w.l.o.g. suppose it is $\beta_1$) such that either  of the following two happens:
	
	%
	%
	%
	%
	%
	%
	\begin{enumerate}[label=(\roman*)]
		\item $\beta \Rightarrow \beta_1,$ \textbf{or}
		\item there is an AC, $e$, among the conjuncts of  $\beta_1$, such that $\beta \not \Rightarrow e$  and for each AC, $e'$, in $\beta_1$ such that $e'\neq e$, we have that $\beta  \Rightarrow e'.$ 
	\end{enumerate}
\end{proposition}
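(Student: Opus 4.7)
The plan is to proceed by case analysis on how much of each $\beta_i$ is already implied by $\beta$. For each $i$ let $\gamma_i$ denote the set of conjuncts of $\beta_i$ that are \emph{not} implied by $\beta$. If some $\gamma_i$ is empty, then $\beta \Rightarrow \beta_i$ and we are in case (i). If some $\gamma_i$ is a singleton $\{e\}$, then every other conjunct of $\beta_i$ is implied by $\beta$ and we are in case (ii). The only remaining scenario is that $|\gamma_i| \geq 2$ for every $i$, and the plan is to derive a contradiction with the premise $\beta \Rightarrow \beta_1 \vee \cdots \vee \beta_k$ in this scenario.

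First I would exploit the structural hypothesis that each $\beta_i$ contains at most one CRSI: together with $|\gamma_i| \geq 2$, this forces $\gamma_i$ to contain at least one CLSI. Pick such a CLSI $a_i \in \gamma_i$ for every $i$; by construction $\beta \not\Rightarrow a_i$ for each $i$. Next I would argue that $\beta \Rightarrow a_1 \vee \cdots \vee a_k$. Indeed, if this failed, then $\beta \wedge \neg a_1 \wedge \cdots \wedge \neg a_k$ would be satisfiable; but any satisfying assignment would make $\beta$ true while falsifying every $a_i$, hence falsifying every $\beta_i$ (since $a_i$ is a conjunct of $\beta_i$), contradicting the premise $\beta \Rightarrow \bigvee_i \beta_i$.

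Having established $\beta \Rightarrow a_1 \vee \cdots \vee a_k$ where every $a_i$ is a CLSI, the final step is to reduce to minimal form (by dropping redundant disjuncts) and apply Lemma~\ref{lemma-direct}, clause~1: since the right-hand side is a disjunction of ACs of type $var \leq const$, the minimal form retains a single disjunct $a_{i_0}$, yielding $\beta \Rightarrow a_{i_0}$, which contradicts the choice $\beta \not\Rightarrow a_{i_0}$.

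The main obstacle, and essentially the only place where the ``at most one RSI per $\beta_i$'' hypothesis is used, is guaranteeing that a CLSI can be selected from every $\gamma_i$. Without this hypothesis, the selected $a_i$'s could be a mixture of CLSIs and CRSIs, and Lemma~\ref{lemma-direct} clause~1 would no longer directly collapse the disjunction to a single disjunct: on a mixed right-hand side, moving a CRSI to the left as an OLSI can legitimately help derive a CLSI through a chain (cf.\ Lemma~\ref{lemm-3clauses}), and a more delicate case analysis would be required. The elegance of the present argument is that the structural constraint on the $\beta_i$'s lets us sidestep that issue entirely and invoke the cleanest instance of Lemma~\ref{lemma-direct}.
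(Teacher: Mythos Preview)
Your proposal is correct and follows essentially the same approach as the paper's proof: both assume, towards contradiction, that every $\beta_i$ has at least two conjuncts not implied by $\beta$, use the ``only one RSI'' hypothesis to select a CLSI $a_i$ from each, observe that $\beta \Rightarrow a_1 \vee \cdots \vee a_k$ (the paper phrases this as ``apply the distributive law'' while you give the semantic argument), and then invoke Lemma~\ref{lemma-direct}, clause~(1), to collapse the disjunction and obtain the contradiction. Your version is slightly more explicit in reducing to minimal form before invoking the lemma, which is a small improvement in precision.
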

\begin{proof}
	Suppose there is no $\beta_i$ such that $\beta \Rightarrow \beta_i$.
	Towards  contradiction of clause (ii) in the statement of the proposition,
	suppose that for all the $\beta_i$s there are at least two ACs that are not implied by $\beta$. Since each
	 $\beta_i$ uses at least one LSI   and only one RSI, there is at least one LSI in $\beta_i$  that is not implied by $\beta$.  Let $\gamma$ be the disjunction of all such ACs.  If the implication in the premises of this proposition is true, then, $\beta \Rightarrow \gamma$ is also true (just apply the distributive law). 
This is impossible due to
Lemma \ref{lemma-direct}, clause (1).
\end{proof}

\begin{proposition}
 \label{trick-proO} 
Proposition 
\ref{trick-pro} 
is  true for  each of the following cases:
\squishlist
\item[1.]  $\beta$ is a set of ACs and each disjunct contains closed LSIs and one open RSI.
\item[2.]  $\beta$ is a set of ACs and each disjunct contains open LSIs and one open RSI.
\item[3.]  $\beta$ is a set of ACs and each disjunct contains open LSIs and one closed RSI.
\squishend
Moreover, for the cases (2) and (3), the following conditions should be satisfied:
For 
any $X\neq Y$ that appears in  $\beta$,     if a constant (say $c_0$) is related by an AC to both $X$ and $Y$  in $\beta$   then, either (i) $c_0$ does not relate to both $X$ and $Y$ by a closed AC in   $\beta$ or  (ii) $c_0$ does not appear in an  open AC in   some $\beta_i$. 

%
\end{proposition}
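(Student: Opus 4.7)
The plan is to mirror the argument used for Proposition \ref{trick-pro} essentially verbatim, with the only adjustment being which clause of Lemma \ref{lemma-direct} is invoked at the final step to derive the contradiction. So first I would assume, toward contradiction, that no $\beta_i$ satisfies either (i) or (ii); in particular, no $\beta_i$ is fully implied by $\beta$, and for every $\beta_i$ there are at least two ACs among its conjuncts that are not implied by $\beta$. Since each $\beta_i$ contains exactly one RSI and all remaining conjuncts are LSIs, the ``two-or-more unimplied ACs'' assumption forces at least one LSI in each $\beta_i$ to be not implied by $\beta$. Pick such an LSI $\ell_i$ from each $\beta_i$ and set $\gamma = \ell_1 \vee \cdots \vee \ell_k$. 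Applying the distributive law to $\beta \Rightarrow \beta_1 \vee \cdots \vee \beta_k$ and discarding those disjuncts that $\beta$ already implies, we get that $\beta \Rightarrow \gamma$ holds in minimal form with at least two disjuncts on the right.

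The three cases then differ only in the ``shape'' of $\gamma$ and therefore in which clause of Lemma \ref{lemma-direct} we reach for contradiction. In case (1), each $\ell_i$ is a closed LSI, so $\gamma$ is a disjunction of ACs of type $var \le const$, and Lemma \ref{lemma-direct} clause (1) tells us that the minimal form has exactly one disjunct on the right, contradicting our multi-disjunct $\gamma$. In cases (2) and (3), each $\ell_i$ is an open LSI, so $\gamma$ is a disjunction of ACs drawn from $\{var \le const, var < const\}$. If $\beta$ contains no $\neq$ subgoal, Lemma \ref{lemma-direct} clause (2) immediately forces a one-disjunct minimal form and we have our contradiction; otherwise, the extra hypothesis stated in the proposition regarding any $X \neq Y$ in $\beta$ and any related constant $c_0$ is precisely the precondition required by Lemma \ref{lemma-direct} clause (3), which again yields a one-disjunct minimal form, again a contradiction.

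I would conclude by noting that the RSI/LSI symmetry discussed at the end of Section \ref{sec-reason-AC} is what lets us invoke Lemma \ref{lemma-direct} (stated for LSIs on the right) in these RSI1 settings: although each $\beta_i$ has one RSI, the extracted $\ell_i$'s are LSIs, so $\gamma$ stays in the exact form that Lemma \ref{lemma-direct} targets.

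The main obstacle, such as it is, is not in the combinatorics of the contradiction step but in bookkeeping the condition carried into case (3) of Lemma \ref{lemma-direct}: one must check that the ``$c_0$ does not appear in an open AC in some $\beta_i$'' proviso, stated in the proposition relative to the original disjuncts $\beta_i$, is still available at the level of the reduced disjunction $\gamma$. Since $\gamma$ is obtained merely by selecting one conjunct from each $\beta_i$ and dropping the others, every open AC appearing in any $\ell_i$ already appears in the corresponding $\beta_i$, so the condition transfers without change. Beyond this routine verification, the argument is a direct parallel to the proof of Proposition \ref{trick-pro}.
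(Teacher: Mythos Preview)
Your proposal is correct and follows exactly the approach the paper takes: the paper's own proof is a one-line remark that the argument is ``the same as the proof of Proposition~\ref{trick-pro} except that we need to use Lemma~\ref{lemma-direct}, clause~(3) too,'' and your write-up is a faithful elaboration of precisely that, correctly routing case~(1) through clause~(1) and cases~(2)--(3) through clause~(3) (or clause~(2) when no $\neq$ occurs). Your final paragraph on transferring the $c_0$ condition from the $\beta_i$ to the selected $\ell_i$ is a reasonable piece of bookkeeping that the paper leaves implicit.
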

The proof of the above proposition is the same as the proof of Proposition 
\ref{trick-pro}  
except that we need to use Lemma 
\ref{lemma-direct}, clause (3) too. The above propositions  motivate the following definition which we use in Theorem \ref{lem-con-aa-aa}.

\begin{definition}
\label{dfn-con-aa-aa}
Let $F$ be a set of ACs  and let $\beta_1$ be the conjunction of ACs in $F.$ Let $S_{var}$ be a set of variables disjoint from the set  of variables in the ACs in $F.$ 

%

We say that  $F$ is an {\em NP-enabling set of ACs} if the following is true for any $\beta_2$ which is a conjunction of ACs  over $S_{var}$ and for every 
set of mappings $\mu_1, \ldots, \mu_k$  from the variables in the set of ACs in $F$ to the variables in $S_{var}$:

If the   entailment  $\beta_2  \Rightarrow  \mu_1 (\beta_1) \vee \cdots \vee \mu_k(\beta_1)$
 is true  then, 
there is a  mapping (say $\mu_1$)  such that the following happens:  There is an AC, $e$, among the conjuncts of  $\mu_1(\beta_1)$, such that $\beta_2 \not \Rightarrow e$  and for each AC, $e'$, in $\mu_1(\beta_1)$ such that $e'\neq e$, we have that $\beta_2  \Rightarrow e'.$ 
\end{definition}

The following theorem will be used to prove each of the four cases of Theorem 
\ref{thm-main-two}. 

\begin{theorem}
\label{lem-con-aa-aa}

%


Let  $\mathcal{Q}_1$ be a class of queries such that, for each $Q_1 \in \mathcal{Q}_1$, its set of ACs is a  NP-enabling set of ACs.
Then, the following problem is in $NP$:
 Given queries $Q_1 \in \mathcal{Q}_1$ and any CQAC $Q_2$, determine whether $Q_2 \sqsubseteq Q_1$.
\end{theorem}

\begin{proof}
Suppose $Q_1\sqsubseteq Q_1$ and this  is proven by the following containment entailment:
$\beta_2  \Rightarrow  \mu_1 (\beta_1) \vee  \mu_2 \beta_1) \vee \cdots \vee \mu_k(\beta_1).$ From the premises of this theorem, we deduce that
we can write the containment entailment, equivalently as:
$\beta_2  \wedge \neg e\Rightarrow  \mu_2 (\beta_1) \vee \cdots \vee \mu_k(\beta_1).$  
For this  latter entailment, according to the premises of the theorem, there is a containment mapping (say $\mu_2$) such that there is $e_1$ 
in $\mu_2(\beta_1)$ which satisfies the conditions in the statement of the theorem. Hence the containment 
entailment is, further, equivalently, written as:
$$\beta_2  \wedge \neg e \wedge \neg e_1 \Rightarrow  \mu_3 (\beta_1) \vee \cdots \vee \mu_k(\beta_1).$$
Thus, after $i$ iterations, we will have:
$$\beta_2 \wedge \neg e_1 \wedge \neg e_2 \cdots   \wedge \neg e_i \Rightarrow  \mu_{i+1}( \beta_1) \vee \cdots \vee \mu_k(\beta_1)$$
In the lhs of the above implication, we can have only polynomially many (on the number of AC appearing in both queries)
ACs, hence we can have at most polynomially many iterations. Hence the containment entailment can contain at most polynomially many containment mappings. 


%
%
%
%
%
%
%

To prove membership in $NP$ observe that   the certificate is a containment entailment and the order on which to consider the disjuncts in this containment entailment in order to prove that this containment entailment is true. 
\end{proof}

The following theorem says that if the containing query uses LSIs and at most one RSI then the query containment  problem is in NP under certain conditions:

\begin{theorem}
\label{thm-main-two}

Suppose $\mathcal{Q}_2$ is the class of CQAC queries defined by the  AC-type $T_{AC} $.
Suppose, also that   either of the following happens: 
(We write  $var \geq const^{one}$ ( or $var > const^{one}$) to mean that at most one RSI is present.)

\squishlist
\item [1.]
$\mathcal{Q}_1$ is the class of CQAC queries defined by the  AC-type $\{var\!\leq \!const, var\!\geq \!const^{one}   
 \}$.

\item [2.]
$\mathcal{Q}_1$ is the class of CQAC queries defined by the  AC-type $\{var\!\leq \!const, var\! > \!const^{one}
\}$.

\item [3.]
$\mathcal{Q}_1$ is the class of CQAC queries defined by the  AC-type $\{var\! <\!const, var\!\geq \!const^{one}
 \}$.

\item [4.]
$\mathcal{Q}_1$ is the class of CQAC queries defined by the  AC-type $\{var\! < \!const, var\! > \!const^{one}
 \}$.
%
%
%
%
%
%
%
%
\squishend

Suppose the following condition  is satisfied:
For 
any $X\neq Y$ that appears in  $Q_2$,     if a constant (say $c_0$) is related by an AC to both $X$ and $Y$   then, then either (i) $c_0$ does not relate to both $X$ and $Y$ by a closed AC in   $Q_2$ or  (ii) $c_0$ does not appear in an  open AC in   $Q_1$. 

Then, the following problem is in $NP$:
 Given queries $Q_1 \in \mathcal{Q}_1$ and $Q_2 \in \mathcal{Q}_2$, determine whether $Q_2 \sqsubseteq Q_1$.



%
\end{theorem}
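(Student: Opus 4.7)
My plan is to reduce the statement to Theorem \ref{lem-con-aa-aa} by verifying its hypothesis for each of the four AC-families. By Theorem \ref{thm:cont-CQAC}, testing $Q_2 \sqsubseteq Q_1$ reduces to checking a containment entailment of the form $\beta_2 \Rightarrow \mu_1(\beta_1) \vee \cdots \vee \mu_k(\beta_1)$, where the $\mu_i$ range over all containment mappings from the normalized relational skeleton of $Q_1$ to that of $Q_2$. Once I show that, for every such set of mappings, the entailment can be true only if some $\mu_j(\beta_1)$ has at most a single AC not already implied by $\beta_2$, Theorem \ref{lem-con-aa-aa} immediately yields NP membership: the polynomial-length certificate is the guessed ordering of mappings together with the ``unimplied'' AC chosen at each iteration, and each step is verified in polynomial time via the closure computation of Lemma \ref{lem-closure}.

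The verification rests on a simple structural observation. A containment mapping fixes constants and sends variables to variables or constants, so every non-equality conjunct of $\mu_i(\beta_1)$ inherits the same openness/closedness and LSI/RSI type as the corresponding conjunct of $\beta_1$. Consequently, after normalization, each $\mu_i(\beta_1)$ consists of a conjunction of LSIs together with at most one RSI (plus some equalities introduced by normalization), matching exactly the shape of the disjuncts treated in Propositions \ref{trick-pro} and \ref{trick-proO}.

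Case~(1), the all-closed setting with AC-family $\{var\leq const,\, var\geq const^{one}\}$, is directly handled by Proposition \ref{trick-pro}: it hands us either a mapping whose image is wholly implied by $\beta_2$, or a mapping in which exactly one AC fails. Cases~(2)--(4) mix open LSIs or an open RSI, and for these I would appeal to the matching clauses of Proposition \ref{trick-proO}. The side condition placed on $Q_2$ in the statement of the theorem---that for each $X\neq Y$ in $Q_2$ any shared constant $c_0$ either avoids being closed-related to both $X$ and $Y$ in $Q_2$, or else does not appear in an open AC of $Q_1$---is exactly the hypothesis required by Proposition \ref{trick-proO}, once one notes that since constants are preserved by each $\mu_i$, the phrase ``open AC of $Q_1$'' coincides with ``open AC of some $\mu_i(\beta_1)$.''

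The main obstacle I anticipate is the bookkeeping around normalization: the normalized $\beta_1$ carries equality ACs that separate repeated relational occurrences of variables and inline constants, and these equalities fit neither the LSI nor RSI pattern covered by the propositions. I plan to dispose of this by restricting attention, from the outset, to containment mappings for which $\beta_2$ already implies the images of those forced equalities; any other mapping cannot contribute to making the entailment true and may be discarded. With this cleanup done, Propositions \ref{trick-pro} and \ref{trick-proO} apply verbatim to the residual semi-interval conjunction, and chaining their conclusion into the iterative argument of Theorem \ref{lem-con-aa-aa} delivers the NP bound.
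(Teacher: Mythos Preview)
Your approach is essentially the paper's: invoke Proposition~\ref{trick-pro} for case~(1) and Proposition~\ref{trick-proO} for cases~(2)--(4) to verify the hypothesis of Theorem~\ref{lem-con-aa-aa}, then conclude NP membership. The paper's proof is literally one sentence pointing to these three results, and your expansion of how they fit together is accurate.

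One remark on the normalization discussion you flag as the ``main obstacle'': in the paper's organization this is not part of Theorem~\ref{thm-main-two} at all. The classes $\mathcal{Q}_1$ here are defined by AC-types that exclude equalities, so $\beta_1$ consists purely of the stated semi-intervals and Propositions~\ref{trick-pro}/\ref{trick-proO} apply directly to each $\mu_i(\beta_1)$. The question of what happens when normalization injects equality ACs into $\beta_1$ is deferred to Section~\ref{sec-normaliz} (Lemmas~\ref{lem-eq-directly}, \ref{lem-111} and Theorems~\ref{them-norm-lsi-rsi1-closed}, \ref{them-norm-lsi-rsi1}), which layers a separate argument on top of the present theorem. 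Your proposed fix---discarding mappings whose equality images are not implied by $\beta_2$---is in the spirit of Lemma~\ref{lem-eq-directly}, but it is not needed here and, as the proof of Lemma~\ref{lem-111} shows, requires a more careful inductive argument than simply restricting the mapping set up front.
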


\begin{proof}
%

The proof  is a straightforward consequence of either  Propositions \ref{trick-pro}  or \ref{trick-proO}  and of Theorem \ref{lem-con-aa-aa}.
\end{proof}

\section{The containing query has one AC. Query containment is in NP. Normalization is not necessary}
\label{sec-one-ac-cont}

Now we examine cases where we also allow equation ACs in the normalized version of the queries. 
Thus, here, we will also consider containment implications that have equation ACs on the rhs. 
We use the elemental implications to prove equality as follows:
An equation $X=Y$ is equivalent to $X\leq Y \wedge X\geq Y$ over a totally ordered dense open interval. 
To prove that this is true, we need to argue on two points: a) We observe  that when one of the $X$ or $Y$ is a constant, $c$,  then 
the following is true: if 
$X\leq c$ ($X\geq c$ respectively) then  there is more than one $X$ in the interval such that $X\leq c$ ($X\geq c$ respectively).  This observation excludes the case that $X\leq c \Rightarrow X=c$ for certain values of $c$.
b) The second point is the following:
Since we consider a consistent set of ACs, and assuming both
$X\leq Y $ and $X\geq Y $ are true,
 proving either of these ACs  will not use elemental implication (3) and hence by proving $X\leq Y \wedge X\geq Y$ using the elemental implications we prove that $X=Y$. 

The following is a useful observation to prove our results: When we prove $X\leq Y $ from a set of ACs $F$, using the elemental implications, any disequations in $F$ are not useful in the proof. Hence, 
$\beta_2 ~\Rightarrow ~X=Y \vee Z=W$,  is equivalent to $\beta_2 \wedge \neg X=\!Y ~\Rightarrow  ~Z=W$,  which is, therefore equivalent to $\beta_2 \Rightarrow Z=W$. Thus, we have proven that if the rhs of a containment implication uses only equalities then this containment implication in minimum form has only one equality AC on the rhs.

\begin{theorem}
\label{thm-one} 

Suppose $\mathcal{Q}_2$ is the class of CQAC queries defined by the  AC-type $T_{AC} $.
Suppose, also that   
$\mathcal{Q}_1$ is the class of CQAC queries defined by the  AC-type $T_{AC} $ with the restriction that 
there is at most one AC 
that is not an equation AC.

Then, the following problem is in $NP$:
 Given queries $Q_1 \in \mathcal{Q}_1$ and $Q_2 \in \mathcal{Q}_2$, determine whether $Q_2 \sqsubseteq Q_1$.
\end{theorem}

\begin{proof}
Consider the containment entailment $$\phi: \beta_2  \Rightarrow \mu_1(\beta_1) \vee \cdots \vee
	\mu_k(\beta_1)$$
where $\beta_2 $ is the conjunction of ACs in the contained query and $\beta_1$ is the conjunction of ACs in the containing query.
$\beta_1$ is either a single AC which is not an equation AC or a conjunction of ACs of which at most one is a non-equation AC.

First, suppose $\beta_1$ is  a single AC,
 hence the right hand side of the containment entailment is a disjunction of ACs. If $m$ is the number of variables and constants in the contained query, then there are at most $O(m^2)$ different ACs that use them. Each such AC is 
created by a containment mapping $\mu_i$, and therefore $O(m^2)$ containment mappings suffice to prove the 
containment entailment true. 
Given those mappings, then, according to Lemma \ref{lem-cccppp}, 
we can check in polynomial time if the containment entailment is true because now it has the form of a containment implication.

For the general case, we assume $\beta_1=b_1\wedge d_1\wedge d_2 \wedge \cdots$ where $b_1 $ is non-equation AC and the $d_i$s are equation ACs and we argue as follows: Suppose each $\mu_j(\beta_1)$ has at least one equation AC which is not directly derived by $\beta_2$, i.e., for every $\mu_j$ there is $d_l$ such that $\beta _2 \Rightarrow \mu_j(d_l)$ is not true.  In this case, we consider the containment implication which is using only  equation ACs and, specifically those equation ACs for which $\beta _2 \Rightarrow \mu_j(d_l)$ is not true. For this containment implication to be true, at least one of the equation ACs should be implied by $\beta_2$, since if we  assume otherwise, we arrive at contradiction (by use of  Lemma \ref{lem-orac}). Therefore, there is 
a $\mu_j(\beta_1)$ (suppose this is $\mu_1(\beta_1)$)  for which all equation ACs (i.e., of the form X=Y or X=c) are directly derived by $\beta_2$.
Thus, now we have to check a new (shorter) containment entailment:
$$\beta_2  \wedge \neg\mu_1(b_1) \Rightarrow \mu_2(\beta_1)\vee \cdots \vee \mu_k(\beta_1)$$
where $b_1$ is the single non-equation AC in $\beta_1$.
Now inductively, we repeat  the above argument for polynomially many steps, until we arrive at a containment entailment for which there is a single $\mu_j(\beta_1)$ on the rhs and all ACs of this $\mu_j(\beta_1)$ are implied by the lhs. Membership in NP is proven by observing that we can guess the order in which the mappings are used in the argument above and prove in each step that the given subsequent mapping has the property that each equation is implied by the  lhs of the current entailment.
I.e., in the general step of this induction, we will consider the a containment entailment  as follows:
$$\beta_2  \wedge \neg\mu'_1(b_1)  \wedge \neg\mu'_2(b_1)\ldots  \wedge \neg\mu'_j(b_1)\Rightarrow \mu'_{j+1}(\beta_1)\vee \cdots \vee \mu'_k(\beta_1)$$
and $\mu_{j+1}$ is the subsequent mapping, for which we prove that 
$$\beta_2  \wedge \neg\mu'_1(b_1)  \wedge \neg\mu'_2(b_1)\ldots  \wedge \neg\mu'_j(b_1)\Rightarrow \mu'_{j+1}(d_1)~~~~~for~all~i$$ 
\end{proof}
%


\section{Normalization}
\label{sec-normaliz}


%
In this section, first we prove  in Lemmas  \ref{lem-111}  and \ref{lem-1212}  that, for certain cases normalization is not needed to prove CQAC containment.
Then, we extend the results in previous sections by leveraging Lemmas  \ref{lem-111}  and \ref{lem-1212}. We observe that, when normalization is not necessary, then the containment entailment (on the non-normalized queries, now)  does not 
use equations, which is what we assumed in previous sections.

The following is an easy result:

\begin{theorem}
Suppose $\mathcal{Q}$ is the class of CQAC queries defined by the  AC-type $T_{AC} $ in their normalized form.
Suppose $Q_1,Q_2$ are two queries from  $\mathcal{Q}$. Suppose there is an AC in $Q_1$ which is $X=c_0$ and there is no AC in $Q_2$ which is $Y=c_0$. Then the following is true: $Q_2 \not \sqsubseteq Q_1$.
\end{theorem}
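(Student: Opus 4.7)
The plan is to exhibit a canonical database of $Q_2$ on which $Q_1$ does not produce the head tuple, thereby witnessing $Q_2 \not\sqsubseteq Q_1$.

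First I would construct such a canonical database $D$ for $Q_2$. Since the queries are normalized, every relational subgoal contains only variables; constants appear only inside equality ACs. So assign values to the variables of $Q_2$ as follows: partition the variables of $Q_2$ into equivalence classes induced by the equalities $Y = Y'$ and $Y = c$ occurring in $Q_2$, give every class that contains a constant $c$ that constant as its value, and give every other class a fresh, distinct value drawn from the densely ordered domain. The remaining (non-equality) ACs of $Q_2$ are consistent by assumption, so we can choose these fresh values along a total order satisfying all of them; moreover, since there are infinitely many admissible values and only finitely many constants to avoid, we can take every fresh value to be different from $c_0$. By hypothesis $Q_2$ has no equation $Y = c_0$, so no class of $Q_2$-variables contains $c_0$; thus \emph{no variable of $Q_2$ receives the value $c_0$ in $D$}. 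Let $D$ be the ground database obtained by applying this valuation to the relational subgoals of $Q_2$.

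Next I would observe that $Q_2$ evaluates to the head tuple on $D$: the very valuation used to build $D$ witnesses a satisfying assignment of $Q_2$ on $D$. So the head tuple $t$ of $Q_2$ under this valuation belongs to $Q_2(D)$. Now suppose toward contradiction that $Q_2 \sqsubseteq Q_1$. Then $t \in Q_1(D)$, so there exists an assignment $\nu$ of the variables of $Q_1$ to values of $D$ such that every relational subgoal of $Q_1$ is mapped into $D$ and every AC of $Q_1$ holds under $\nu$, and $\nu$ sends the head of $Q_1$ to $t$. In particular the AC $X = c_0$ of $Q_1$ forces $\nu(X) = c_0$. But $X$ is safe, so it appears in some relational subgoal of $Q_1$; since the relational subgoals of $Q_1$ use only variables (normalization) and $D$ consists of the facts produced by the valuation above, every value $\nu$ can assign to $X$ is a value that some variable of $Q_2$ received in $D$. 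By construction no such value equals $c_0$, contradicting $\nu(X) = c_0$.

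The argument above uses only that (i) normalization keeps constants out of relational subgoals, (ii) equivalence classes of $Q_2$-variables induced by its equality ACs do not contain $c_0$, and (iii) densely-ordered domains admit fresh values avoiding any fixed finite set. The only mildly delicate step is verifying that the fresh assignment can simultaneously satisfy all the remaining (non-equality) ACs of $Q_2$ while avoiding $c_0$; this follows from the consistency assumption on $Q_2$'s ACs together with the density and infinitude of the value domain, so it is routine rather than a real obstacle.
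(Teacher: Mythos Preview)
Your proof is correct, but it takes a genuinely different route from the paper's.

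The paper argues syntactically via the containment entailment: since $X=c_0$ is a conjunct of $\beta_1$, applying the distributive law to $\beta_2 \Rightarrow \mu_1(\beta_1)\vee\cdots\vee\mu_k(\beta_1)$ yields a containment implication whose right-hand side is a disjunction of equations $\mu_1(X)=c_0 \vee \cdots \vee \mu_k(X)=c_0$. Such an implication can hold only if one of these equations is already implied by $\beta_2$ (moving the other disjuncts to the left adds only $\neq$-atoms, which cannot help derive an equality via the elemental implications). Since the closure of $\beta_2$ contains no $Y=c_0$, the implication fails and containment does not hold.

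Your argument is semantic: you build a concrete database for $Q_2$ in which no variable receives the value $c_0$, observe that $Q_2$ computes its head tuple there, and then use safety of $X$ in $Q_1$ to conclude that $Q_1$ cannot satisfy $X=c_0$ on that database. This is more elementary in that it bypasses the containment-entailment machinery and the analysis of how equalities propagate through the distributive law; it relies only on the definition of containment and the consistency/density assumptions. The paper's approach, on the other hand, integrates seamlessly with the surrounding framework (Lemma~\ref{lem-orac}, Lemma~\ref{lem-eq-directly}) and makes explicit exactly which step of the entailment fails, which is the perspective used repeatedly in the neighbouring results on normalization.
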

\begin{proof}
After applying the distributive law on the containment entailment, there is a containment implication which on the lhs is a disjunction of equations, all of which use the same constant, $c_0$. Since this containment implication is true iff there is a disjunct implied by the lhs, this is a contradiction, since we assumed that the lhs does not include $Y=c_0$ in the closure of its ACs. 
\end{proof}

In the rest of this  section, we find cases when normalization is not necessary.

The following is a technical lemma which is useful in the proofs of the results that follow.

\begin{lemma}
\label{lem-eq-directly}
Consider the containment entailment $$\beta_2\Rightarrow \mu_1(\beta_1)\vee  \mu_2(\beta_1) \cdots  \mu_i(\beta_1)\cdots $$
where $i\geq 2$ and  $\beta_2 $ is the conjunction of ACs in the contained query and $\beta_1$ is the conjunction of ACs in the containing query.

Let $\beta_1=b_1\wedge \cdots \wedge b_m.$ If the above containment entailment is true, then
there is a $\mu_i$ (say it is $\mu_1$) for which, each  $\mu_1(b_j)$ which  is an equation (say the equation $X=Y$) is such that $\beta_2\Rightarrow \mu_1(b_j)$, i.e.,  $\beta_2 \Rightarrow \mu_1(X) =\mu_1(Y) $.
\end{lemma}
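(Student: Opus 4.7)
The plan is to argue by contradiction using the distributive law applied to the containment entailment, and then to observe that the elemental implications can never derive an equation from a set of ACs enriched only with $\neq$'s.

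First I would suppose, towards contradiction, that for every containment mapping $\mu_i$ there exists some $b_{j_i}$ such that $\mu_i(b_{j_i})$ is an equation and $\beta_2 \not\Rightarrow \mu_i(b_{j_i})$. Applying the distributive law to the containment entailment, exactly as was done in the analysis of Section~\ref{sec-analysis}, yields a conjunction of containment implications, one implication for each choice of a single conjunct from each $\mu_i(\beta_1)$. In particular, the choice $b_{j_i}$ for each $i$ produces a containment implication
$$\beta_2 \Rightarrow \mu_1(b_{j_1}) \vee \mu_2(b_{j_2}) \vee \cdots,$$
whose right-hand side is a disjunction of equations. Since the containment entailment is true, this containment implication is true as well.

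Next I would reduce this containment implication to minimal form by deleting redundant disjuncts; the surviving disjuncts remain equations, and each of them (being among the $\mu_i(b_{j_i})$) is still, by hypothesis, not implied by $\beta_2$ alone. Fix any surviving disjunct, say $e_1$ (an equation $X=Y$), and apply Lemma~\ref{lem-orac} to move the negations of the remaining disjuncts to the left, obtaining
$$\beta_2 \wedge \neg e_2 \wedge \cdots \wedge \neg e_k \Rightarrow e_1.$$
Note that each $\neg e_\ell$, being the negation of an equation, is a disequation of the form $U_\ell \neq V_\ell$.

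The main obstacle, and the key step, is to show that augmenting $\beta_2$ with disequations cannot create a new derivable equation. For this I would inspect the sound and complete list of elemental implications (1)--(9) from Section~\ref{sec-reason-AC}. Only implication~(9) ever concludes an equation, and it requires both $X \leq Y$ and $Y \leq X$ in the closure. The set of $\leq$-ACs in the closure is produced only by implications (1), (2) and (7), none of which uses a $\neq$-AC as a premise; implication~(4) does consume $\neq$, but it only produces $<$-ACs from $\leq$-ACs already present, so it cannot enlarge the set of $\leq$-ACs in the closure beyond what $\beta_2$ alone yields. Consequently the closure of $\beta_2 \wedge \neg e_2 \wedge \cdots \wedge \neg e_k$ contains the same equations as the closure of $\beta_2$; so if $\beta_2 \not\Rightarrow e_1$, then also $\beta_2 \wedge \neg e_2 \wedge \cdots \wedge \neg e_k \not\Rightarrow e_1$, contradicting the derivation above. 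Hence our contrary assumption fails, and some $\mu_i$ has the desired property, finishing the proof.
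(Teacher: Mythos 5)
Your proof is correct and follows essentially the same route as the paper's: assume the conclusion fails, extract via the distributive law the containment implication whose right-hand side is a disjunction of equations, move the other disjuncts to the left as disequations using Lemma~\ref{lem-orac}, and observe that the elemental implications cannot derive a new equation from added $\neq$'s. Your closing analysis of which elemental implications can produce equations is just a more detailed justification of the paper's one-line observation.
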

\begin{proof}
Suppose that the conclusion in the statement of the lemma is not true. Then for all $i$, there is a $k_i$ such that    $\mu_i(b_{k_i})$ is not implied by
$\beta_2$, where $b_{k_i}$ is an equation. Then we consider the containment implication $\beta_2 \Rightarrow {\large \vee}_i \mu_i(b_{k_i})$ in a minimal form. We move (according to Lemma \ref{lem-orac}) disjuncts from the rhs to the lhs  of this containment implication to prove a certain equation (say the equation $X=Y$).  By moving, according to Lemma \ref{lem-orac}, an equation from the rhs to the lhs, we apply negation, hence we are adding a conjunct on the lhs which uses $\neq$.  Now, observe  that the elemental implications that prove an equation do not use 
$\neq$ hence, the equation is implied by $\beta_2$, which is a contradiction to the assumption made.
\end{proof}
%
\subsection{Closed ACs in the containing query}
The following lemma is an extension of a known result which had assumed that both queries use closed ACs:


\begin{lemma}
\label{lem-111}
Let $Q_1,Q_2$ be CQACs such that  $Q_1$ contains only closed ACs,  and $Q_2$ contains any AC. Let $Q'_1=Q'_{10}+\beta'_1 ,Q'_2=Q'_{20}+\beta'_2$
	be the respective queries after we have  incorporated in the relational  subgoals all equalities that appear  in the closure of the query, i.e., if $X=Y$ is in the closure of the ACs in the  query we rewrite the query by substituting $X$ for $Y$.

	Suppose there is at least one containment  mapping from $Q'_{10}$ to $Q'_{20}$.
	Let $\mu_1, \ldots , \mu_k $ be all
	the containment mappings  from $Q'_{10}$ to $Q'_{20}$. Then $Q_2
	\sqsubseteq  Q_1$ if and only if the following logical implication $\phi$
	is true:
	$$\phi: \beta'_2  \Rightarrow \mu_1(\beta'_1) \vee \cdots \vee
	\mu_k(\beta'_1).$$
\end{lemma}
\begin{proof}
Consider the containment entailment $$\beta_2\Rightarrow \mu_1(\beta_1)\vee  \mu_2(\beta_1) \cdots  \mu_i(\beta_1)\cdots $$
where $\beta_2 $ is the conjunction of ACs in the contained query and $\beta_1$ is the conjunction of ACs in the containing query both in normal form.

Suppose  that containment mapping $\mu_1$ has the property of Lemma \ref{lem-eq-directly}.
%

 Let $\beta_1=b_1\wedge \cdots \wedge b_m.$ 
Inductively, we assume that 
there is  a set  $S$ of $\mu_i$s such that, for each $\mu_i\in S$,  if $\mu_i(b_{j})$ is an equation then it is implied by $\beta_2$, i.e., 
$\beta_2\Rightarrow \mu_i(b_{j})$. 
Consider the entailment

\begin{equation}
\label{A1-aa}
\beta_2 \Rightarrow 
\bigvee _{all~\mu_i \in S} \mu_i(\beta_1)
\end{equation}

Apply the distributive law to it and obtain a set of containment implications.  If one of those containment implications does not use any AC which is ``='', we say that this is a NEQ containment implication.            
We observe that if each NEQ containment implication   is true then  the entailment \ref{A1-aa} is true, and, hence,  the $\mu_i$s  in $S$ prove containment of the one query to the other. Moreover we have used only mappings 
that satisfy Lemma \ref{lem-eq-directly}.
Hence we have proven the present lemma.

To finish the proof, 
 we will prove that 
either a) each NEQ containment implication   is true   or b)
there is a $\mu_i=\mu_k$ not in $S$ such that  for each  $b_j$ which is an equation AC in $\beta_1$ the following is true:  $\beta_2 \Rightarrow \mu _k(b_j).$

If (a) is not true then  there is a NEQ containment implication resulting from entailment \ref{A1-aa}, $\beta_2 \Rightarrow \delta_k$ which is not true.
This means that $\neg\delta_k \wedge \beta_2$ is consistent. Suppose the following containment implication is true
\begin{equation}
\label{B1-aa}
\beta_2 \Rightarrow
\bigvee _{all~\mu_i \in S} \mu_i(\beta_1)   \bigvee _{\mu_i \in S_1\\where~	S_1\cap S =\emptyset } \mu_i(\beta_1)
\end{equation}
Suppose (b) above is not true. Then  for all $\mu_i \in S_1$, there is $b_{k_i}$ in $\beta_1$ such that 
$\beta_2\Rightarrow \mu_i(b_{k_i})$ is not true.

Consider the containment implication resulting from entailment \ref{B1-aa} which uses, for each $\mu_i \in S_1$, such that the following is not true  $\beta_2 \Rightarrow \mu_i(b_{k_i})$, and each $\mu_i \in S$,  uses the ACs that form $\delta_k$.
Hence the following is true:
$$\beta_2 \Rightarrow \delta_k  \bigvee _{\mu_i \in S_1}\mu_i(b_{k_i})$$
Hence

$$\beta_2  \wedge\neg \delta_k \Rightarrow  \bigvee _{\mu_i \in S_1}\mu_i(b_{k_i})$$
 This is a contradiction because $\delta _k$ has only closed ACs, hence $\neg\delta _k$ has only open ACs, hence no AC from $\neg\delta _k$ participates  in proving equality. Therefore the above is equivalent to 
$$\beta_2  \Rightarrow  \bigvee _{\mu_i \in S_1}\mu_i(b_{k_i})$$ which is equivalent to 
$\beta_2\Rightarrow \mu_i(b_{k_i})$ for each $i$, because the $\mu_i(b_{k_i})$ are all equalities.
%
%
%
%
\end{proof}

The following example shows that if $Q_1$ contains only one open LSI and 
even if only LSIs are used in both queries, if the containing query contains a single open LSI, then, in certain cases,  normalization is necessary, so the above result does not carry over to the case open SIs appear in $Q_1$.
\begin{example}

$Q_1():- a(5,X), X<5$, \\$Q_2():- a(5,X), a(X,Y),X\leq 5, Y<5$.
\end{example}


\subsection{Only LSIs in the containing query. NP via HP}

The following is a lemma similar to lemma \ref{lem-111}:

%

\begin{lemma}
\label{lem-1212}

Suppose
$\mathcal{Q}_1$ is the class of CQAC queries defined by the  AC-type $\{var\! <\!const, var\!\leq \!const, var\! =\!const, var\! =\!var
 \}$ and $\mathcal{Q}_2$ is the class of CQAC queries defined by the  AC-type $T_{AC}$ such that the following condition is satisfied:

%
%

There is no constant that is shared by an equation AC in $Q_1$ by an open LSI in $Q_1$ and by a closed LSI in $Q_2$.

 Consider  the following problem:
 Given queries $Q_1 \in \mathcal{Q}_1$ and $Q_2 \in \mathcal{Q}_2$, determine whether $Q_2 \sqsubseteq Q_1$.
This containment problem can be decided by  checking satisfaction of a containment entailment which uses containment mappings among non-normalized queries.\footnote{recall that a CQAC query is referred to as non-normalized if its ACs do not imply equalities}

\end{lemma}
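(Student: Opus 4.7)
The plan is to adapt the inductive argument of Lemma \ref{lem-111} to the setting where $Q_1$ may use open LSIs in addition to closed ACs and equations, with the added constant-sharing hypothesis closing the single new loophole that arises. A containment mapping $\mu$ on the normalized queries corresponds to a mapping on the non-normalized queries precisely when every equation AC $\mu(b_j)$ appearing in $\mu(\beta_1)$ is directly implied by $\beta_2$; call such a $\mu$ \emph{equation-respecting}. It therefore suffices to show that whenever the normalized containment entailment holds, it can be witnessed by a set of equation-respecting mappings.

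First I would invoke Lemma \ref{lem-eq-directly} to obtain an initial equation-respecting mapping $\mu_1$ and place it in a working set $S$. Inductively, I would consider the residual entailment $\beta_2 \Rightarrow \bigvee_{\mu \in S} \mu(\beta_1)$, apply the distributive law on the rhs, and single out the NEQ containment implications (those whose disjuncts contain no equation). If every NEQ implication is valid, then $S$ already witnesses $Q_2 \sqsubseteq Q_1$ using only equation-respecting mappings and we are done. Otherwise some NEQ implication $\beta_2 \Rightarrow \delta_k$ fails, so $\beta_2 \wedge \neg \delta_k$ is consistent. Because the full normalized entailment is true, some mapping $\mu \notin S$ must help cover $\beta_2 \wedge \neg \delta_k$; the goal of the inductive step is to show that one can always choose such a $\mu$ that is also equation-respecting, then adjoin it to $S$. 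Since the ACs accumulated on the lhs during iteration are drawn from a polynomial pool, the process terminates.

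The main obstacle is precisely the step where Lemma \ref{lem-111} used the fact that, in the closed-AC setting, $\neg \delta_k$ consists of open ACs only, so elemental implication (9) never fires to manufacture new equations. Here $\delta_k$ may contain an open LSI $\mu(X) < c$, whose negation $\mu(X) \geq c$ can combine with a closed LSI $\mu(X) \leq c$ from $\beta_2$ to yield a spurious equation $\mu(X) = c$ via implication (9); if that constant $c$ also drives an equation AC in some $\mu_i(\beta_1)$, the induction breaks because $\mu_i$ ceases to be equation-respecting with respect to $\beta_2$ alone. The stated hypothesis --- no constant is simultaneously shared by an equation AC in $Q_1$, an open LSI in $Q_1$, and a closed LSI in $Q_2$ --- rules out exactly this triple, so no harmful spurious equation can surface at a constant that feeds an equation in $\mu_i(\beta_1)$. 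A case analysis based on clauses (2) and (4) of Lemma \ref{lemm-3clauses}, which characterise derivations of $\leq$ and $<$ via chains and bound the number of constants in them, handles longer indirect derivations and completes the argument.
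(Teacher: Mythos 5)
Your proposal is correct and follows essentially the same route as the paper: it reuses the inductive construction of equation-respecting mappings from Lemma \ref{lem-111} and isolates the same single new obstruction --- that negating an open LSI $X<c$ from $\delta_k$ yields the closed RSI $X\geq c$, which via elemental implication (9) together with a closed LSI $X\leq c$ from $\beta_2$ could manufacture a spurious equation --- and observes that the stated constant-sharing condition blocks exactly this. The paper's own proof is just a short remark making precisely this point on top of the proof of Lemma \ref{lem-111}.
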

\begin{proof}
It is a copy-and-paste of the proof of Lemma \ref{lem-111} up until the last two paragraphs. Now we need to argue differently on the $\delta_k$ that appears in this proof. For that, we observe,  that an OLSI when moved on the lhs of the containment implication (as in Lemma \ref{lem-orac}) is a CRSI which, however, because of the condition cannot be used together with a CLSI to derive an equation.  
%
%
\end{proof}

The following theorem is a   consequence of Lemma \ref{lem-1212} and Theorem \ref{thm-HP}:

\begin{theorem}
\label{them-norm-lsi}
%
%
%
Suppose $\mathcal{Q}_2$ is the class of CQAC queries whose normalized version uses 
ACs from $T\!_{AC} $ and 
$\mathcal{Q}_1$ is the class of CQAC queries whose normalized version uses
ACs from $\{var\!\leq \!const, var\!\ < \!const, var\!=\!var, var\!=\!const
 \}$.

Suppose the following conditions are satisfied: 

1. For 
any $X\neq Y$ that appears in  $Q_2$,     if a constant (say $c_0$) is related by an AC to both $X$ and $Y$   then, then either (i) $c_0$ does not relate to both $X$ and $Y$ by a closed AC in   $Q_2$ or  (ii) $c_0$ does not appear in an  open AC in   $Q_1$. 

2. There is no constant that is shared by an equation AC in $Q_1$ by an open LSI in $Q_1$ and by a closed LSI in $Q_2$.
%
%

Then, the following problem is in $NP$:
 Given queries $Q_1 \in \mathcal{Q}_1$ and $Q_2 \in \mathcal{Q}_2$, determine whether $Q_2 \sqsubseteq Q_1$.

\end{theorem}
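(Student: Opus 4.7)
The plan is to combine Lemma~\ref{lem-1212} and Theorem~\ref{thm-HP}. First, I would apply Lemma~\ref{lem-1212}: the class $\mathcal{Q}_1$ uses, in its normalized form, only ACs of the types $var \leq const$, $var < const$, $var = var$, and $var = const$, which is exactly the AC-family required by that lemma. Condition~2 of the theorem statement is identical to the hypothesis that forbids a single constant from being shared by an equation AC in $Q_1$, an open LSI in $Q_1$, and a closed LSI in $Q_2$. Therefore Lemma~\ref{lem-1212} lets me reduce the containment test to checking a containment entailment built from containment mappings between the non-normalized versions of $Q_1$ and $Q_2$.

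Next, I would observe that in the non-normalized $Q_1$ the only arithmetic comparisons that remain as AC subgoals are LSIs: the equation ACs of the forms $var = var$ and $var = const$ in the normalized form are exactly the ones introduced by Definition~\ref{dfn-normalization} to compensate for repeated variables and for constants in relational subgoals, and any explicit equation in the original $Q_1$ can be eliminated by substituting along it. Hence the ACs of the non-normalized $Q_1$ lie in the family $\{var \leq const, var < const\}$, while those of the non-normalized $Q_2$ still range over all of $T_{AC}$.

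With this setup, clause~(3) of Lemma~\ref{lemma-direct} applies directly: the side condition of that clause on disequalities and shared constants coincides with Condition~1 of the present theorem. Consequently every containment implication extracted from the containment entailment, when written in minimal form, has a single disjunct on its right-hand side; that is precisely the homomorphism property. Theorem~\ref{thm-HP} then gives that a single containment mapping from $Q_1$ to $Q_2$ suffices to certify containment, and membership in NP follows: one guesses a mapping $\mu$ from the relational subgoals of $Q_1$ to those of $Q_2$ and verifies in polynomial time, using Lemma~\ref{lem-closure} together with Lemma~\ref{lem-orac}, that $\beta_2 \Rightarrow \mu(\beta_1)$.

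The main point to be careful with is bookkeeping rather than a genuine combinatorial obstacle: one must check that the two hypotheses of the theorem line up precisely with the hypothesis of Lemma~\ref{lem-1212} (Condition~2) and with the hypothesis of clause~(3) of Lemma~\ref{lemma-direct} (Condition~1), and that the transition from the normalized to the non-normalized form of $Q_1$ genuinely removes the equation ACs without reintroducing problematic structure. Once these alignments are verified, the result is an immediate composition of the two previous statements.
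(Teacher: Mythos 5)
Your proposal is correct and follows essentially the same route as the paper: Condition~2 is used via Lemma~\ref{lem-1212} to dispense with normalization, and Condition~1 is used via clause~(3) of Lemma~\ref{lemma-direct} and Theorem~\ref{thm-HP} to obtain the homomorphism property and hence NP membership. You simply spell out the bookkeeping (elimination of the equation ACs when passing to the non-normalized $Q_1$, and the verification that a guessed single mapping can be checked in polynomial time) that the paper leaves implicit.
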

\begin{proof}
The second conditions makes sure that normalization is not necessary, and the first condition is coming from 
Theorem \ref{thm-HP}.
\end{proof}

\subsection{Both  LSIs and  RSIs appear in the containing query. Cases in NP}

The following theorem is a  straightforward  consequence of Lemma \ref{lem-111} and Theorem \ref{thm-main-two}(1).

\begin{theorem}
\label{them-norm-lsi-rsi1-closed}
Suppose $\mathcal{Q}_2$ is the class of CQAC queries whose normalized version uses 
ACs from $T\!_{AC} $ and 
$\mathcal{Q}_1$ is the class of CQAC queries whose normalized version uses
ACs from $\{var\!\leq \!const, var\!\geq \!const^{one}, var\!=\!var, var\!=\!cons
 \}$.
(We write  $var \leq const^{one}$ ( $var < const^{one}$) to mean that at most one RSI is present.)

Then, the following problem is in $NP$:
 Given queries $Q_1 \in \mathcal{Q}_1$ and $Q_2 \in \mathcal{Q}_2$, determine whether $Q_2 \sqsubseteq Q_1$.

\end{theorem}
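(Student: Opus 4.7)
The plan is to derive the theorem by combining Lemma \ref{lem-111} (which tells us when we may skip normalization) with clause (1) of Theorem \ref{thm-main-two} (which gives NP containment for a containing query using closed LSIs and at most one closed RSI). Both ingredients are already in place, so the argument should be a short glue.

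First I would observe that every AC-type appearing in the normalized version of $Q_1$, namely $var\!\leq\!const$, $var\!\geq\!const^{one}$, $var\!=\!var$, and $var\!=\!const$, is closed. Hence Lemma \ref{lem-111} applies: we may decide $Q_2 \sqsubseteq Q_1$ by enumerating containment mappings from the relational subgoals of the non-normalized $Q_1$ to those of the non-normalized $Q_2$ and testing the resulting containment entailment. In particular, we may forget about the equality ACs that were introduced only to normalize repeated variables and constants in relational subgoals of $Q_1$; those constraints are already encoded implicitly by the shape of the relational subgoals in the original $Q_1$.

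Next I would invoke Theorem \ref{thm-main-two}(1). After stripping the normalization-induced equalities, the remaining ACs of $Q_1$ belong to $\{var\!\leq\!const, var\!\geq\!const^{one}\}$, which is exactly the hypothesis of clause (1). Since all ACs in $Q_1$ are closed, the ``open AC in $Q_1$'' disjunct of the side condition on shared constants is vacuously satisfied, so no extra assumption on $Q_2$ is required. Theorem \ref{thm-main-two}(1) then delivers membership in NP: via Theorem \ref{lem-con-aa-aa}, a polynomial number of containment mappings $\mu_{i_1},\dots,\mu_{i_t}$ together with the order in which they are used (with the peeled-off AC at each step) form a certificate, and by Lemma \ref{lem-closure} plus Lemma \ref{lem-orac} the certificate is verifiable in polynomial time.

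I do not expect a genuine obstacle; the only care needed is to make the first step precise, i.e., to state cleanly that ``not needing normalization'' for an all-closed $Q_1$ means we are free to feed the non-normalized pair $(Q_1,Q_2)$ into Theorem \ref{thm-main-two}(1), which is a statement about CQACs and not about their normalized form. Once this is spelled out, the theorem follows by direct combination of the two earlier results.
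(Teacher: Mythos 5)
Your proposal is correct and follows essentially the same route as the paper, which states this theorem as a straightforward consequence of Lemma \ref{lem-111} (normalization is unnecessary when the containing query has only closed ACs) combined with Theorem \ref{thm-main-two}(1). Your added observation that the shared-constant side condition of Theorem \ref{thm-main-two} is vacuous here because $Q_1$ has no open ACs is exactly the right justification for why no extra hypothesis on $Q_2$ is needed.
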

%
%
Now, we extend Theorem \ref{them-norm-lsi-rsi1-closed} to include also open LSIs and RSIs.

\begin{theorem}
\label{them-norm-lsi-rsi1}
Suppose $\mathcal{Q}_2$ is the class of CQAC queries whose normalized version uses 
ACs from $T\!_{AC} $ and 
$\mathcal{Q}_1$ is the class of CQAC queries whose normalized version uses
either of the following:
\squishlist
\item [1.]
ACs from $\{var\!\leq \!const, var\!\geq \!const^{one}, var\!=\!var, var\!=\!cons
 \}$.

\item [2.]
ACs from $\{var\!\leq \!const, var\!  > \!const^{one}, var\!=\!var, var\!=\!cons \}$.

\item [3.]
ACs from $\{var\! <  \!const, var\!\geq \!const^{one}, var\!=\!var, var\!=\!cons \}$.

\item [4.]
ACs from $\{var\!  <  \!const, var\!  >  \!const^{one}, var\!=\!var, var\!=\!cons \}$.
\squishend

(We write  $var \leq const^{one}$ ( $var < const^{one}$) to mean that at most one RSI is present.)

We assume the following constraints: a) each constant in an RSI in $Q_1$ is different from each constant in an LSI 
in $Q_1$, b) each constant in $Q_1$ in an open SI  in $Q_1$ does not appear in a closed SI in $Q_2$.

Then, the following problem is in $NP$:
 Given queries $Q_1 \in \mathcal{Q}_1$ and $Q_2 \in \mathcal{Q}_2$, determine whether $Q_2 \sqsubseteq Q_1$.

\end{theorem}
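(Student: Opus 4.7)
The plan is to dispose of Case~1 by a direct appeal to Theorem~\ref{them-norm-lsi-rsi1-closed}, and to handle Cases~2--4, where $Q_1$ contains at least one open semi-interval AC, by a two-stage argument: first show that normalization can be dispensed with under conditions (a) and (b), and then invoke the matching subcase (2), (3) or (4) of Theorem~\ref{thm-main-two} to conclude NP-membership.

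For the first stage, I would mimic the inductive construction used in Lemmas~\ref{lem-111} and~\ref{lem-1212}, extending it so that the containing query may carry one RSI alongside its LSIs. Starting from a mapping $\mu_1$ provided by Lemma~\ref{lem-eq-directly}, for which every equation AC in $\mu_1(\beta_1)$ is already implied by $\beta_2$, I would enlarge inductively a set $S$ of such ``well-behaved'' mappings. Factoring the containment entailment restricted to $S$ (via the distributive law as in Example~\ref{ex-firstLSI}) into its equation part, which is trivially true, and a NEQ residue $\delta_k$, we are done as soon as every NEQ implication is true. Otherwise $\neg\delta_k \wedge \beta_2$ is consistent, and we must exhibit a further mapping $\mu$ not yet in $S$ whose equation ACs are all implied by $\neg\delta_k \wedge \beta_2$, so that $\mu$ can be absorbed into $S$. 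The key point is that negating a disjunct in $\delta_k$ via Lemma~\ref{lem-orac} converts an open LSI of $Q_1$ into a closed RSI and an open RSI of $Q_1$ into a closed LSI. Condition (b) forbids the newly introduced closed SI from sharing its constant with any closed SI in $\beta_2$, so elemental implication (9) cannot fire to manufacture a fresh equation; condition (a) rules out the single RSI-constant of $Q_1$ coinciding with any LSI-constant of $Q_1$, which would otherwise let two disjuncts of $\delta_k$, once moved to the lhs, close up on themselves and synthesize an equation. Consequently, $\neg\delta_k \wedge \beta_2$ implies exactly the same equations as $\beta_2$, and the inductive step goes through as in Lemma~\ref{lem-111}.

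For the second stage, once normalization is unnecessary, $Q_1$ fits one of the AC-types (1)--(4) of Theorem~\ref{thm-main-two} verbatim. The residual hypothesis of that theorem concerning $\neq$ in $Q_2$ and open ACs in $Q_1$ follows immediately from condition (b): whenever $c_0$ appears in an open AC in $Q_1$, condition (b) ensures $c_0$ does not appear in a closed SI of $Q_2$, so $c_0$ cannot be related to both endpoints of a $\neq$-atom by closed ACs in $Q_2$. Applying the appropriate case of Theorem~\ref{thm-main-two} (which in turn invokes Proposition~\ref{trick-proO} and Theorem~\ref{lem-con-aa-aa}) then yields NP-membership, the witness being a polynomial-length sequence of containment mappings verified via the closure computation of Lemma~\ref{lem-closure}.

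The principal obstacle will be the bookkeeping in the inductive step of the first stage. Because several open SIs may have migrated to the lhs over the course of the induction, one must verify that no cascade through elemental implications (2), (7) and (9) silently produces an equation, and that the only way such a cascade could fire is blocked by either (a) (shared LSI/RSI constants of $Q_1$) or (b) (open-SI constant of $Q_1$ reappearing in a closed SI of $Q_2$). The analysis splits cleanly by which of Cases 2, 3, 4 we are in, but each split must be checked: once this blocking is established the second stage is routine.
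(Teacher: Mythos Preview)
Your proposal is correct and follows essentially the same route as the paper's own proof, which is stated tersely as: use the constraints together with the argument of Lemma~\ref{lem-111} (and the variant in Lemma~\ref{lem-1212}) to show normalization is unnecessary, then invoke Theorem~\ref{thm-main-two}. Your write-up simply unpacks this sketch, separating off Case~1 via Theorem~\ref{them-norm-lsi-rsi1-closed} and spelling out in Cases~2--4 exactly how conditions (a) and (b) block the synthesis of fresh equations when negated open SIs migrate to the lhs; your derivation of the $\neq$-hypothesis of Theorem~\ref{thm-main-two} from condition~(b) is also the intended one.
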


\begin{proof}
Because of the constraints, we can use the same argument as in the proof of Lemma \ref{lem-111} (together with similar argument as in Lemma \ref{lem-1212}) to prove that 
normalization is not necessary. The rest is a consequence of Theorem \ref{thm-main-two}.
%
\end{proof}

\section{Extending the Results to Include Single Mapping Variables}
\label{sec-exte-single-mappinga-var}
In the results about complexity of query containment that we presented, we have often imposed  a condition where
a constant is not allowed to appear in certain positions. When such cases arise, it is interesting to notice that, when  the condition is not satisfied, there are still classes of queries for which the problem of query containment is in NP. 
Such classes can be derived by observing that,
often, two attributes  in a query that may be used in an arithmetic comparison represent quantities that are measured in different units, e.g., weight and height. For these attributes, say $A$ and $B$, if we have an AC $A\geq 5$ and $B> 5$, we can safely assume that these are different constants, since, in a containment mapping, an attribute that represents weight will never map on at attribute that represents 
height. These observations are further elaborated in \cite{Afrati04-1,Afrati19}.

In an orthogonal fashion, there may exist certain variables of the containing query which, in every containment mapping map on the same variable on the contained query. Such variables are the head variables, but there may be others. We call such variables {\em single-mapping} variables and the formal definition is in Definition 	\ref{defn:single-map-vars}.

We start the analysis by focusing on the head variables.
Consider two CQACs, $Q_1=Q_{10}+\beta_1$ and $Q_2=Q_{20}+\beta_2$. 
and
the containment entailment:
\begin{equation} \label{eq:cont-entail}
\beta_2\Rightarrow\mu_1(\beta_1)\vee\cdots\vee\mu_k(\beta_1)
\end{equation}
\noindent
where $\mu_1,\dots, \mu_k$ are all the containment mappings from $Q_{10}$ to $Q_{20}$.
Suppose  $\beta_1$ is such that $\beta_1 = \beta_{11}\wedge\beta_{12}$ where $\beta_{11}$ is the conjunction of ACs that use only distinguished variables and $\beta_{12}$ the conjunction of the rest of the ACs in $\beta_1$.  We observe that, in the containment entailment, each disjunct on the right hand side becomes:
\begin{align*}
\mu_i(\beta_1)=\mu_i(\beta_{11})\wedge\mu_i(\beta_{12}).
\end{align*}
However, $\mu_i(\beta_{11})$ is the same for every $i$. Thus, applying the distributive law, we write the containment entailment:
\begin{align*}
\beta_2\Rightarrow\mu_1(\beta_{11})\wedge[\mu_1(\beta_{12})\vee\cdots\vee\mu_k(\beta_{12})].
\end{align*}
Consequently, the containment entailment is equivalent to the conjunction of  the following two entailments:
\begin{align*}
\beta_{2}&\Rightarrow\mu_1(\beta_{11}). \\
\beta_{2}&\Rightarrow\mu_1(\beta_{12})\vee\cdots\vee\mu_k(\beta_{12}).
\end{align*}
\begin{definition}(single-mapping variables)
	\label{defn:single-map-vars}
	Let $Q_1=Q_{10}+\beta_1$, $Q_2=Q_{20}+\beta_2$ be two CQACs, such that there is at least one containment mapping from $Q_{10}$ to $Q_{20}$. Consider the set ${\cal M}$ of all the containment mappings from $Q_{10}$ to $Q_{20}$.  Each variable $X$ of $Q_1$ which is always mapped on the same variable of $Q_2$ (i.e., for each $\mu\in\CM$,  $\mu(X)$ always equals the same variable) is called a {\em single-mapping variable with respect to $Q_2$}.
\end{definition}

As we mentioned, the head variables of $Q_1$ are single-mapping variables with respect to any query. For another example, consider a predicate $r$ such that  $Q_1$ has the subgoals $g_{11}, g_{12},\dots , g_{1n}$ with predicate $r$ and $Q_2$ has a \textit{single} subgoal $g_2$ with predicate $r$. Since each of the $g_{11}, g_{12},\dots , g_{1n}$ subgoals maps on $g_2$, for every containment mapping from $Q_1$ to $Q_2$, the variables in $g_{11}, g_{12},\dots , g_{1n}$ subgoals are single-mapping variables.


The following theorem gives a polynomial reduction of a CQAC containment problem to one with fewer ACs in the containing query.
\begin{theorem}
\label{thm-reducttion}
Consider CQAC queries $Q_1=Q_{10}+\beta_1$ and $Q_2=Q_{20}+\beta_2$.
 Let ${\cal X}_1$ be a set of single-mapping variables of $Q_1$ with respect to $Q_2$.
Let  $\beta_1 = \beta_{11}\wedge\beta_{12}$ where $\beta_{11}$ is the conjunction of ACs that use only variables  in  ${\cal X}_1$  and $\beta_{12}$ the conjunction of the rest of the ACs in $\beta_1$. 
Let 
$Q'_1=Q_{10}+\beta_{12}$ 
Then the following are equivalent:

(i)    $Q_2 \sqsubseteq Q_1$

(ii) $\beta_{2} \Rightarrow\mu_1(\beta_{11})$ and  $Q_2 \sqsubseteq Q'_1$, where $\mu_1$ is any mapping.

\end{theorem}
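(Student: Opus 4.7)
My plan is to apply Theorem~\ref{thm:cont-CQAC} to both $Q_2 \sqsubseteq Q_1$ and $Q_2 \sqsubseteq Q'_1$ and then use the single-mapping property of the variables in $\mathcal{X}_1$ to factor $\mu_i(\beta_{11})$ out of the containment entailment. This essentially formalises the paragraph just before Definition~\ref{defn:single-map-vars}, generalised from head variables to any set of single-mapping variables.

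First I would observe that since $Q'_{10} = Q_{10}$, both containment tests use exactly the same collection $\mathcal{M} = \{\mu_1,\ldots,\mu_k\}$ of containment mappings from $Q_{10}$ to $Q_{20}$. By Theorem~\ref{thm:cont-CQAC}, $Q_2 \sqsubseteq Q_1$ is equivalent to
\begin{align*}
\beta_2 \;\Rightarrow\; \mu_1(\beta_1) \vee \cdots \vee \mu_k(\beta_1),
\end{align*}
and $Q_2 \sqsubseteq Q'_1$ is equivalent to
\begin{align*}
\beta_2 \;\Rightarrow\; \mu_1(\beta_{12}) \vee \cdots \vee \mu_k(\beta_{12}).
\end{align*}

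Next I would decompose $\mu_i(\beta_1) = \mu_i(\beta_{11}) \wedge \mu_i(\beta_{12})$ for every $i$. Because every variable appearing in $\beta_{11}$ is in $\mathcal{X}_1$, the definition of single-mapping variables guarantees that $\mu_i(X) = \mu_1(X)$ for every such $X$ and every $i$; therefore $\mu_i(\beta_{11}) = \mu_1(\beta_{11})$ for all $i$. Pulling this common conjunct out via the distributive law, the first entailment becomes
\begin{align*}
\beta_2 \;\Rightarrow\; \mu_1(\beta_{11}) \,\wedge\, \bigl[\mu_1(\beta_{12}) \vee \cdots \vee \mu_k(\beta_{12})\bigr],
\end{align*}
which is logically equivalent to the conjunction of $\beta_2 \Rightarrow \mu_1(\beta_{11})$ with $\beta_2 \Rightarrow \mu_1(\beta_{12}) \vee \cdots \vee \mu_k(\beta_{12})$. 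The first conjunct is precisely the first half of (ii), and the second conjunct is precisely the containment entailment for $Q_2 \sqsubseteq Q'_1$, proving the equivalence of (i) and (ii).

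The only minor obstacle is bookkeeping around degenerate situations. If $\mathcal{M} = \emptyset$, then by Theorem~\ref{thm:cont-CQAC} both $Q_2 \sqsubseteq Q_1$ and $Q_2 \sqsubseteq Q'_1$ fail, and the definition of single-mapping variables does not apply, so the statement is vacuous; if $\beta_{11}$ is empty, then $\mu_1(\beta_{11})$ is trivially true and the claim collapses to $Q_2 \sqsubseteq Q_1 \Leftrightarrow Q_2 \sqsubseteq Q'_1$ with $Q_1 = Q'_1$; if $\beta_{12}$ is empty, the second conjunct is vacuous and the argument reduces to observing that every $\mu_i(\beta_1)$ equals $\mu_1(\beta_{11})$. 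Beyond these checks, the argument is a direct application of the distributive law enabled by the single-mapping hypothesis.
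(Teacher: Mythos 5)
Your proof is correct and follows essentially the same route as the paper, which justifies this theorem by the derivation immediately preceding Definition~\ref{defn:single-map-vars}: decompose $\mu_i(\beta_1)=\mu_i(\beta_{11})\wedge\mu_i(\beta_{12})$, use the single-mapping property to replace every $\mu_i(\beta_{11})$ by $\mu_1(\beta_{11})$, and factor it out of the disjunction by distributivity. Your explicit handling of the degenerate cases (no mappings, empty $\beta_{11}$ or $\beta_{12}$) is a small amount of extra care beyond what the paper records, but the substance is identical.
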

%
%
%
We call the second implication in the statement of the above theorem the {\em reduced containment entailment} and we call $Q'_1$ the {\em reduced containing query with respect to $Q_2$}.

A consequence of Theorem \ref{thm-reducttion} is the following theorem:

\begin{theorem}
 Let $\mathcal{Q}_1$ and $\mathcal{Q}_2$ be two  classes of CQAC queries such that the following problem is in $NP$:
 Given queries $Q_1 \in \mathcal{Q}_1$ and $Q_2 \in \mathcal{Q}_2$, 
determine whether $Q_2 \sqsubseteq Q_1$.

Then, the following problem is in $NP$:
Given queries $Q_1$ and $Q_2 \in \mathcal{Q}_2$ where $Q_1$  is the reduced containing query of $Q'_1 \in \mathcal{Q}_1$  with respect to $Q_2$  determine whether $Q_2 \sqsubseteq Q_1$.

%
%


%
\end{theorem}

\section{Maximally Contained Rewritings and Certain Answers. }
\label{sec-mcr-intro}

In the second part of the paper we prove that, for special cases of CQAC queries, we can compute certain answers in polynomial time. We first prove that for CQAC queries and views, a maximally contained rewriting (see Definition \ref{def-mcr146}) computes exactly all certain answers. This is done in 
 Section \ref{sec-6}.     Thus, if we can find an MCR in a query language for which query evaluation has polynomial-time data complexity, then we can compute certain answers in polynomial time.
%
%
 We show in Sections \ref{sec-mcrs-cert-answe}
through  \ref{subsec-mcr-datalogAC} that, for a special case of CQAC queries with semi-interval comparisons, we can build an MCR in the language of Datalog with ACs.

\begin{definition}{\bf(Contained Rewriting)}
Given a query $Q$ and a set of views $\V$, $R$ is a {\em contained rewriting} of $Q$ in terms of $\V$ under the open world assumption if and only if, for each viewset $\I$, the following is true: For each database instance $D$ such that
$\I \subseteq \V(D)$, $R(\I)$ is a subset of or equal to $Q(D)$.
\end{definition}

\def\L{\mathcal{L}}
\def\I{\mathcal{I}}
\def\P{\mathcal{P}}
\def\D{\mathcal{D}}
\def\V{\mathcal{V}}
\def\C{\mathcal{C}}
\def\T{\mathcal{T}}
\def\S{\mathcal{S}}
\def\U{\mathcal{U}}
\def\M{\mathcal{M}}
\def\E{\mathcal{E}}
\def\K{\mathcal{K}}
\def\LAV{\textsf{LAV}}
\def\DeptA{\textsf{Dept}_\textsf{A}}
\def\DeptB{\textsf{Dept}_\textsf{B}}
\def\Staff{\textsf{Staff}}
\def\corecoverc{\textsf{CoreCover$\C$}}

\begin{definition} {\bf(Maximally Contained Rewriting (MCR))}
\label{def-mcr146}
Given a query $Q$ and a set of views $\V$, $R_{MCR}$ is a {\em maximally contained rewriting (MCR)} of $Q$ in terms of $\V$ if $R_{MCR}$ is a contained rewriting of $Q$ in terms of $\V$ and every contained rewriting $R$ of $Q$ in terms of $\V$ is contained in $R_{MCR}$.
\end{definition}

\begin{definition}{\bf(Certain Answers)}
Let $\V$ be a set of views and $\I$ be a view instance.
	Suppose there exists a database instance $D$ such that $\I\subseteq \V(D)$. Then,
	we define the certain answers of ($Q,\I$) with respect to $\V$ under  the Open World Assumption as follows:
		\[\text{certain}(Q,\I)=\bigcap\{Q(D): D \text{ such that } \I\subseteq\V(D)\}\] 

	If there is no database instance $D$ such that $\I\subseteq \V(D)$, we say that the set $\text{certain}(Q,\I)$ is undefined. 
\end{definition}

\begin{theorem}  
Given a query and a set of view definitions, a contained rewriting computes only certain answers.
\end{theorem}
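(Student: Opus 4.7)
The plan is to unfold the two definitions in sequence and observe that the statement reduces to the elementary set-theoretic fact that any set contained in each member of a family is contained in the family's intersection.

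First, I would fix a view instance $\I$ for which $\text{certain}(Q,\I)$ is defined, i.e., for which there exists at least one database $D$ with $\I\subseteq\V(D)$; otherwise there is nothing to prove. Let $R$ be a contained rewriting of $Q$ using $\V$, and let $t\in R(\I)$ be an arbitrary output tuple. I would then invoke the definition of contained rewriting: for every database instance $D$ with $\I\subseteq\V(D)$, we have $R(\I)\subseteq Q(D)$, and in particular $t\in Q(D)$.

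Since $t$ lies in $Q(D)$ for \emph{every} database $D$ satisfying $\I\subseteq\V(D)$, the tuple $t$ belongs to the intersection
\[
\bigcap\{\,Q(D)\ :\ D\text{ such that }\I\subseteq\V(D)\,\},
\]
which by definition is $\text{certain}(Q,\I)$. Hence $R(\I)\subseteq\text{certain}(Q,\I)$, as required.

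There is effectively no obstacle in this proof: the statement is a direct consequence of unfolding the definitions of contained rewriting and certain answers. The only subtlety worth flagging in the writeup is the degenerate case where no witnessing database $D$ exists for the given $\I$; in that case $\text{certain}(Q,\I)$ is undefined and the claim is vacuous, so restricting attention to view instances for which certain answers are defined is sufficient.
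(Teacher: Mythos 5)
Your proof is correct and is essentially the same argument as the paper's: both observe that a set contained in every member of a family $\{Q(D) : \I\subseteq\V(D)\}$ is contained in their intersection, which is exactly $\text{certain}(Q,\I)$. Your element-wise phrasing and explicit handling of the degenerate case where no witnessing database exists are fine refinements but do not change the substance.
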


\begin{proof}
The proof is based on the observation that if a set $A$ is a subset of set $A_1$ and $A$ is a subset of set $A_2$,
then $A$ is a subset of $A_1\cap A_2$. To prove, suppose element $e$ is in $A_1$  and in $A$ but not in $A_2$. 
This is a contradiction because if $e$ is in $A$ and $A$ is a subset of $A_2$, $e$ should be in $A_2$ too. The sets that we refer to, are the sets in $\{Q(D): D \text{ such that } \I\subseteq\V(D)\}$ vs. $R(\I).$
\end{proof}

\subsection{Union of CQAC MCRs compute all certain answers for CQAC queries and views}
\label{sec-6}

%

In this subsection, we will prove that, for  CQAC views, a
maximally contained rewriting $\P$  with respect to union of CQACs (U-CQAC)  \footnote{In the literature, usually, by U-CQAC we define the class of finite unions of CQACs, in this section we assume that it may  also be  infinite.} of a CQAC query $Q$ 
computes the certain answers of $Q$ under the OWA, i.e., we prove the following theorem.
\footnote{This section extends the results in \cite{AfratiK10} for CQs.}

\begin{theorem}\label{owa_mcrs}
	Let $Q$ be a CQAC query, $\V$ a set of CQAC views. Suppose there exists an MCR $\R_{MCR}$ of $Q$ with respect to U-CQAC.
Let $\I$ be a view instance such that the set $\text{certain}(Q,\I)$ is defined. 
	Then, under the open world assumption, $\R_{MCR}$ computes all the certain answers of $Q$ on any view instance $\I$ 
that is: $\R_{MCR}(\I)=\text{certain}(Q,\I)$.
\end{theorem}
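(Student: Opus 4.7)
The equality $\R_{MCR}(\I)=\text{certain}(Q,\I)$ splits into two inclusions. The inclusion $\R_{MCR}(\I)\subseteq \text{certain}(Q,\I)$ is essentially free: $\R_{MCR}$ is a union of contained rewritings by Definition~\ref{def-mcr146}, and the theorem just stated at the end of Section~\ref{sec-mcr-intro} (a contained rewriting computes only certain answers) extends tuple-by-tuple to unions. The whole content of the theorem therefore sits in the opposite inclusion $\text{certain}(Q,\I)\subseteq \R_{MCR}(\I)$, on which I would concentrate.

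\textbf{Main construction.} Fix $\bar{t}\in\text{certain}(Q,\I)$. The strategy is to produce, from $\I$ and $\bar t$, a \emph{single} CQAC rewriting $R_{\bar t}$ of $Q$ using $\V$ that (i) is a contained rewriting and (ii) already outputs $\bar t$ on $\I$. Maximality of $\R_{MCR}$ in U-CQAC then gives $R_{\bar t}\sqsubseteq \R_{MCR}$ and hence $\bar t\in R_{\bar t}(\I)\subseteq \R_{MCR}(\I)$. To build $R_{\bar t}$, first build a canonical witness database $D_{\I}$: for every view tuple $v_j(\bar c_j)\in\I$ freeze the body of the corresponding view definition by instantiating its head variables to $\bar c_j$ and its nondistinguished variables to fresh labeled nulls, obtaining a block of ground atoms and a set of symbolic ACs among the new nulls. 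Working with the canonical-database machinery of Section~\ref{subsec:canonical-dbs} over the constants and nulls appearing in $\I$, $\V$ and $Q$, pick a consistent total ordering (and partition into equal-value classes) of these symbols that satisfies every view AC in every block; such a choice exists because $\text{certain}(Q,\I)$ is assumed defined, i.e.\ at least one $D$ with $\I\subseteq\V(D)$ exists, and any such $D$ induces an admissible ordering. By construction $\I\subseteq \V(D_{\I})$, so $\bar t\in Q(D_{\I})$; this yields a homomorphism $h$ from $body(Q)$ into $D_{\I}$ that respects the ACs of $Q$ and sends $head(Q)$ to $\bar t$.

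\textbf{Extracting the rewriting.} Each image $h(e_i)$ of a relational subgoal of $Q$ lives inside exactly one frozen block, namely the block obtained from a specific view tuple $v_j(\bar c_j)\in\I$; record for each $e_i$ this parent view atom together with which body position of the view produced $h(e_i)$. Reading these choices backward produces a CQAC $R_{\bar t}$ whose relational subgoals are one view atom per parent block (with variables named according to the frozen null/constant they correspond to), whose head equals $\bar t$, and whose ACs are the ACs of $Q$ pulled back along $h$. After the AC-rectification of Section~\ref{subsec-rectified}, the expansion $R_{\bar t}^{exp}$ admits a containment mapping into $body(Q)$ built directly from $h$, so $R_{\bar t}^{exp}\sqsubseteq Q$ and $R_{\bar t}$ is a contained rewriting; moreover $\bar t\in R_{\bar t}(\I)$ by the identity assignment on the view subgoals. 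Combining with the first paragraph closes the proof.

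\textbf{Main obstacle.} The delicate point is the AC layer of $D_{\I}$. For pure CQs the frozen instance is unambiguous and $\I\subseteq \V(D_{\I})$ is automatic; with ACs one must (a) exhibit a total ordering of the nulls under which all view-body ACs simultaneously hold and (b) ensure that the derivation witnessing $\bar t\in Q(D_{\I})$ uses that very ordering. I would handle this by working with the finite family of canonical databases of Section~\ref{subsec:canonical-dbs}: since $\bar t$ is certain, it must appear in $Q(D)$ for every consistent canonical witness, and in particular for the one pinned down by an arbitrary homomorphism $h$ that witnesses $\bar t\in Q(D_{\I})$; the remaining bookkeeping needed to pull $h$ back into a CQAC rewriting and verify that its rectified expansion is contained in $Q$ is then syntactic.
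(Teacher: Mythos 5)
Your overall skeleton coincides with the paper's: both directions reduce to exhibiting, for each $\bar t\in\text{certain}(Q,\I)$, a single contained CQAC rewriting that already returns $\bar t$ on $\I$, and then invoking maximality of $\R_{MCR}$. The gap is in how you build that rewriting. You extract $R_{\bar t}$ from \emph{one} homomorphism $h$ into \emph{one} canonical witness $D_{\I}$ and take as its ACs ``the ACs of $Q$ pulled back along $h$.'' This fails for two reasons. First, $h$ may send a variable of $Q$ occurring in an AC to a labeled null that instantiates a \emph{nondistinguished} variable of a view body; the pulled-back AC then mentions a symbol that is not exported by any view head and cannot be written into $R_{\bar t}$, and dropping it can destroy containment. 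Concretely, take $Q()\,\symif\, p(X),q(Y),X\leq Y$, views $V_1(Z)\,\symif\, p(X),t(X,Z),X\leq Z$ and $V_2(Y)\,\symif\, q(Y)$, and $\I=\{V_1(5),V_2(7)\}$. Here $()$ is certain (the hidden $p$-value is $\leq 5\leq 7$), the witnessing $h$ sends $X$ to the null for the hidden $p$-value, the pullback of $X\leq Y$ is an AC on a nonexported variable, and your recipe yields $R()\,\symif\, V_1(A),V_2(B)$ with no usable AC, which is \emph{not} a contained rewriting (take $A=100$, $B=1$). The correct rewriting $V_1(A),V_2(B),A\leq B$ requires importing the order relation between the two view-tuple constants of $\I$ and chaining it with the view's own AC -- a step your construction never performs. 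Second, and more structurally, your argument only uses $\bar t\in Q(D_{\I})$ for the single ordering of nulls you chose; if it were sound it would show that membership in $Q$ of one canonical witness already implies certainty, which is false for CQACs, since by Theorem~\ref{thm:cont-CQAC} containment of $R_{\bar t}^{exp}$ in $Q$ generally needs the disjunction over all containment mappings, whereas you verify only consistency of one mapping with one ordering. Your closing paragraph names this obstacle but then asserts the bookkeeping is ``syntactic,'' which is exactly the part that is not.

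The paper avoids both problems by not trying to extract a small rewriting at all: for each certain tuple $t_i$ it takes as body the \emph{entire} view instance $\I$ frozen into a query $R_{\I}$, including \emph{all} ACs recording the order type of the constants appearing in $\I$ (these are head constants of view tuples, hence always expressible), and pins the head to $t_i$. Containment of this rewriting is then proved semantically, via the claim that $\text{certain}(Q,\I)$ equals the intersection of $Q_{\downarrow}$ over all canonical databases of the expansion $R_{\I}^{exp}$, transported to an arbitrary $D$ with $\I\subseteq\V(D)$ by a ``database-AC-homomorphism.'' If you want to salvage your approach, you would need (i) to carry the full order type of $\I$'s constants into $R_{\bar t}$ and close the resulting ACs together with the views' ACs, and (ii) to argue containment of $R_{\bar t}^{exp}$ using all canonical databases of the expansion (equivalently, all orderings of the nulls), not the single one that produced $h$ -- at which point you have essentially reconstructed the paper's argument.
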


\begin{proof}
We have already proved that any contained rewriting computes only certain answers. In order to prove the theorem, we willl prove now that, for any tuple $t_0$ of $\I$ such that  $t_0\in \text{certain}(Q,\I)$, there is a contained rewriting $R$ such that $t_0 \in R(\I)$.
%
%

We
consider the conjunction of all the atoms in $\I$ after we have turned 
the constants in $\I$ to variables and we have added ACs among these variables that define the total order of the constants in $\I$. Thus, we turn $\I$ into a Boolean CQAC query $R_{\I}$.  We consider the expansion, $R_{\I}^{exp}$, of $R_{\I}$.\footnote{It is easy to see that this is possible if the set $\text{certain}(Q,\I)$ is defined.} 

Suppose  a tuple $t_0$ in $\I$ is in the certain answers of $Q$ on $\I$. Consider the canonical databases 
 $D_{\I}^{exp-i}, i=1,2,\dots$ of $R_{\I}^{exp}.$ Then $t_0\in  Q(D_{\I}^{exp-i}), i=1,2,\dots.$ 
This remark leads us to the fact that $R^{t_0}_{\I}$  is a contained rewriting, where $R^{t_0}_{\I}$  is the following
 $R^{t_0}_{\I}$ is  $R_{\I}$ with head variables the ones corresponding to $t_0$. It is easy to see
$t_0\in  
R^{t_0}_{\I}(\I)$.
\end{proof}

\section{Contained Rewritings   for CQAC classes. Preliminaries.}
\label{sec-mcr-more}
%
%
%
For CQAC queries and views, we define the expansion of a rewriting:

\begin{definition}
	The \emph{view-expansion},\footnote{In Section \ref{subsec-mcr-datalogAC}, we will need to differentiate between view-expansion and Datalog-expansion which we will define shortly, therefore, when confusion arises we use these prefixes.} $R^{exp}$, 
 of a rewriting $R$ defined in terms of views in $\V$,  is obtained from $R$ as follows. For each subgoal $v_i$ of $R$ and the corresponding view definition $V_i$ in $\V$, if $\mu_i$ is the mapping from the head of $V_i$ to $v_i$ we replace $v_i$ in $R$ with the body of $\mu_i(V_i)$. The non-distinguished variables in each view are replaced with
	fresh variables in $R^{exp}$.
\end{definition}

\begin{theorem}
Given a CQAC $Q$, a set of views CQAC $\V$ and a CQAC rewriting $R$ of $Q$ using $\V$, $R$ is a {\em contained rewriting} if and only if $R^{exp}$  is contained in $Q$.
\end{theorem}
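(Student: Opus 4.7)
The plan is to prove the equivalence by invoking the standard \emph{expansion identity} $R^{exp}(D) = R(\V(D))$ for every database instance $D$, and then handling each direction of the biconditional by choosing the right database or view instance. The expansion identity holds for CQACs because: unfolding a view subgoal $v_i$ of $R$ into its defining body produces a query whose assignments correspond bijectively with pairs (assignment of $R$ into $\V(D)$, assignment of the view definitions producing those view tuples in $\V(D)$). Since the ACs of the view definitions are carried over into $R^{exp}$, both sides enforce the same arithmetic constraints, and the heads coincide. This lemma is routine but is the one technical fact on which everything hinges.

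For the ``$\Leftarrow$'' direction, assume $R^{exp}\sqsubseteq Q$. Take any view instance $\I$ and any database $D$ with $\I\subseteq \V(D)$; I must show $R(\I)\subseteq Q(D)$. First, by monotonicity of CQACs (a CQAC is a positive, join--select--project query plus ACs that depend only on constants and tuple values, never on absences), $R(\I)\subseteq R(\V(D))$. By the expansion identity, $R(\V(D))=R^{exp}(D)$. Finally, the assumed containment gives $R^{exp}(D)\subseteq Q(D)$, and chaining the three inclusions yields $R(\I)\subseteq Q(D)$.

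For the ``$\Rightarrow$'' direction, assume $R$ is a contained rewriting. Fix any database $D$ and take the particular view instance $\I := \V(D)$, which trivially satisfies $\I\subseteq\V(D)$. Then contained-rewriting gives $R(\V(D))\subseteq Q(D)$, and the expansion identity rewrites the left side as $R^{exp}(D)$, so $R^{exp}(D)\subseteq Q(D)$. Since $D$ was arbitrary, $R^{exp}\sqsubseteq Q$.

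The only real obstacle is a clean justification of the expansion identity $R^{exp}(D)=R(\V(D))$ in the presence of ACs; I would prove this by exhibiting the bijection between satisfying assignments of $R^{exp}$ on $D$ and pairs consisting of a satisfying assignment of $R$ on $\V(D)$ together with, for each view subgoal of $R$, a satisfying assignment of the corresponding view definition on $D$ witnessing the matched view tuple. Keeping the non-distinguished variables of different view bodies disjoint in $R^{exp}$ is what makes the forward direction of the bijection well-defined, and the ACs carried over from each view definition are exactly the constraints these witnessing assignments must satisfy in $\V(D)$, so no satisfying assignment is gained or lost.
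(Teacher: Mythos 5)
Your proof is correct and follows essentially the same route as the paper: both arguments rest on the correspondence between satisfying assignments of $R^{exp}$ on $D$ and assignments of $R$ on $\V(D)$ (your ``expansion identity'' $R^{exp}(D)=R(\V(D))$), combined with monotonicity of CQACs for the ``$\Leftarrow$'' direction and the instantiation $\I=\V(D)$ for the ``$\Rightarrow$'' direction. Isolating the identity as an explicit two-sided lemma is a slightly cleaner packaging than the paper's inline homomorphism-transfer arguments, but the underlying ideas are identical.
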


\begin{proof}
The ``if'' direction. 
Let  $R^{exp}\sqsubseteq Q$. We will prove that for every $\I$ such that $\I \subseteq \V(D)$ the following is true: 

Let $\I$ be a viewset and $D$ a database instance such that $\I \subseteq \V(D)$. 
Since CQACs are monotone queries $R(\I) \subseteq R( \V(D))$. 
We have that $R^{exp}(D) \subseteq Q(D)$. 
We compute $R^{exp}(D)$ by using a homomorphism $\mu$  from
 $R^{exp}$ to $D$ that satisfies the ACs in $R^{exp}$. When we produce the  $R^{exp}$ from $R$, we use the view definitions and the head of the view definitions in $R^{exp}$  map via $\mu $ on $D$ and produce tuples in $\V(D)$. These tuples can be used
to produce the homomorphism  from $R$ to $\V(D)$, hence $R(\V(D))$ is a subset of $Q(D)$ and, since $R(\I) \subseteq R( \V(D))$. 
$R(\I) \subseteq Q(D)$. 

For the other direction, suppose $R$ is a contained rewriting, hence we have:  For each database instance $D$ such that
$\I \subseteq \V(D)$, $R(\I)$ is a subset of or equal to $Q(D)$. We  also have that $R(\I )\subseteq R(\V(D))$, because of monotonicity of CQACs. The homomorphism from $R$ to $\V(D)$ that produces $R(\V(D))$ can be used to find a homomorphism from 
$R^{exp}$ to $D$ that produces the same tuples in $R^{exp}(D)$ as in $R(\V(D))$. Hence, $R(\I )\subseteq R^{exp}(D)$.
\end{proof}

\subsection{Rectified Rewriting}
\label{subsec-rectified}
This is a motivating example"
\begin{example}
	\label{ex:export-nondist1}
	Consider query $Q$ and view $V_2$:
	\begin{center}
		\begin{tabular}{l l}
			$Q(A)$     & $~\hbox{\rm :-}~~p(A), A < 4.$\\
			$V_2(Y,Z)$ & $~\hbox{\rm :-}~~p(X), s(Y,Z), Y \leq X, X \leq Z.$\\
		\end{tabular}
	\end{center}
	The following rewriting is a contained rewriting of the query in terms of the view in the language CQAC:
	$$R(Y_1)~~\hbox{\rm :-}~~V_2(Y_1,Z_1),V_2(Y_2,Z_2), Z_1\leq Y_2, Y_1\geq Z_2, Y_1 < 4.$$
	
	\noindent
	Now consider the following contained rewriting:
	
	$R'(X) $ ~\hbox{\rm :-}~ $V_2(X,X), X < 4.$
	
	\noindent
	This rewriting uses only one copy of the view. We can show that $R$ is not contained in $R'$ and that
	$R'$ is not contained in $R$. However, let us consider the rectified version of $R$:
	$$R^{rect}(Y_1)~~\hbox{\rm :-}~~V_2(Y_1,Z_1),V_2(Y_2,Z_2), Z_1\leq Y_2, Y_1\geq Z_2, Y_1 < 4, Y_1\leq Z_1, Y_2\leq Z_2.$$

We can easily prove now that $R'$ and $R^{rect}$ are equivalent.
	
	In fact, we can easily see that we can produce an arbitrarily long  contained rewriting with arbitrarily many relational subgoals.
\end{example}
Thus, we define the notion of rectified rewriting.
\begin{definition}
Consider  a CQAC query and a set of CQAC views.
When we have a  rewriting $R$ the  variables in $R$ also satisfy some ACs that are in the closure of the ACs in the expansion of $R$. We include those ACs in the rewriting $R$ and produce $R'$, which we call the {\em AC-rectified rewriting of $R$}. 
\end{definition}

Thus, the expansions of $R$ and $R'$ are equivalent queries. Hence, we derive the following proposition:

\begin{proposition}
	Given  a set of CQAC views, a  rewriting $R$  and its rectified version $R'$, the following is true: For any
	view instance $\I$ such that there is a database instance $D$  for which  $\I\subseteq \V(D)$, we have that $R(\I)=R'(\I)$.
\end{proposition}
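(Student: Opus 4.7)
The plan is to show the two inclusions $R'(\I)\subseteq R(\I)$ and $R(\I)\subseteq R'(\I)$ separately. The first inclusion is essentially immediate: by construction $R'$ has the same relational subgoals as $R$ and its set of ACs is a superset of the ACs of $R$. Hence any assignment of variables of $R'$ that is a witness for an answer tuple of $R'$ on $\I$ is also a witness for an answer tuple of $R$ on $\I$, so $R'(\I)\subseteq R(\I)$.

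The nontrivial direction is $R(\I)\subseteq R'(\I)$. Take $t\in R(\I)$, witnessed by a homomorphism $h$ from $R$ to $\I$ that satisfies the ACs of $R$ and maps the head of $R$ to $t$. The key step is to lift $h$ to a homomorphism $\hat{h}$ from $R^{exp}$ to $D$ that satisfies the ACs of $R^{exp}$. This is possible precisely because $\I\subseteq\V(D)$: each view subgoal $v_i$ of $R$ is mapped by $h$ to a tuple in $\I\subseteq\V(D)$, so there exists a homomorphism $g_i$ from the body of the $i$-th view definition to $D$ that produces that tuple and satisfies the view's ACs. Using the fresh non-distinguished variables introduced when forming $R^{exp}$, I combine $h$ with all the $g_i$'s (they agree on variables shared through the view heads) to form $\hat{h}$ on the variables of $R^{exp}$. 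By construction $\hat{h}$ satisfies all the ACs of $R^{exp}$, and therefore satisfies every AC in the closure of that set.

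Now recall that the additional ACs of $R'$ (beyond those of $R$) are exactly those ACs in the closure of the AC-set of $R^{exp}$ that mention only variables of $R$. Since $\hat{h}$ satisfies the whole closure, and the restriction of $\hat{h}$ to the variables of $R$ coincides with $h$, it follows that $h$ satisfies the added ACs of $R'$ as well. Therefore $h$ is a witness for $t$ as an answer of $R'$ on $\I$, giving $t\in R'(\I)$.

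The only delicate point is the construction of $\hat{h}$: one must verify that the fresh non-distinguished variables introduced during expansion can be consistently assigned using the $g_i$'s without conflict, and that the ACs internal to each view definition are indeed satisfied by the corresponding $g_i$. Both follow from the definitions (the view-instance tuples in $\I$ are produced by valid assignments on $D$), so this obstacle is more bookkeeping than mathematical difficulty. Once $\hat{h}$ is in place, the proposition follows from the fact that logical implication is preserved under any assignment satisfying the premise set of ACs.
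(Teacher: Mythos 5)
Your proof is correct and follows the same underlying idea as the paper, which justifies the proposition in a single line by observing that the expansions of $R$ and $R'$ are equivalent queries. You go further by spelling out the step the paper leaves implicit — lifting the witness homomorphism from $R$ into $\I$ to a homomorphism from $R^{exp}$ into $D$ using $\I\subseteq\V(D)$, so that the closure ACs are seen to hold on the variables of $R$ — and you correctly identify that this is exactly where the hypothesis that $\I$ is realizable by some database $D$ is needed.
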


\begin{definition}
	We say that a  rewriting $R$ is {\em AC-contained} in a   rewriting $R_1$  if the AC-rectified rewriting $R'$ of $R$ is contained in $R_1$ as queries.
\end{definition}
From hereon,
when we refer to a  rewriting, we mean the AC-rectified version of it and when we say that a   rewriting is contained in another  rewriting we mean that it is AC-contained.

\section{Computing Certain Answers in PTIME for RSI1 Queries and CQAC views}
\label{sec-mcrs-cert-answe}
%


In the rest of the paper, we will finish proving the main result of the second part of this paper, which is  the following theorem:

\begin{theorem}
\label{thm-mainsec61certain}
	Given a query $Q$ which is CQAC-RSI1 (RSI1, for short) with closed ACs  and views $\V$ which are
	CQACs,  we can find all certain answers of $Q$ using $\V$ on a given view instance $\I$ in time polynomial on the size of $\I$.
\end{theorem}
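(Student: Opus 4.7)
The plan is to reduce the problem to evaluating a Datalog program with arithmetic comparisons on the view instance. First, I would construct a maximally contained rewriting $\P$ of $Q$ using $\V$ in the language of Datalog with ACs. The construction exploits the restricted form of $Q$ (closed semi-interval ACs of which only one is an RSI): for this class, the containment check $R^{exp} \sqsubseteq Q$ required to certify that a candidate rewriting $R$ is contained admits the specialised test developed via Theorem \ref{thm:main123}, whereby the containing query can be encoded as a Datalog program (without ACs) while the contained query is transformed into an ordinary CQ. This transformation is the technical device that lets the potentially unbounded case analysis on how closed LSI atoms of $R^{exp}$ align with the unique RSI of $Q$ be simulated by finitely many Datalog rules firing over the bounded set of constants mentioned in $Q$ and $\V$.

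Second, I would invoke Theorem \ref{owa_mcrs} to conclude that an MCR in union of CQACs computes exactly the certain answers. A Datalog program with ACs is semantically equivalent to the (possibly infinite) union of its Datalog-expansions, each of which is a CQAC; hence any MCR of $Q$ using $\V$ in the language of Datalog with ACs is also an MCR in U-CQAC from the standpoint of the answer function on view instances. Consequently, for every view instance $\I$ on which $\mathrm{certain}(Q,\I)$ is defined, we obtain $\P(\I) = \mathrm{certain}(Q,\I)$.

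Third, I would argue the complexity. Datalog with arithmetic comparisons has polynomial data complexity: bottom-up evaluation saturates after producing a number of atoms polynomial in $|\I|$, because every derived tuple consists of constants occurring in $\I$ or in the fixed program $\P$, and each AC subgoal can be evaluated in constant time per candidate assignment. Since $\P$ depends only on $Q$ and $\V$ and is built once, the overall cost of returning $\P(\I)$ is polynomial in $|\I|$, which is what the theorem asserts.

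The main obstacle is the first step: producing the Datalog-with-ACs MCR when the views are arbitrary CQACs. The difficulty is that view-expansions $R^{exp}$ can carry ACs of every shape on the contained side (open SIs, inequalities, variable-to-variable comparisons), while $Q$ is restricted to closed semi-interval ACs with a single RSI. The specialised containment test of Theorem \ref{thm:main123} is what neutralises this asymmetry, and I expect to lean on the single-RSI structure (as in Proposition \ref{trick-pro}) to decompose each containment entailment into a bounded disjunction whose disjuncts can each be witnessed by the firing of a Datalog rule; packaging these rules, together with the base rules that recover $\V$-atoms from $\I$, yields the MCR $\P$ whose evaluation gives the certain answers in polynomial time.
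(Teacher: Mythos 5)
Your proposal follows essentially the same route as the paper: the paper derives this theorem as a direct consequence of Theorem \ref{owa_mcrs} (an MCR in the language of possibly infinite unions of CQACs computes exactly the certain answers) together with Theorem \ref{thm-mainsec61MCR}, whose proof builds the Datalog-with-ACs MCR via the Datalog/CQ transformations of Theorem \ref{thm:main123} and the inverse-rule algorithm, and then appeals to the polynomial data complexity of Datalog with ACs. Your decomposition, including the observation that a Datalog$^{AC}$ program is an infinite union of CQACs and the reliance on Proposition \ref{trick-pro} for the single-RSI structure, matches the paper's argument.
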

The above theorem is a straightforward consequence of  the main result of the previous  section and Theorem 	\ref{thm-mainsec61MCR} which says thast we can find an MCR in the language of Datalog with arithmetic comparisons.

Our method is as follows:
We first establish a containment test which transforms the containing query into a Datalog program and the contained query into a CQ and show that containment is equivalent to testing containment among the transformed queries (Theorem~\ref{thm:main123}).
An algorithm already exists that finds an MCR of a Datalog query in terms of  CQ views. We show that we can transform this MCR to  an MCR  of the original query in terms of the original views. This MCR is in the language of Datalog with ACs.

%
%
%
%
%
%
%

The following proposition together with Proposition \ref{trick-pro} is the main reason the transformation works:

\begin{proposition}
\label{pro-lsi-rsi-any-closed}
Consider the following containment implication
\begin{equation}
a_1~\wedge~ a_2 ~\wedge~ \cdots ~\wedge~ a_n \Rightarrow b_1\vee b_2 \vee \cdots \vee b_m \\ 
\end{equation}
where the $a_i$s are any AC and the  $b_i$s are closed SIs. When this implication is in minimal form it has at most two ACs on the rhs.
\end{proposition}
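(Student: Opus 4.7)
The plan is to show separately that the right-hand side can contain at most one CLSI (of type $var\leq const$) and at most one CRSI (of type $var\geq const$); the bound $m\leq 2$ then follows by addition. I partition the closed SIs on the RHS into CLSIs $b^L_1,\ldots,b^L_p$ and CRSIs $b^R_1,\ldots,b^R_q$ with $p+q=m$, and aim to show $p\leq 1$ and $q\leq 1$.

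To bound $p$, I apply Lemma \ref{lem-orac} iteratively to move every CRSI disjunct across the implication, producing the true implication
$$a_1 \wedge \cdots \wedge a_n \wedge \neg b^R_1 \wedge \cdots \wedge \neg b^R_q \;\Rightarrow\; b^L_1 \vee \cdots \vee b^L_p.$$
Each $\neg b^R_j$ is an OLSI, so the augmented LHS is still a conjunction of arbitrary ACs while the new RHS consists only of CLSIs. Reducing this implication to minimal form by dropping redundant disjuncts from its RHS and invoking Lemma \ref{lemma-direct}, clause (1), the resulting minimal RHS must consist of a single CLSI $b^L_{k_0}$. Moving the $\neg b^R_j$'s back to the right-hand side then gives
$$a_1 \wedge \cdots \wedge a_n \;\Rightarrow\; b^L_{k_0} \vee b^R_1 \vee \cdots \vee b^R_q,$$
a true implication that uses only $1+q$ of the original disjuncts. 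Since the original implication is in minimal form, no strict subset of its $p+q$ disjuncts suffices, so $p+q \leq 1+q$ and hence $p\leq 1$.

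An identical argument with the roles of CLSIs and CRSIs interchanged (using the symmetric version of Lemma \ref{lemma-direct}, clause (1), valid by the symmetry remark in the paper) yields $q\leq 1$. Combining the two bounds gives $m=p+q\leq 2$, as claimed.

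The engine of the proof is Lemma \ref{lemma-direct}, clause (1), which already handles the all-CLSI (respectively all-CRSI) case; the rest is bookkeeping with Lemma \ref{lem-orac}. The main step to verify carefully is that the OLSIs introduced into the LHS by negating the CRSIs do not take us outside the scope of Lemma \ref{lemma-direct}, clause (1): since that lemma imposes no restriction on the AC-types allowed in the LHS (it only constrains the RHS to be CLSIs), the move is harmless and the reduction goes through cleanly.
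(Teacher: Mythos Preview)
Your argument is correct and is essentially the same idea the paper has in mind: the paper's one-line proof cites Lemma~\ref{lemm-3clauses}(2), the chain lemma, which is exactly the engine behind Lemma~\ref{lemma-direct}(1) that you invoke. You simply package the argument one level higher, splitting the closed SIs into CLSIs and CRSIs, pushing one family to the left via Lemma~\ref{lem-orac}, and then quoting Lemma~\ref{lemma-direct}(1) (and its RSI-symmetric twin) to bound each family by one; the paper would instead unroll Lemma~\ref{lemma-direct}(1) and argue directly from the chain that a negated CLSI (an ORSI) cannot be needed in deriving a CLSI, and symmetrically.
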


\begin{proof}
According to  Lemma \ref{lemm-3clauses}(4), when we prove a closed AC is in the closure of $F$ using a chain, then there is at most one constant. Since this AC is an SI, this constant is in the end of the chain, hence only one SI appears in this chain.
\end{proof}
The following containment implication
shows that  Proposition \ref{pro-lsi-rsi-any-closed} is not true if we use open SIs on the rhs. 
\vspace*{-.3cm}
%
\vspace*{.2cm}
	{\footnotesize$$X\neq Y  \Rightarrow   ~~X>7 ~~\vee ~~Y>7 ~~\vee ~~X<7~~~\vee ~~Y<7.$$}

\subsection{Preliminaries on Datalog Queries}

We include definitions and known results about Datalog. 
A \textit{Datalog query} (a.k.a.  Datalog program) is a finite set of Datalog rules, where a \textit{rule} is a CQ whose predicates in the body could either refer to a base relation (\textit{extensional} database predicates, \textit{EDBs}), or to a head of a rule in the query (\textit{intensional} database predicates, \textit{IDBs}). Furthermore, there is a designated predicate, which is called \textit{query predicate}, and returns the result of the query.
The atom whose predicate is an EDB (resp. IDB) is called \textit{base atom} (resp. \textit{derived atom}). 

During the computation, we use an {\em instantiated rule}, which  is a rule where all the variables have been replaced by constants.  We say that a rule is {\em fired }
if there is an instantiation of this rule where all the atoms in the body of the rule are are already computed in previous rounds or is in the given database instance.

The evaluation of a Datalog query on a database instance is performed by firing the rules  until no more facts (i.e., ground head atoms) are added to the set of the derived atoms. The answer of a Datalog query on a database is the set of facts derived during the computation for the query predicate. 
A $Datalog^{AC}$  query allows in each rule also arithmetic comparisons (ACs) as subgoals, i.e., each rule  is a CQAC. The evaluation process remains the same, only now, the AC subgoals should be satisfied too. 

We say that we \emph{unfold a rule} if we replace an  IDB subgoal with the body of another rule that has this IDB predicate in its head.
A \textit{partial expansion} of a Datalog query is a conjunctive query that results from unfolding the rules one or more times; the partial expansion may contain IDB predicates. A \textit{Datalog-expansion} (or, simply {\em expansion}, if confusion does not arise) of a Datalog query is a partial expansion that contains only EDB predicates. Considering all the (infinitely many) expansions of a Datalog query we can prove that a Datalog query is equivalent to an infinite union of conjunctive queries.
An expansion of a $Datalog^{AC}$  query  is defined the same way as an expansion of a Datalog query, only now we carry the ACs in the body of each expansion we produce. Thus, in an analog way, a $Datalog^{AC}$  query  is equivalent to an infinite union of CQACs.

\section{A new Containment Test for RSI1 containing query and CQAC contained query}
\label{subsec-construct-Datalog}
Without loss of generality, we restrict attention to Boolean queries.
First, we describe the construction of a Datalog query from a given RSI1 query $Q_1$.  
We introduce the predicates that we use:\\
\noindent
{\bf  Predicate Definition  List:}
\vspace*{-.20cm}
\squishlist

\item [1.] One EDB predicate for each relation in $Q_1$ of the same arity. 

\item [2.] One binary EDB predicate, called $U$.

\item [3.] Four unary IDB predicates for  each constant, $c$, that appears in an AC  in $Q_1$, we call them  $I_{\theta
		c}$  and  $J_{\theta c}$, where $\theta$ is either $\leq$ or $\geq$.
\item [4.]	A Boolean  IDB predicate called  $Q_1^{Datalog}$, which is the query predicate.
\squishend
\vspace*{.20cm}

The recursive rules of the Datalog query depend only on the containing query, $Q_1$. The non-recursive  rules  take into account a finite set of {\em relevant semi-interval (SI) ACs}  and we also call them {\em dependent rules} or {\em link rules}. The relevant 
SI ACs are typically ACs that are in the closure of ACs in the contained query, but not necessarily\footnote{the fact that there are other options will be clear later when we discuss about constructing rewritings }.
Moreover, for convenience of reference, we  divide the recursive rules into three kinds:  {\em mapping rules, coupling rules,} and a single {\em query rule}, which is  fired only once.

We describe the construction of the recursive rules of the Datalog query:

{\bf Construction of the recursive rules:}
\vspace*{-.20cm}
\squishlist
	\item [1.]
	The {\em query rule} copies into its body all the relational subgoals of $Q_1$, and
	replaces each AC subgoal of $Q_1$, $X\theta c$, with the atom $I_{\theta c}(X)$.
	\item [2.]
	We
	construct one {\em mapping rule}   for each $I_{\theta c}(X)$ in the query rule, i.e., in total $M$ mapping rules, where $M$ is the number of ACs in $Q_1$. 
The mapping rule for $I_{\theta c}(X)$ has the same body as the body in the query rule except that the $I_{\theta c}(X)$ is deleted. Its head of this mapping rule is $J_{\theta c}(X)$.
	\item [3.]
	For every pair of constants $c_1 \leq  c_2$ used in $Q_1$, we
construct three {\em coupling rules}.

	Two coupling rules of the {\em first kind}: 
	\small{\begin{center}
		\begin{tabular}{l}
			$I_{\leq c_2}(X)~\symif~J_{\geq c_1}(X)$ \\
			$I_{\geq c_1}(X)~\symif~J_{\leq c_2}(X)$
		\end{tabular}
	\end{center}}

One coupling rule of the {\em second kind}:
		\small{$$I_{\leq c_2}(X)\symif~J_{\geq c_1}(Y),U(X,Y).$$}
%
%
	
\squishend

\vspace*{-.5cm}
{\bf Construction of the dependent or link rules}
\vspace*{-.20cm}
\squishlist
	\item
For each relevant SI AC,  $X\theta c$, we introduce a unary base predicates,  $U_{\theta c}$ and we add the following link rule:
$$ I_{\theta c_1}(X):-U_{\theta c}(X).$$
where $c_1$ is a constant in $Q_1$, for which $X\theta c\Rightarrow X\theta c_1$
\squishend

%
%



\begin{example}
	\label{ex-running-dat}
The following query $Q_1$ is an  RSI1 query:
	\begin{center}
		\begin{tabular}{lll}
			$Q_1({})$ & $\symif$ &$ e(X,Y),e(Y,Z),X\geq 5,Z\leq 8.$ \\
		\end{tabular}
\end{center}
	For the query $Q_1$, the construction we described yields the following recursive rules of the Datalog
	query $Q_1^{Datalog}$:
	\small{
	\begin{center}
		\begin{tabular}{l l l}
			$Q_1^{Datalog}()$ $\symif$       & $e(X,Y),e(Y,Z),I_{\geq 5}(X),$& \multirow{2}{*}{(query rule)}\\
			&$I_{\leq 8}(Z).$& \\
			$J_{\leq 8}(Z)$ $\symif$ & $e(X,Y),e(Y,Z),I_{\geq 5}(X).$           & (mapping rule)\\
			$J_{\geq 5}(X)$ $\symif$ & $e(X,Y),e(Y,Z),I_{\leq 8}(Z).$           & (mapping rule)\\
			$I_{\leq 8}(X)$ $\symif$ & $J_{\geq 5}(X).$                         & (coupling rule 1)\\
			$I_{\geq 5}(X)$ $\symif$ & $J_{\leq 8}(X).$                         & (coupling rule 1)\\
			$I_{\leq 8}(X)$ $\symif$ &$J_{\geq 5}(Y), U(X,Y)$ & (coupling rule 2)\\
			$I_{\geq 5}(X)$ $\symif$ &$J_{\leq 8}(Y), U(Y,X)$ & (coupling rule 2)\\
		\end{tabular}
	\end{center}}
	Intuitively, a coupling rule denotes that a formula $AC_1 \vee AC_2$ ( for two
	SI comparisons $AC_1=X\theta_1 c_1$ and $AC_2=Y\theta_2 c_2$) is either true (coupling rule 1) or it is implied by $X\leq Y$ (which is encoded by  the predicate $U(X,Y)$) (coupling rule 2). Thus, the first coupling rule in the above query says that
	$ X\leq 8  \vee X\geq 5$ is true and the second coupling rule says the same but refering to different
	$I$ and $J$-atoms. Moreover, the last coupling rule says that $X\leq Y\Rightarrow X\leq 8  \vee Y\geq 5$.
\end{example}

\noindent
{\bf Construction of CQ for Contained Query}
\label{subsec-contained-trans}
%

We now describe the construction of the contained query turned into a CQ.

We introduce new unary
EDBs, specifically two of them, by the names $U_{\geq c}$ and $U_{\leq c}$, for each constant $c$ in $Q_2$.
Let us now construct the CQ $Q_2^{CQ}$ from $Q_2$.
We initially copy the regular subgoals of $Q_2$,
and for each SI  $X_i\theta c_i$ in the closure of $\beta_2$ we add a
unary predicate subgoal $U_{\theta c_i}(X_i)$. Then, for each AC $X\leq Y$ in the closure of ACs in $Q_2$, we add the unary subgoal $U(X,Y)$ in the body of the rule.

\begin{example}
\label{ex-cqcq}
 Considering the CQAC $Q_2$ with the following definition:
\small{
	\begin{center}
	\begin{tabular}{ll}
		$Q_2()\symif$ & $e(A,B),e(B,C),e(C,D),e(D,E), A\geq 6, E\leq 7.$ \\
	\end{tabular}
\end{center}}

we construct the $Q_2^{CQ}$ whose definition is:
\small{
	\begin{center}
	\begin{tabular}{ll}
		$Q_2^{CQ}()\symif$ & $e(A,B),e(B,C),e(C,D),e(D,E),U_{\geq 6}(A),U_{\leq 7}(E).$ \\
	\end{tabular}
\end{center}}
\end{example}
Thus the dependent rules for our running example, query $Q_1$, and the above contained query $Q_2$ are:

	\begin{center}
	\begin{tabular}{l l l}
		$I_{\geq 5}(X)$ & $\symif~U_{\geq 6}(X).$ & (link rule)\\
		$I_{\leq 8}(X)$ & $\symif~U_{\leq 7}(X).$ & (link rule)\\
	\end{tabular}
\end{center}


Now, we have completed the description of the construction of both $Q_1^{Datalog}$ from $Q_1$ and
$Q_2^{CQ}$ from $Q_2$. We go back to our examples and put all together in the following example:

\begin{example}
	\label{ex-full}
	Our contained query is the one in Example  \ref{ex-cqcq}.
	Our containing query is the one  in Example \ref{ex-running-dat}. The transformation of the containing query is shown in Example 	\ref{ex-running-dat}.  The transformation of the contained query is shown in Example \ref{ex-cqcq}. To complete the Datalog query, we add
	the following link rules:

	
	\begin{center}
		\begin{tabular}{l l l}
			$I_{\geq 5}(X)$ & $\symif~U_{\geq 6}(X).$             &             (link rule)\\
			$I_{\leq 8}(X)$ & $\symif~U_{\leq 7}(X).$              &            (link rule)\\
		\end{tabular}
	\end{center}
	 One rule links the constant 6 from the ACs of  $Q_2$ to the constant 5 from the ACs of  $Q_1$. The other link rule links constants 7 and 8 from queries $Q_1$ and $Q_2$, respectively.
\end{example}

The constructions of the Datalog query and the CQ presented in Section \ref{subsec-construct-Datalog} 
are important because of the following theorem.
%
%
%
%
%

%

\begin{theorem}
	\label{thm:main123}
	Consider two conjunctive queries with arithmetic comparisons, $Q_1$ and $Q_2$  such that $Q_1$ is an RSI1 query with closed ACs.
	Let  $Q_1^{Datalog}$ be the  transformed Datalog query of $Q_ 1$. Let
	$Q_2^{CQ}$ be the  transformed CQ query of $Q_2$.
	Then,  $Q_2$ is contained in $Q_1$   if and only if   $Q_2^{CQ}$ is contained in $Q_1^{Datalog}$.
\end{theorem}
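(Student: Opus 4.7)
The plan is to prove both directions by putting derivations of $Q_1^{Datalog}()$ on a canonical-style database built from $Q_2^{CQ}$ in bijection with the containment entailment $\beta_2 \Rightarrow \bigvee_i \mu_i(\beta_1)$ of Theorem~\ref{thm:cont-CQAC}. The enabling fact is that, because $Q_1$ is RSI1 with closed ACs, every $\mu_i(\beta_1)$ is a conjunction of CLSIs together with a single CRSI; by Proposition~\ref{pro-lsi-rsi-any-closed}, after distributing the entailment, each resulting minimal containment implication has at most two ACs on its right-hand side, and by Proposition~\ref{trick-pro} one of the two is either implied directly by $\beta_2$ or else fails by exactly one AC. This two-AC dichotomy is precisely what the three kinds of coupling rules (together with the link rules) are designed to track: coupling rule 1 captures the situation when the missing CLSI and the missing CRSI of two mappings coincide at a variable, while coupling rule 2 captures the case when the critical ingredient is a relation $\mu_j(X)\leq\mu_{j'}(Y)$ recorded in $D$ as $U(\cdot,\cdot)$.

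For the direction $Q_2 \sqsubseteq Q_1 \Rightarrow Q_2^{CQ} \sqsubseteq Q_1^{Datalog}$ I would evaluate $Q_1^{Datalog}$ on the canonical database $D$ of $Q_2^{CQ}$ and build a derivation tree top-down, starting from the query rule with some mapping $\mu$. Each $I$-atom in the body of the query rule needs to be produced: it is either provided directly by a link rule from a $U_{\theta c}$ fact (when the needed AC is in the closure of $\beta_2$), or derived by first firing the mapping rule for a second mapping $\mu'$ (which omits one AC and produces the corresponding $J$-atom) and then converting it via coupling rule 1 or 2 into the needed $I$-atom. The structural choice of which rule to use is dictated by the case analysis furnished by Propositions~\ref{pro-lsi-rsi-any-closed} and \ref{trick-pro} applied to the particular minimal containment implication being discharged. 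Exhausting all the minimal implications arising from distributing the containment entailment produces all the $I$-atoms required at the root, yielding $Q_1^{Datalog}()$.

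For the converse, take a derivation tree $T$ for $Q_1^{Datalog}()$ on $D$ and read off inductively the collection of containment mappings it uses: the root query rule gives a ``primary'' mapping $\mu$, each mapping rule fired inside $T$ yields a ``backup'' mapping along with one AC that must be discharged differently, and each coupling rule~2 firing points to a $U$-fact that encodes an inequality between images, which in turn comes from an AC in the closure of $\beta_2$. I would then prove by induction on $T$ that the extracted set of mappings makes the containment entailment true, so that by Theorem~\ref{thm:cont-CQAC} we have $Q_2 \sqsubseteq Q_1$. The link rules provide exactly the bridge that lets one translate $U_{\theta c}$-facts in $D$ (which originate from SIs in the closure of $\beta_2$) back into logical consequences $\mu(X)\theta c_1$ that appear in $\mu(\beta_1)$.

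The main obstacle, in my view, is to cleanly manage the nesting of mapping and coupling rule invocations: a single $I$-atom at the root may be justified by a chain that alternates coupling rules 1 and 2, and one must show that each such chain corresponds to a consistent set of ACs implied by $\beta_2$ and to a valid minimal containment implication, neither more nor less. In particular, verifying that the constants in the $I$/$J$ atoms line up so that the premises $c_1 \leq c_2$ of the coupling rules match the real implications over $\beta_1\cup\beta_2$, and that the chosen set of ``relevant SIs'' for the link rules is neither too coarse nor too fine, requires the careful bookkeeping lemmas that Section~\ref{subsec-simplefacts} is set up to supply. Once those small structural facts are in hand, the two inductions described above go through uniformly.
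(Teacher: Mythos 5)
Your plan follows essentially the same route as the paper's proof: one direction is an induction on the Datalog derivation tree over the canonical database of $Q_2^{CQ}$, extracting from each mapping-rule firing a disjunct of the containment entailment, while the other direction peels mappings off the entailment using Proposition~\ref{trick-pro} (with Proposition~\ref{pro-lsi-rsi-any-closed} justifying the two-AC coupling rules) and fires the corresponding mapping, coupling, and link rules. The only piece still to be pinned down is the exact inductive invariant --- the paper tracks $\beta_2\Rightarrow\mu_1(\beta_1)\vee\cdots\vee\mu_k(\beta_1)\vee\neg\mu_h(e)$ for a computed $J$-fact associated with AC $e$, and a dual statement for the converse --- and your ``top-down'' construction is safer organized bottom-up, since the repeated application of Proposition~\ref{trick-pro} is what guarantees that the recursion through nested mapping/coupling invocations terminates.
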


The proof of Theorem \ref{thm:main123} is in  \ref{prf:thm-main123p}. In the next section, we begin to discuss some of the  technicalities involved, as an introduction to the  proof.
 \ref{subsec-examples} offers  intuition on how the transformations work.

\section{Preliminary  Results and Intuition on the Proof of Theorem \ref{thm:main123}  
}
\label{subsec-simplefacts}

%

%

We observe that the Datalog program we construct from a RSI1 query has only unary IDB predicates and we conveniently refer to them using the symbol we used to name them as $I$ predicates and $J$ predicates. $J$ predicates appear in the head of mapping rules and in the body of coupling rules and $I$ predicates appear in the body of mapping rules and in the head of coupling rules. We conveniently say that they produce $I$ facts and $J$ facts.

In the proof of Theorem \ref{thm:main123}, we will apply the Datalog query  $Q_1^{Datalog}$ on the canonical database of the CQ query $Q_2^{CQ}$ constructed from the contained query $Q_2$.
This canonical database uses constants (different from the constants in the ACs) that correspond one-to-one to variables of the query $Q_2$.

\squishlist
\item
We refer to the fact 
 $I_{\theta
		c}(x)$ (atom  $J_{\theta c}(x)$, respectively) as the {\em associated} fact of SI AC $X\theta c$ and vice versa.

\squishend
We do the following observations about the result of firing a recursive rule (i.e., either a coupling rule or a  mapping rule):
(All the $\theta _i$s represent either $\leq$ or $\geq$ and the $c_i$s are constants from the ACs of the queries.)

\squishlist
	\item {\em Firing coupling rules.} We have two kinds of coupling rules. Consider a coupling rule of the first kind which is of the form: 
$$I_{\theta_1 c_1}(X) \symif~J_{\theta c_2}(X).$$
	When this rule is fired, its variable $X$ is instantiated to a constant, $y$, in the canonical database, $D$, of $Q_{20}$. The constant $y$ corresponds to the variable $Y$ of $Q_2$ by convention. Then the following is true by construction: 
$Y\theta_1 c_1 \vee Y\theta_2 c_2$, and, hence, the following is true:
	$\beta_2\Rightarrow Y\theta_1 c_1 \vee Y\theta_2 c_2.$
	Now consider the second kind of coupling rule, which is of the form:
	$$I_{\theta_1 c_1}(X) \symif~J_{\theta c_2}(Y),U(X,Y).$$
	By construction of the rule, the EDB $U(X,Y)$    is mapped in $D$ to two constants/variables, $W,Z$, 
	such that there in $Q_2$ an AC which is $W\leq Z$.       Thus, by construction of the rule, the following is true again:
	$\beta_2\Rightarrow W\theta_1 c_1 \vee Z\theta_2 c_2.$
	\item {\em Firing both mapping and coupling rules.}
\begin{example}
For a first example, suppose only the query rule and link rules are needed to prove containment of $Q_2^{CQ}$ in $Q_1^{Datalog}.$
Suppose we fire a link rule to compute the fact $I_{\theta c}(x)$. This yields that $\beta _2 \Rightarrow X\theta c$ (by construction of the rule). Suppose that  after applying some link rules, we are able to fire the query rule using a mapping $\mu_1$. I.e., 
for each 
$\mu _{1}(e^{\beta_1}_{i}) $ (where $e^{\beta_1}_{i}$ is such that $\beta_ 1=e^{\beta_1}_{1} \wedge e^{\beta_1}_{2}\wedge \ldots$), we have $\beta_2  \Rightarrow \mu _{1}(e^{\beta_1}_{i})$. Consequently the following is true: $\beta_2  \Rightarrow \mu_1(\beta_1). $ This is a containment entailment which reassures that $Q_2\sqsubseteq Q_1$.

\end{example}

In the following,  fact
$I_{\theta c}(x)$ will be conveniently referred to as fact  $I(x) $  associated with  $Y\theta c$, 
Thus, we have proven the following lemma:

\begin{lemma}
\label{lem-coupling-implication}
 Suppose a mapping rule is fired using mapping $\mu$ and using the $I$ facts $I(x_1),\ldots,I(x_M)$, where each
$I(x_i)$ is associated with AC $\mu(e^{\beta_1}_{i})$ (without loss of generality). 
Suppose $I(x_i)$ is computed either via a link rule or via a coupling rule which used $J$ fact $J(x_{k_i})$. Suppose     fact $J(x_{k_i})$ is computed via a mapping rule  $\mu_{m_i}$ and hence $J(x_{k_i})$ is associated with an
AC $\mu_{m_i}(e^{\beta_1}_{l_i})$. Then the following is true: $\beta_2 \Rightarrow \mu(e^{\beta_1}_{i}) \vee  \mu_{m_i}(e^{\beta_1}_{l_i})$.

\end{lemma}

\squishend

\section{Finding MCR for CQAC-RSI1 Query and CQAC Views}
\label{subsec-mcr-datalogAC}

In this section, we show  that for a RSI1 query and CQAC views, we can find an MCR in the language of (possibly infinite)  union of CQACs.
We will show that this MCR is expressed in
Datalog with ACs. I.e., we prove:

\begin{theorem}
	\label{thm-mainsec61MCR}
	Given a query $Q$ which is CQAC-RSI1 (RSI1 for short) with closed ACs and CQAC views $\V$, we can find
an MCR of $Q$ using $\V$ in the language of Datalog with ACs.
\end{theorem}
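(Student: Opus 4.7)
The plan is to reduce the problem to the well-studied problem of finding an MCR of a Datalog query with respect to CQ views, for which \cite{Duschka97-3} provides an algorithm yielding a Datalog MCR, and then translate the result back to the CQAC world using Theorem \ref{thm:main123}. Using the transformations of Section \ref{subsec-construct-Datalog}, I would build $Q^{Datalog}$ from $Q$, choosing as ``relevant SI ACs'' for the link rules every AC of the form $X\theta c$ where $c$ is a constant that occurs in $Q$ or in any view in $\V$; and for each view $V=V_0+\beta_V$ in $\V$, build $V^{CQ}$ exactly as $Q_2^{CQ}$ is built from $Q_2$, namely by replacing each AC in the closure of $\beta_V$ with the corresponding $U_{\theta c}$- or $U$-subgoal. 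The predicates $U$ and $U_{\theta c}$ are treated as additional built-in base predicates whose extension on any database is the $\leq$-relation and the respective $\theta c$-relation on the active domain.

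\textbf{Algorithm.} Apply a Datalog MCR-finding algorithm (e.g.\ \cite{Duschka97-3}) to $Q^{Datalog}$ over the view set $\{V^{CQ}:V\in\V\}$ augmented by $U$ and the $U_{\theta c}$ viewed as trivial views, obtaining a Datalog program $R^{Datalog}$. Then back-translate $R^{Datalog}$ into a Datalog-with-ACs rewriting $R$ of $Q$ using $\V$, by replacing each $V^{CQ}$-subgoal with the corresponding $V$-subgoal, each $U(X,Y)$-subgoal with the AC $X\leq Y$, and each $U_{\theta c}(X)$-subgoal with the AC $X\theta c$.

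\textbf{Correctness.} Both soundness and maximality would follow from Theorem \ref{thm:main123} applied at the level of Datalog-expansions. For soundness, every Datalog-expansion $E$ of $R$ corresponds under the back-translation to a Datalog-expansion $E^{Datalog}$ of $R^{Datalog}$ whose view-expansion coincides with the CQ transformation of the view-expansion of $E$; since $R^{Datalog}$ is a contained rewriting of $Q^{Datalog}$, Theorem \ref{thm:main123} yields that the view-expansion of $E$ is contained in $Q$, so $R$ is a contained rewriting. For maximality, any CQAC contained rewriting $R'$ of $Q$ in $\V$ translates symmetrically into a Datalog contained rewriting $R'^{Datalog}$ of $Q^{Datalog}$ in the transformed view set (again by Theorem \ref{thm:main123} applied to each expansion of $R'$); by the MCR property of $R^{Datalog}$ we get $R'^{Datalog}\sqsubseteq R^{Datalog}$, and hence $R'\sqsubseteq R$ after back-translation.

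\textbf{Main obstacle.} The delicate point is the correct handling of the built-in predicates $U$ and $U_{\theta c}$: unlike arbitrary EDBs, their extensions are fixed by the ambient dense order rather than freely chosen, so one must verify that the Datalog MCR algorithm of \cite{Duschka97-3} lifts soundly when some ``views'' are built-in, and that the back-translation is exact in both directions, i.e.\ that expansions of $R^{Datalog}$ are in one-to-one correspondence with expansions of $R$ preserving the containment tests given by Theorem \ref{thm:main123}. A secondary subtlety is to argue that the chosen set of relevant SI-constants for the link rules of $Q^{Datalog}$ (namely, all constants appearing in $Q$ or $\V$) is both rich enough to accommodate every CQAC rewriting and not so rich as to generate spurious ones.
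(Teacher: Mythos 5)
Your plan is essentially the paper's own proof (Algorithm MCR-RSI1 in Section \ref{subsec-buildingMCRs} and its correctness argument in Section \ref{subsec-proof-mcr1}): transform the views by the CQ transformation, add the predicates $U$ and $U_{\theta c}$ as auxiliary ``trivial'' views, build $Q^{Datalog}$, run the inverse-rule algorithm of \cite{Duschka97-3}, back-translate, and argue both soundness and maximality expansion-by-expansion via Theorem \ref{thm:main123}. The two difficulties you flag but leave open are precisely the ones the paper closes: for the built-in predicate $U$ it adds transitive-closure rules (using $U^{tr}$ in the link/coupling rules), and Example \ref{ex-prime-old1-transitive} shows this is genuinely necessary, since without them the Datalog side cannot recover $\leq$-facts implied by chains of $U$-atoms coming from different view atoms; and for the maximality direction it passes to the AC-rectified version of the candidate contained rewriting $P'$ (Section \ref{subsec-rectified}), which is what makes $P$ and $P'$ ``isomorphic'' under the back-translation and lets the containment mapping from an expansion of $R^{CQ}_{MCR}$ to $P$ be transferred to one from the corresponding expansion of $R_{AC}$ to $P'$. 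With those two additions your argument matches the paper's.
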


%
%


\subsection{Building MCRs for RSI1 queries}
\label{subsec-buildingMCRs}
In this subsection, we present the algorithm for building an MCR in the language of (possibly infinite) union of CQACs for the case of CQAC views and queries that are RSI1, which is the following  (see also Figure \ref{fig:proof-mcr}):

%
%
%
%

\begin{figure}
		\centering
		\includegraphics[width=0.83\linewidth]{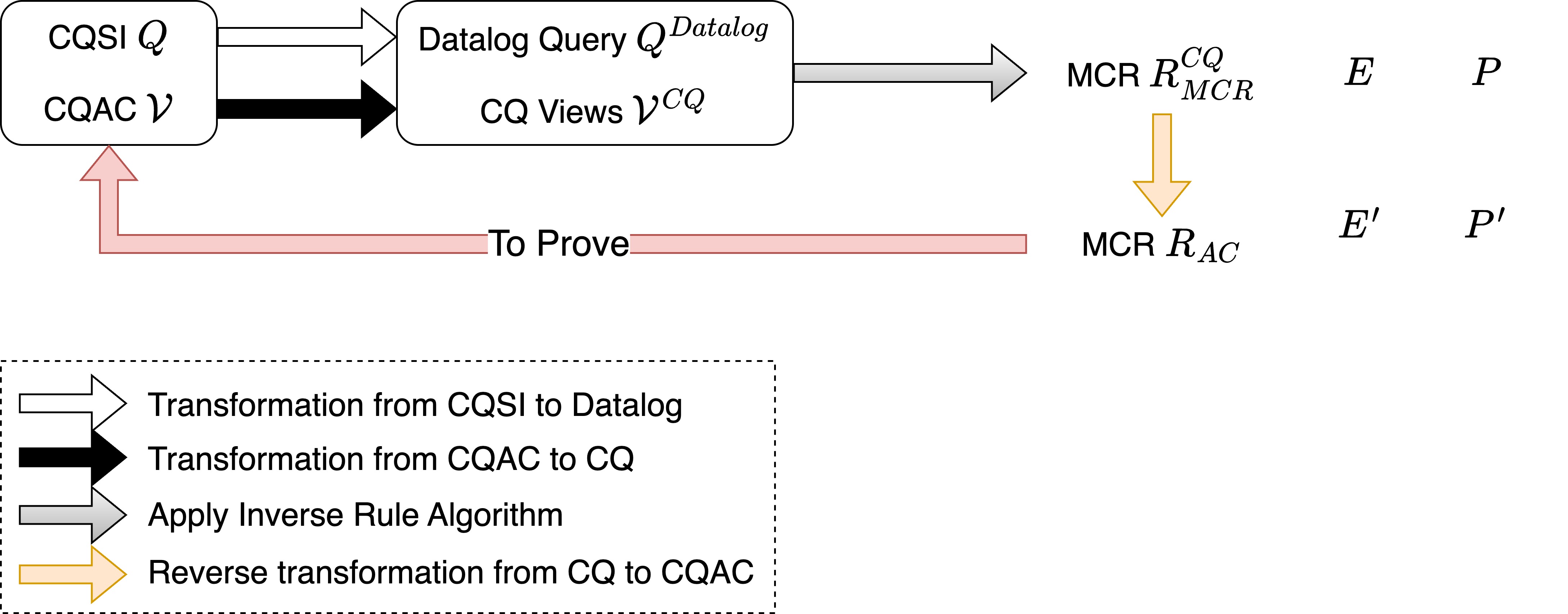}
		\caption{Finding an MCR}
		\label{fig:proof-mcr}
	\end{figure}


{\bf Algorithm MCR-RSI1}:

{\sl Stage I. Building CQ views and Datalog query}

\squishlist

	\item [1.] 
 For each view $v_i$ in $\V$, we construct a new view $V_i^{CQ}$ in $\V{^{CQ}}$ using the CQ transformation described in Section \ref{subsec-contained-trans}.
	
	\item [2.] In the view set $\V{^{CQ}}$we also add {\em auxiliary views}. We introduce new EDB predicates called {\em AC-predicates} (which will encode ACs $var \leq var$ and $var ~\theta ~const$) as follows:  a binary predicate $U$ and, a set of unary predicates $U_{\leq  c}$ and $U_{\geq  c}$, two for each constant $c$ that appears in the query and the views.  We construct the auxiliary views 
as follows: a) Views with head  $u_{\theta c}$, one for each
	unary predicate $U_{\theta c}$. The definition is $u_{\theta c}(X)~\symif~
	U_{\theta c}(X)$. b) A single view  $u$, whose definition is $u(X,Y)~\symif~
	U(X,Y)$. 

	\item [3.] For the query $Q$, we construct the Datalog query $Q^{Datalog}$
 using the construction in Section \ref{subsec-construct-Datalog}. The link rules will use the constants present in the views and in the query. The only difference is that we use IDB predicate $U^{tr}$ instead of EDB predicate $U$. $U^{tr}$ encodes transitive closure of the  $\leq$.

\item [4.] We finish the construction of the Datalog query by adding transitive closure rules for the transitive closure of the binary EDB predicate
$U$,  i.e., it is computed by the rules:  $U^{tr}(X,Y) \symif~U(X,Y)$
and $U^{tr}(X,Y) \symif~U(X,Z),U^{tr}(Z,Y). $ Moreover, for  appropriate  pairs of constants $c_1,c_2$ we add  rules of the following type:  $J_{\theta_1 c_1}(X) \symif~U_{\theta c_2}(Y),U^{tr}(X,Y).$

Thus, in the first stage,  from the original set $\V$ of views and  query $Q$, we build a set $\V{^{CQ}}$ of views and query $Q^{Datalog}$. This is illustrated by the two top  boxes in Figure \ref{fig:proof-mcr} and the arrows indicate the transformations.

{\sl Stage II. Building MCR}

	
	\item [5.] We find an MCR, $R^{CQ}_{MCR}$, for the Datalog query $Q^{Datalog}$ using the views
	in $\V{^{CQ}}$. For building $R^{CQ}_{MCR}$ we use the inverse rule algorithm  \cite{Duschka97-3}.
	
	\item [6.] We obtain rewriting $R_{AC}$ as follows: we replace in the found MCR $R^{CQ}_{MCR}$, each $v_i^{CQ}$ by
	$v_i$, each $u_{\theta c}(X)$ by arithmetic comparison $X \theta c$ and each $u(X,Y)$ by arithmetic comparison $X\leq Y$. This is what we call reverse transformation in Figure \ref{fig:proof-mcr}..
\squishend
We are called to prove that the rewriting $R_{AC}$ is an MCR of $Q$ using 
$\V$.
In the proof we will use the rewritings $E,E'$ and $P,P'$ which are added in the far right end of Figure \ref{fig:proof-mcr} to guide the reader during the proof in Subsection \ref{subsec-proof-mcr1}.

\subsection{Proof that the algorithm {\bf MCR-RSI1} is correct}
\label{subsec-proof-mcr1}
We need some preliminary results.
We need to mention that  adding the transitive closure on $U$ in $Q^{Datalog}$ does not affect the result in 
Theorem 	\ref{thm:main123}.


\begin{lemma}
	\label{lemm-transitive}
	Consider two conjunctive queries with arithmetic comparisons, $Q_1$ and $Q_2$  such that $Q_1$ is an RSI1 query with closed ACs.
	Let  $Q_1^{Datalog^{tr}}$ be the  transformed Datalog query of $Q_ 1$ enhanced with transitive closure on the special EDB predicate $U$. Let
	$Q_2^{CQ}$ be the  transformed CQ query of $Q_2$.
	Then,  $Q_2$ is contained in $Q_1$   if and only if   $Q_2^{CQ}$ is contained in $Q_1^{Datalog^{tr}}$.
\end{lemma}

\begin{proof}
EDB relation $U$ encodes $\leq$ or $\geq$ of a logically closed set of ACs. Thus, if we compute transitive closure of $U$ we obtain again $U$. Hence
$U $ and $U^{tr}$ is the same relation.
\end{proof}
The following example shows that for our argument about the algorithm building indeed an MCR we need to prove that it builds an MCR with rectified rewritings.

\begin{example}
Recall  Example \ref{ex:export-nondist1}.
According to Lemma \ref{lemma-direct}(2) the homomorphism property holds. For finding an MCR we can apply the  algorithm in this paper, only that
in this case the Datalog transformation will produce a CQAC. The MCR found this way will produce  $R'(X)$ in 
Example \ref{ex:export-nondist1} which is a 
a rectified rewriting.  Rewriting $R'(Y_1)$ in 
Example \ref{ex:export-nondist1}  is also an MCR and there are arbitrarily many such MCRs with arbitrarily many relational subgoals. 
%
\end{example}

\begin{lemma}
The MCR found by the algorithm is such that the Datalog program that constitutes the MCR contains Datalog-expansions that are CQACs that are contained rewritings that are rectified.
\end{lemma}

\begin{proof}
The MCR  $R_{AC}$ results from an MCR $R^{CQ}_{MCR}.$ The latter only contains explicit  predicates that encode transitive closure of $U$. These predicates are translated into ACs in $R_{AC}$, thus all the ACs are explicit in 
$R_{AC}$.
\end{proof}

We consider a Datalog expansion $E'$ of $R_{AC}$, which corresponds to a Datalog expansion, $E$, of $R^{CQ}_{MCR}$ (by construction)\footnote{We put $E$ and $E'$ in the far end of Figure \ref{fig:proof-mcr} as a reminder}. 
We consider the view-expansions of $E$ and $E'$,  $E^{exp}$ and $E'^{exp}$ respectively.
$E^{exp} $ is the CQ transformation of $E'^{exp}$ (by construction of $R_{AC}$).
$E$ is a contained rewriting of $Q^{Datalog}$, hence $E^{exp} $ is contained in $Q^{Datalog}$.
According to the containment test based on transformations (Datalog transformation and CQ transformation),  $E'^{exp}$ is contained in the query $Q$, hence $E'$ is a contained rewriting of $Q$. This proves that MCR $R_{AC}$ is a contained rewriting of the query $Q$. It remains to prove that it is maximally contained.

Consider any contained CQAC rectified rewriting, $P'$, of $Q$ using the view in $\V{^{CQ}}$. Then, $P'^{exp}\sqsubseteq Q$, and, according to the containment test via the transformations, \\$P'^{exp-CQ}\sqsubseteq Q^{Datalog}$,
where $P'^{exp-CQ}$ is the CQ transformation of $P'^{exp}$. 

We construct  $P$ to be a query in terms of  the views
	in $\V{^{CQ}}$ which
uses the relational body of $P'$ and all the AC-EDB predicates of $P'^{exp-CQ}$ that use only variables in $P'$. 
(By AC-EDB predicates we refer to the EDB pedicates $U$ and $U_{\theta c}$ that appear in the Datalog transformation  encoding ACs.)
The view-expansion of 
$P$, $P^{exp}$, is equal to $P'^{exp-CQ}$ by construction;  hence,  $P^{exp}\sqsubseteq Q^{Datalog}$. 

Consequently, $P$ is a contained rewriting of $Q^{Datalog}$ in terms of $\V'$; hence, it is contained in the MCR, i.e.,  $P\sqsubseteq R^{CQ}_{MCR}$. 

Consequently, there is a 
Datalog expansion, $E$,  of $R^{CQ}_{MCR}$ from which there is a containment mapping on $P$.  If we apply reverse transformation on $E$ (i.e., replace the AC-EDB subgoals with ACs) then we get a Datalog-expansion, $E'$, of the MCR $R_{AC}$. Notice that $P$ and $P'$ are $``$isomorphic$"$ if we replace the AC-EDB with ACs and vice-versa
($P$ is a rectified rewriting, so all the ACs in the view-expansion involving  variables of $P$ appear in $P$).
Hence, there is a containment mapping from $E'$ to $P'$.,  Hence $P'\sqsubseteq  E'$, and, consequently  
$P'\sqsubseteq  R_{AC}$.
The table below shows pairs (vertically), each pair being a  CQAC or a Datalog$^{AC}$ query  and the corresponding CQ transformation. Notice that we assume that $E'$ and $P'$ contain all ACs in the closure of ACs in the rewriting, while $E$ and $P$ only the present AC-EDB predicates. However, the Datalog program discovers 
all of them by using the transitive closure rules on the AC-EDB binary predicate $U$ that encodes $\leq$.

  \small{ \begin{center} \begin{tabular} {|c|c|l|c|c|} \hline
%
%
%
%
%
%
%
%

$P'$& $P'^{exp}$  & $E'$&   $R_{AC}$ &{\bf with ACs}    \\\hline
 $P$&   $P^{exp}= P'^{exp-CQ}$  & $E$ &    $R^{CQ}_{MCR}$ &{\bf with AC-EDB predicates}         \\\hline

\end{tabular}

\end{center} }  

Thus, we have proven the theorem:

\begin{theorem}
	\label{thm-mcr-cont}
Given a query $Q$ which is RSI1 and views $\V$ which are
	CQACs, the following is true:
	Let $R$ be a  CQAC contained rewriting of $Q$ in terms of $\V$. Then $R$ is contained in the one found by the algorithm in Subsection~\ref{subsec-buildingMCRs} Datalog$^{AC}$ program $R_{MCR}$.
\end{theorem}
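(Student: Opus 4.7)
The plan is to prove maximal containment by pushing the problem through the transformations: any contained rewriting in the original (view/query with ACs) world can be translated into a contained rewriting in the (CQ view / Datalog query) world, where we can appeal to maximality of the MCR produced by the inverse-rule algorithm, and then translate back. The containedness direction (that $R_{AC}$ itself is a contained rewriting of $Q$) follows from a symmetric argument: every Datalog-expansion $E'$ of $R_{AC}$ corresponds to a Datalog-expansion $E$ of $R^{CQ}_{MCR}$, and since $E^{exp}\sqsubseteq Q^{Datalog}$, Theorem~\ref{thm:main123} lets us lift this to $E'^{exp}\sqsubseteq Q$.

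The main step is the maximality direction. First I would take an arbitrary CQAC contained rewriting $P'$ of $Q$ using $\V$, and consider its view-expansion $P'^{exp}\sqsubseteq Q$. Applying the CQ transformation of Section~\ref{subsec-contained-trans} to $P'^{exp}$ yields a CQ $P'^{exp\text{-}CQ}$, and by Theorem~\ref{thm:main123} we have $P'^{exp\text{-}CQ}\sqsubseteq Q^{Datalog}$. Next I construct $P$ over $\V'$ whose relational body copies the subgoals of $P'$ but whose AC-EDB atoms (the $u_{\theta c}$ and $u$ subgoals) record exactly the ACs that $P'$'s variables satisfy in the AC-rectified sense. The construction is arranged so that the view-expansion $P^{exp}$ equals $P'^{exp\text{-}CQ}$, hence $P^{exp}\sqsubseteq Q^{Datalog}$, i.e.\ $P$ is a contained CQ rewriting of $Q^{Datalog}$ over $\V'$.

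By maximality of the inverse-rule MCR $R^{CQ}_{MCR}$, we have $P\sqsubseteq R^{CQ}_{MCR}$, so there is a Datalog-expansion $E$ of $R^{CQ}_{MCR}$ and a containment mapping from $E$ into $P$. Applying the reverse transformation of step (6) of the algorithm to $E$ produces a Datalog-expansion $E'$ of $R_{AC}$, and the same variable mapping now gives a containment mapping from $E'$ into $P'$ (because the rectification ensures each AC-EDB subgoal in $P$ corresponds to a literal AC in $P'$'s closure, and vice versa). Hence $P'\sqsubseteq E'\sqsubseteq R_{AC}$.

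The main obstacle is the subtle correspondence between $P$ and $P'$: one must verify that (i) rectification of $P'$ is strong enough that every AC involving variables of $P'$ appears as an AC-EDB subgoal in $P$, and (ii) the transitive-closure rules added to $Q^{Datalog}$ for the binary predicate $U$ (together with the extended link rules that fire on $U^{tr}$) exactly capture the reasoning about chains of $\leq$ that the ACs in $P'$ support. Without (ii), a containment mapping existing in the Datalog world might fail to be promoted to the AC world because some needed $\leq$ edge between variables of $P'$ could be implied transitively in $P'$ but not be present as a direct EDB $U$-fact. This is precisely the reason step (4) of the algorithm augments $Q^{Datalog}$ with the $U^{tr}$ rules, and the correctness proof needs to check that these additions do not change the expressive power of $Q^{Datalog}$ while making the correspondence between $P$ and $P'$ tight.
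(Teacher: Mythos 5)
Your proposal follows essentially the same route as the paper's proof in Subsection~\ref{subsec-proof-mcr1}: containedness of $R_{AC}$ via Datalog-expansions and Theorem~\ref{thm:main123}, and maximality by transforming an arbitrary contained rewriting $P'$ into a CQ rewriting $P$ over $\V^{CQ}$ with $P^{exp}=P'^{exp\text{-}CQ}$, invoking maximality of the inverse-rule MCR, and reverse-transforming the resulting containment mapping back to the AC world. The two "obstacles" you flag — rectification making $P$ and $P'$ isomorphic up to the AC/AC-EDB substitution, and the transitive-closure rules for $U$ preserving expressive power while capturing implied $\leq$ chains — are exactly the points the paper's proof addresses, so the argument is correct and matches the paper.
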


The following example is a simple one, on which it is easy to see that the found MCR by the algorithm is a rectified 
rewriting.

Theorems  	\ref{thm-mainsec61certain}   and  	\ref{thm-mainsec61MCR} can be extended, in a similar manner as in the case of deciding the complexity of 
query containment to the following:

\begin{theorem}
\label{thm-mainsec61certainMCR}
	Given CQAC views $\V$ and a query which is one of the following:

(i) It uses closed LSIs and one open RSI and the constant in the RSI is not shared with a closed RSI in any view definition.

(ii)  It uses open LSIs and one closed RSI and the LSIs use all distinct  constants and each such constant is not shared with a closed LSI in any view definition.

(iii)  It uses open LSIs and one open RSI and all the SIs use distinct constants and such constant is not shared with a closed SI in any view definition.

Then, the following are true:

1. We can find
an MCR of $Q$ using $\V$ in the language of Datalog with ACs.

	2. We can find all certain answers of $Q$ using $\V$ on a given view instance $\I$ in time polynomial on the size of $\I$.
\end{theorem}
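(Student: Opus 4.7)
\textbf{Proof plan for Theorem \ref{thm-mainsec61certainMCR}.}

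The plan is to show that, under each of the three restrictions (i)--(iii), the entire machinery developed for the closed-AC case (Theorem \ref{thm:main123}, algorithm \textbf{MCR-RSI1}, Theorem \ref{thm-mcr-cont}, and Theorem \ref{thm-mainsec61certain}) goes through with essentially the same proofs. The only place the closedness of the ACs was used in an essential way is Proposition \ref{pro-lsi-rsi-any-closed}, which in turn rested on Lemma \ref{lemm-3clauses}(2). So the first step is to establish an analogue of Proposition \ref{pro-lsi-rsi-any-closed} for each of the three new cases. The analogue will say: in a minimal containment implication whose right-hand side consists of (LSI and RSI) ACs of the type allowed by (i), (ii) or (iii), the right-hand side has at most two disjuncts. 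This follows from Lemma \ref{lemma-direct}(3) and Proposition \ref{trick-proO}, exactly the way Proposition \ref{pro-lsi-rsi-any-closed} followed from Lemma \ref{lemm-3clauses}(2). The non-sharing conditions in (i)--(iii) are precisely the side conditions required in Lemma \ref{lemma-direct}(3) and Proposition \ref{trick-proO} to block the counterexample displayed right after Proposition \ref{pro-lsi-rsi-any-closed} (the one using $X\neq Y$ together with two closed ACs to derive a disjunction of four open ACs).

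The second step is to redo the containment test of Theorem \ref{thm:main123}. The Datalog transformation $Q^{Datalog}$ of the containing query and the CQ transformation $Q_2^{CQ}$ of the contained query are defined exactly as in Section \ref{subsec-construct-Datalog}, except that the associated IDB predicates $I_{\theta c},J_{\theta c}$ now carry the correct open/closed tag $\theta$. The soundness direction of the test (if $Q_2^{CQ}\sqsubseteq Q_1^{Datalog}$ then $Q_2\sqsubseteq Q_1$) is unchanged: firing a coupling rule still encodes a disjunction of two ACs that is implied by $\beta_2$, firing a mapping rule still produces a $J$-fact associated to one omitted conjunct, and the inductive argument sketched in Section \ref{subsec-simplefacts} is blind to whether the ACs involved are open or closed. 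The completeness direction is where the new analogue of Proposition \ref{pro-lsi-rsi-any-closed} is needed: given a true containment entailment, we need to reduce it, step by step, to implications whose minimal form has at most two disjuncts on the rhs, so that each reduction step can be interpreted as the firing of a coupling rule followed by a mapping rule. Conditions (i)--(iii) ensure this reduction is valid.

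Once the containment test is in place for the extended classes, the third step is routine: the algorithm \textbf{MCR-RSI1} of Section \ref{subsec-buildingMCRs} uses only the Datalog/CQ transformations plus the inverse-rule algorithm for Datalog-to-CQ rewritings, and the correctness argument of Subsection \ref{subsec-proof-mcr1} (the four-square diagram $P,P',E,E',R_{AC},R^{CQ}_{MCR}$) goes through verbatim using the extended containment test. This yields clause (1). Clause (2) is then immediate from Theorem \ref{owa_mcrs}: an MCR in $Datalog^{AC}$ is in particular an MCR in the language of (infinite) unions of CQACs, it computes all certain answers, and $Datalog^{AC}$ has polynomial data complexity.

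The main obstacle is the completeness direction of the extended containment test, specifically verifying that the non-sharing conditions in (i)--(iii) really are sufficient to prevent the pathological ``two-open-SIs from one-$\neq$'' derivations allowed by Lemma \ref{lemm-3clauses}(3). The cleanest way to handle this is to prove, once, a unified ``bounded-rhs'' lemma for the four minimal-form cases combined (closed RSI1, and the three new variants) by case analysis on which elemental implications (2), (4), (6), (7), (8), (9) can fire on the moved-to-left-hand-side negations under the non-sharing hypotheses; everything else in the construction is then mechanical.
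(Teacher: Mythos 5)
Your plan matches the paper's (implicit) argument: the paper gives no detailed proof of this theorem, stating only that Theorems \ref{thm-mainsec61certain} and \ref{thm-mainsec61MCR} ``can be extended in a similar manner as in the case of deciding the complexity of query containment,'' and you have correctly identified exactly the ingredients that remark points to — replacing Proposition \ref{pro-lsi-rsi-any-closed} by its open-AC analogue via Proposition \ref{trick-proO} and Lemma \ref{lemma-direct}(3), using the non-sharing conditions to block the four-disjunct counterexample, and then rerunning the Theorem \ref{thm:main123} test, the \textbf{MCR-RSI1} construction, and Theorem \ref{owa_mcrs} unchanged. The proposal is consistent with, and in fact more explicit than, what the paper provides.
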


In \ref{sec-extend-single-mapp}, we extend the results on finding MCRs to include single mapping variables as per
Section \ref{sec-exte-single-mappinga-var}.

\section{Conclusions}
In this paper we have investigated  the computational complexity of query containment for CQACs and of computing certain answers in the framework of answering CQAC queries using CQAC views. 
We begin by looking into cases where the containing query uses only LSI ACs.
When the containing query uses only closed LSIs, the problem is in NP.  When there are open LSIs in the containing query it is not the case. In that respect, when the containing query uses only LSI and certain constants do not appear in the contained query, the problem is in NP (even more interestingly, via the homomorphism property). However, if the containing query uses open LSI and certain constants are allowed to appear in both queries, the problem becomes 
$\Pi^p_2$-complete.  Thus, we have delineated a boundary between NP and   $\Pi^p_2$-complete which surprisingly puts ''very similar'' problems in different computational classes. 
Then, we are investigating cases where the containment problem is in NP when the containing query uses both LSI and RSI ACs. This needs a more complicated algorithm to prove. 

Open problems remain when the containing query  uses both LSI and RSI ACs.  The most tight open problem is the complexity of query containment when the containing query 
uses two closed LSI ACs and two closed RSI ACs. The technique used to prove $\Pi^p_2$ hardness here does not work because the containing queries used in the proof use a number of ACs that is proportional to the size of the formula from which we do the reduction. 
%

Towards investigating similar problems, we believe that, 
if the relational subgoals of the containing query form an acyclic hypergraph and there are only several closed LSIs and one closed RSI, then it is worth investigating whether 
testing containment may be done in polynomial time. We already know that the CQ query containment problem is polynomial  when the containing query is acyclic.

The second part of the present paper considers finding MCRs and computing certain answers. 
First, we present a  result which says that, for CQAC query and views, an MCR in the language of union of CQACs computes exactly the set of certain answers. 
Containment tests usually provide the basis of algorithms that find MCRs. We use a containment test via transformations.
We show that in the case the query has only LSI and a single RSI, there is an MCR in the language of Datalog with ACs, we, consequently, show that we can compute certain answers in polynomial time for this case. As concerns broader classes of queries that contain any number of LSI and RSI ACs, we believe it is unlikely that there is an MCR in the language of union of CQACs, hence, probably the problem of computing certain answers is harder than a PTIME problem.
For MCRs, when the homomorphism property holds, various efficient techniques like the Minicon algorithm \cite{Pottinger00-1} may find an MCR in the language of union of CQAC. As for the corner cases of query that contains only LSI ACs and containment is proven to be $\Pi^p_2$-complete, the problem of finding an MCR in the language of union of CQACs (and the problem of computing certain answers in polynomial time) is an interesting open problem.
On a similar line of research that concerns equivalent rewritings, 
when the query is an acyclic CQ, a recent result  \cite{GeckKSS23} shows that there is an equivalent rewriting which is acyclic, if there is one at all. Probably it is worth investigating the problem with ACs, starting with simple cases, e.g., 
when the query contains only SIs or even only LSIs.

In conclusion, for the problem of query containment with SI ACs,  we built an interesting picture which is depicted in Table \ref{fig:table-results}, and for the problem  of computing certain answers of queries with SI ACs, we made progress in the direction of computing certain answers in polynomial time. 

{\footnotesize
\bibliographystyle{alpha}
\bibliography{references
}}

\begin{thebibliography}{10}
\expandafter\ifx\csname url\endcsname\relax
  \def\url#1{\texttt{#1}}\fi
\expandafter\ifx\csname urlprefix\endcsname\relax\def\urlprefix{URL }\fi
\expandafter\ifx\csname href\endcsname\relax
  \def\href#1#2{#2} \def\path#1{#1}\fi

\bibitem{Chandra77-1}
A.~K. Chandra, P.~M. Merlin, Optimal implementation of conjunctive queries in
  relational data bases, STOC (1977) 77--90.

\bibitem{Klug88-1}
A.~Klug, On conjunctive queries containing inequalities, Journal of the ACM
  35~(1) (1988) 146--160.

\bibitem{Meyden97-12}
R.~van~der Meyden, \href{https://doi.org/10.1006/jcss.1997.1455}{The complexity
  of querying indefinite data about linearly ordered domains}, J. Comput. Syst.
  Sci. 54~(1) (1997) 113--135.
\newblock \href {https://doi.org/10.1006/JCSS.1997.1455}
  {\path{doi:10.1006/JCSS.1997.1455}}.
\newline\urlprefix\url{https://doi.org/10.1006/jcss.1997.1455}

\bibitem{Gupta94-1}
A.~Gupta, Y.~Sagiv, J.~D. Ullman, J.~Widom, Constraint checking with partial
  information, in: PODS, 1994, pp. 45--55.

\bibitem{Zhang94-4}
X.~Zhang, Z.~M. Ozsoyoglu, Some results on the containment and minimization of
  (in) equality queries, Inf. Process. Lett. (1994).

\bibitem{AfratiLM06}
F.~N. Afrati, C.~Li, P.~Mitra, Rewriting queries using views in the presence of
  arithmetic comparisons, Theor. Comput. Sci. 368~(1-2) (2006) 88--123.

\bibitem{Afrati04-1}
F.~Afrati, C.~Li, P.~Mitra, On containment of conjunctive queries with
  arithmetic comparisons, in: EDBT, 2004.

\bibitem{Kolaitis98-1}
P.~G. Kolaitis, D.~L. Martin, M.~N. Thakur, On the complexity of the
  containment problem for conjunctive queries with built-in predicates, in:
  PODS, 1998, pp. 197--204.

\bibitem{Afrati19}
F.~N. Afrati, The homomorphism property in query containment and data
  integration, in: B.~C. Desai, D.~Anagnostopoulos, Y.~Manolopoulos,
  M.~Nikolaidou (Eds.), Proceedings of the 23rd International Database
  Applications {\&} Engineering Symposium, {IDEAS} 2019, Athens, Greece, June
  10-12, 2019, {ACM}, 2019, pp. 2:1--2:12.

\bibitem{AfratiK10}
F.~N. Afrati, N.~Kiourtis, Computing certain answers in the presence of
  dependencies, Inf. Syst. 35~(2) (2010) 149--169.

\bibitem{2017Afrati}
F.~N. Afrati, R.~Chirkova, Answering Queries Using Views, Second Edition,
  Synthesis Lectures on Data Management, Morgan {\&} Claypool Publishers, 2019.

\bibitem{BenediktCGGTMO22}
M.~Benedikt, F.~R. Cooper, S.~Germano, G.~Gyorkei, E.~Tsamoura, B.~Moore,
  C.~Ortiz, {PDQ} 2.0: Flexible infrastructure for integrating reasoning and
  query planning, {SIGMOD} Rec. 51~(4) (2022) 36--41.

\bibitem{Abiteboul98-1}
S.~Abiteboul, O.~M. Duschka, Complexity of answering queries using materialized
  views, in: PODS, 1998, pp. 254--263.

\bibitem{Pottinger00-1}
R.~Pottinger, A.~Levy, A scalable algorithm for answering queries using views,
  in: Proc. of VLDB, 2000.

\bibitem{Zhang93-1}
X.~Zhang, M.~Ozsoyoglu, On efficient reasoning with implication constraints,
  in: DOOD, 1993, pp. 236--252.

\bibitem{karvounarakis2008conjunctive}
G.~Karvounarakis, V.~Tannen, Conjunctive queries and mappings with
  unequalities, Technical Report MS-CIS-08-37, University of Pennsylvania
  (2008).

\bibitem{Levy95-1}
A.~Levy, A.~O. Mendelzon, Y.~Sagiv, D.~Srivastava, Answering queries using
  views, in: PODS, 1995, pp. 95--104.

\bibitem{Levy00-1}
A.~Levy, Answering queries using views: A survey, Technical report, Computer
  Science Dept., Washington Univ. (2000).

\bibitem{Grahne99-1}
G.~Grahne, A.~O. Mendelzon, Tableau techniques for querying information sources
  through global schemas, in: ICDT, 1999, pp. 332--347.

\bibitem{Levy96-1}
A.~Levy, A.~Rajaraman, J.~J. Ordille, Querying heterogeneous information
  sources using source descriptions, in: Proc. of VLDB, 1996, pp. 251--262.

\bibitem{Mitra99-1}
P.~Mitra, An algorithm for answering queries efficiently using views, in:
  Proceedings of the Australasian Database Conference, 2001.

\bibitem{Duschka97-3}
O.~M. Duschka, M.~R. Genesereth, Answering recursive queries using views, in:
  PODS, 1997, pp. 109--116.

\bibitem{Afrati99-1}
F.~N. Afrati, M.~Gergatsoulis, T.~G. Kavalieros, Answering queries using
  materialized views with disjunctions, in: ICDT, 1999, pp. 435--452.

\bibitem{GeckKSS23}
G.~Geck, J.~Keppeler, T.~Schwentick, C.~Spinrath, Rewriting with acyclic
  queries: Mind your head, Log. Methods Comput. Sci. 19~(4) (2023).

\bibitem{CaoFGL18}
Y.~Cao, W.~Fan, F.~Geerts, P.~Lu, Bounded query rewriting using views, {ACM}
  Trans. Database Syst. 43~(1) (2018) 6:1--6:46.

\bibitem{CimaCLP23}
G.~Cima, M.~Console, M.~Lenzerini, A.~Poggi,
  \href{https://doi.org/10.3389/frai.2023.1085754}{A review of data
  abstraction}, Frontiers Artif. Intell. 6 (2023).
\newline\urlprefix\url{https://doi.org/10.3389/frai.2023.1085754}

\bibitem{NashSV07}
A.~Nash, L.~Segoufin, V.~Vianu,
  \href{https://doi.org/10.1007/11965893\_5}{Determinacy and rewriting of
  conjunctive queries using views: {A} progress report}, in: T.~Schwentick,
  D.~Suciu (Eds.), Database Theory - {ICDT} 2007, 11th International
  Conference, Barcelona, Spain, January 10-12, 2007, Proceedings, Vol. 4353 of
  Lecture Notes in Computer Science, Springer, 2007, pp. 59--73.
\newblock \href {https://doi.org/10.1007/11965893\_5}
  {\path{doi:10.1007/11965893\_5}}.
\newline\urlprefix\url{https://doi.org/10.1007/11965893\_5}

\bibitem{SegoufinV05}
L.~Segoufin, V.~Vianu, \href{https://doi.org/10.1145/1065167.1065174}{Views and
  queries: determinacy and rewriting}, in: C.~Li (Ed.), Proceedings of the
  Twenty-fourth {ACM} {SIGACT-SIGMOD-SIGART} Symposium on Principles of
  Database Systems, June 13-15, 2005, Baltimore, Maryland, {USA}, {ACM}, 2005,
  pp. 49--60.
\newblock \href {https://doi.org/10.1145/1065167.1065174}
  {\path{doi:10.1145/1065167.1065174}}.
\newline\urlprefix\url{https://doi.org/10.1145/1065167.1065174}

\bibitem{Afrati-determ}
F.~N. Afrati, \href{https://doi.org/10.1016/j.tcs.2010.12.031}{Determinacy and
  query rewriting for conjunctive queries and views}, Theor. Comput. Sci.
  412~(11) (2011) 1005--1021.
\newblock \href {https://doi.org/10.1016/J.TCS.2010.12.031}
  {\path{doi:10.1016/J.TCS.2010.12.031}}.
\newline\urlprefix\url{https://doi.org/10.1016/j.tcs.2010.12.031}

\bibitem{CalvaneseGLV02}
D.~Calvanese, G.~D. Giacomo, M.~Lenzerini, M.~Y. Vardi,
  \href{https://doi.org/10.1145/543613.543646}{Lossless regular views}, in:
  L.~Popa, S.~Abiteboul, P.~G. Kolaitis (Eds.), Proceedings of the Twenty-first
  {ACM} {SIGACT-SIGMOD-SIGART} Symposium on Principles of Database Systems,
  June 3-5, Madison, Wisconsin, {USA}, {ACM}, 2002, pp. 247--258.
\newblock \href {https://doi.org/10.1145/543613.543646}
  {\path{doi:10.1145/543613.543646}}.
\newline\urlprefix\url{https://doi.org/10.1145/543613.543646}

\bibitem{CalvaneseGLV05}
D.~Calvanese, G.~D. Giacomo, M.~Lenzerini, M.~Y. Vardi,
  \href{https://doi.org/10.1007/978-3-540-30570-5\_22}{View-based query
  processing: On the relationship between rewriting, answering and
  losslessness}, in: T.~Eiter, L.~Libkin (Eds.), Database Theory - {ICDT} 2005,
  10th International Conference, Edinburgh, UK, January 5-7, 2005, Proceedings,
  Vol. 3363 of Lecture Notes in Computer Science, Springer, 2005, pp. 321--336.
\newblock \href {https://doi.org/10.1007/978-3-540-30570-5\_22}
  {\path{doi:10.1007/978-3-540-30570-5\_22}}.
\newline\urlprefix\url{https://doi.org/10.1007/978-3-540-30570-5\_22}

\bibitem{BenediktPW23}
M.~Benedikt, C.~Pradic, C.~Wernhard, Synthesizing nested relational queries
  from implicit specifications, ACM, 2023, pp. 33--45.

\bibitem{KwiecienMO22}
J.~Kwiecien, J.~Marcinkowski, P.~Ostropolski{-}Nalewaja,
  \href{https://doi.org/10.1145/3517804.3524168}{Determinacy of real
  conjunctive queries. the boolean case}, in: L.~Libkin, P.~Barcel{\'{o}}
  (Eds.), {PODS} '22: International Conference on Management of Data,
  Philadelphia, PA, USA, June 12 - 17, 2022, {ACM}, 2022, pp. 347--358.
\newblock \href {https://doi.org/10.1145/3517804.3524168}
  {\path{doi:10.1145/3517804.3524168}}.
\newline\urlprefix\url{https://doi.org/10.1145/3517804.3524168}

\bibitem{BenediktEM17}
M.~Benedikt, J.~Engelfriet, S.~Maneth,
  \href{https://doi.org/10.1016/j.jcss.2016.11.001}{Determinacy and rewriting
  of functional top-down and {MSO} tree transformations}, J. Comput. Syst. Sci.
  85 (2017) 57--73.
\newblock \href {https://doi.org/10.1016/J.JCSS.2016.11.001}
  {\path{doi:10.1016/J.JCSS.2016.11.001}}.
\newline\urlprefix\url{https://doi.org/10.1016/j.jcss.2016.11.001}

\bibitem{Marcinkowski20}
J.~Marcinkowski, \href{https://doi.org/10.4230/LIPIcs.ICDT.2020.2}{What makes a
  variant of query determinacy (un)decidable? (invited talk)}, in: C.~Lutz,
  J.~C. Jung (Eds.), 23rd International Conference on Database Theory, {ICDT}
  2020, March 30-April 2, 2020, Copenhagen, Denmark, Vol. 155 of LIPIcs,
  Schloss Dagstuhl - Leibniz-Zentrum f{\"{u}}r Informatik, 2020, pp. 2:1--2:20.
\newblock \href {https://doi.org/10.4230/LIPICS.ICDT.2020.2}
  {\path{doi:10.4230/LIPICS.ICDT.2020.2}}.
\newline\urlprefix\url{https://doi.org/10.4230/LIPIcs.ICDT.2020.2}

\bibitem{BenediktKOR23}
M.~Benedikt, S.~Kikot, P.~Ostropolski-Nalewaja, M.~Romero, On monotonic
  determinacy and rewritability for recursive queries and views, {ACM} Trans.
  Comput. Log. 24~(2) (2023) 16:1--16:62.

\bibitem{AndritsosFFHHHKMNPVVY02}
P.~Andritsos, R.~Fagin, A.~Fuxman, L.~M. Haas, M.~A. Hern{\'{a}}ndez, C.~T.~H.
  Ho, A.~Kementsietsidis, R.~J. Miller, F.~Naumann, L.~Popa, Y.~Velegrakis,
  C.~Vilarem, L.~Yan, Schema management, {IEEE} Data Eng. Bull. 25~(3) (2002)
  32--38.

\bibitem{fagin2005data}
R.~Fagin, P.~G. Kolaitis, R.~J. Miller, L.~Popa, Data exchange: semantics and
  query answering, Theoretical Computer Science 336~(1) (2005) 89--124.

\bibitem{KonstantinidisA13}
G.~Konstantinidis, J.~L. Ambite, Scalable containment for unions of conjunctive
  queries under constraints, in: R.~D. Virgilio, F.~Giunchiglia, L.~Tanca
  (Eds.), Proceedings of the Fifth Workshop on Semantic Web Information
  Management, SWIM@SIGMOD Conference 2013, New York, NY, USA, June 23, 2013,
  {ACM}, 2013, pp. 4:1--4:8.

\bibitem{CateKO13}
B.~ten Cate, P.~G. Kolaitis, W.~Othman, Data exchange with arithmetic
  operations, in: Joint 2013 {EDBT/ICDT} Conferences, {EDBT} '13 Proceedings,
  Genoa, Italy, March 18-22, 2013, 2013, pp. 537--548.

\bibitem{FanLLT18}
W.~Fan, X.~Liu, P.~Lu, C.~Tian, Catching numeric inconsistencies in graphs, in:
  Proceedings of the 2018 International Conference on Management of Data,
  {SIGMOD} Conference 2018, Houston, TX, USA, June 10-15, 2018, 2018, pp.
  381--393.

\bibitem{PapadimitriouY97}
C.~H. Papadimitriou, M.~Yannakakis, On the complexity of database queries, in:
  Proceedings of the Sixteenth {ACM} {SIGACT-SIGMOD-SIGART} Symposium on
  Principles of Database Systems, May 12-14, 1997, Tucson, Arizona, {USA},
  1997, pp. 12--19.

\bibitem{ChekuriRaj97}
J.~Chekuri, A.~Rajaraman, Conjunctive query containment revisited, in:
  F.~Afrati, P.~Kolaitis (Eds.), ICDT, volume 1186 of Lecture Notes in Computer
  Science Springer-Verlag, 1997, pp. 56--70.

\bibitem{AfratiLP08}
F.~N. Afrati, C.~Li, V.~Pavlaki, Data exchange in the presence of arithmetic
  comparisons, in: {EDBT} 2008, 11th International Conference on Extending
  Database Technology, Nantes, France, March 25-29, 2008, Proceedings, 2008,
  pp. 487--498.

\bibitem{ConsoleHL20}
M.~Console, M.~Hofer, L.~Libkin, Queries with arithmetic on incomplete
  databases, in: D.~Suciu, Y.~Tao, Z.~Wei (Eds.), Proceedings of the 39th {ACM}
  {SIGMOD-SIGACT-SIGAI} Symposium on Principles of Database Systems, {PODS}
  2020, Portland, OR, USA, June 14-19, 2020, {ACM}, 2020, pp. 179--189.

\bibitem{Halevy01-1}
A.~Y. Halevy, Answering queries using views: A survey, The VLDB Journal 10~(4)
  (2001) 270--294.

\bibitem{Ullman88}
J.~D. Ullman, Principles of Database and Knowledge-Base Systems, Volume {II},
  Vol.~14 of Principles of computer science series, Computer Science Press,
  1988.

\end{thebibliography}
\appendix

\section{More on Reasoning about arithmetic comparisons}
\label{app-A-A}

\subsection{Proof of Lemma \ref{lemm-3clauses}}

\begin{proof}
For the first clause: Suppose the first time implication 8  is used, we have in $F$ the following ACs $X\leq Z, Z\leq Y ,
X\leq W,  W\leq Y, W\neq Z$. The second time it is used, using the result of the first time (which is $X\neq Y$), 
suppose we have $X_1\leq Y, Y\leq Y_1 ,
X_1\leq X,  X\leq Y_1 , X\neq Y$  which  derives $X_1\neq Y_1$. This (i.e., the $X_1\neq Y_1$) could have been derived using  the elemental implication 8 only once as follows: 
 $X_1\leq Z, Z\leq Y_1 ,
X_1\leq W,  W\leq Y_1 , W\neq Z$.  The  $X_1\leq Z$ is derived from the $X_1\leq X$ and $X\leq Z$. Similarly, the other three ACs are derived.  $Z\leq Y_1 $ is derived from the $Z\leq Y$ and $Y\leq Y_1$.
$X_1\leq W$ is derived from the $ X_1\leq X$   $ X\leq W$. $W\leq Y_1$  is derived from the $ W\leq Y$   $Y\leq Y_1 $.

For the second clause, the proof is obvious, since, only implications (2) and (7) derive $X\leq Y$ and the chain is an obvious consequence of applying implication (7) several times.

For the third clause, we observe that $X< Y$ can be derived either from elemental implications (4) or (6). 
If only (6) is applied several times then we derive  a chain. If (4) is used, then we need to show first $X\neq Y$. In order to show $X\neq Y$, according to clause (a) of the present lemma, we need to use (8) only once, and this means that  there is a chain from $X$ to $Y$ containing $W$ and 
a chain from $X$ to $Y$ containing $Z$, with $W\neq Z$ in $F$.
%

For the fourth clause, just notice that if a chain has more than two constants, then we can choose the two constants, one closest to the beginning of the chain and one closest to the end of chain and we form the chain by using  the arithmetic comparison between these two constants.
\end{proof}

\subsection{Analyzing the containment implication}

\begin{lemma}
\label{lem-orac}
Suppoe the containment implication (1) is true and is  in minimal form. Then for any AC, say $b_m$,  on the rhs, the following is true:
$$a_1~\wedge~ a_2 ~\wedge~ \cdots ~\wedge~ a_n \wedge  \neg b_1~\wedge~ \neg b_2 ~\wedge~ \cdots  \Rightarrow b_m$$
\end{lemma}

Lemma \ref{lem-orac} is a consequence of the following general lemma:
\begin{lemma}
If $p\Rightarrow q\vee r$ and $p\not\Rightarrow r$, then $p\wedge \neg r$ is satisfiable and  $p\wedge \neg r \Rightarrow q.$
\end{lemma}

\begin{proof}
Assume $p\Rightarrow q\vee r$ and $p\not\Rightarrow r$. Since $p\not\Rightarrow r$, there exists a model  for 
$p\wedge \neg r$, so $p\wedge \neg r$ is satisfiable.  To show that $p\wedge \neg r \Rightarrow q$ let $N$ be a model for $p\wedge \neg r $. Then $N$ is a model for $p$ and is not a model for $r$. From $p\Rightarrow q\vee r$, it follows that $N$ is a model for  $q\vee r$. Since $N$ is not a model for $r$, it follows that it is a model for 
$q$.
\end{proof}

%
We will use Lemma \ref{lem-orac} often, and, for convenience, we will say that   ''we move the $b_1, \ldots,b_{m-1}$  to the left hand side (lhs for short) and we apply Lemma \ref{lem-orac} to prove $b_m$.''
A consequence of Lemma \ref{lem-closure} 
 is the following:
\begin{lemma}
\label{lem-cccppp}
We can check whether a containment implication is true in polynomial time.
\end{lemma}
\begin{proof}
We move all the ACs  to the left hand side and compute the closure of these ACs. If there is an AC $a$ such that both $a$  and $\neg a$ are in  the closure   then the containment implication is true. 
Otherwise, not.
\end{proof}

%

Now we begin to focus on semi-interval ACs. 
It is trivial to see that, under our assumptions, if a result is valid for only LSI, then it is valid for only RSI as well, or, if a result is valid for two LSI and an arbitrary number of RSI in the query, then it is valid for two RSI and 
an arbitrary number of RSI. Thus, considering  a result, we define its symmetrical result to be the result where we 
replace LSI with RSI and RSI with LSI. Thus,
from hereon, and for the rest of the paper, we will state the results only for one of the symmetrical variants.

The following lemma roughly says that if we have only LSIs on the rhs of a containment implication which is in minimal form, then there is only one disjunct on the rhs. However, there is an exception in the case there is a ``$\neq$'' on the lhs which is stated in detail in the following lemma:
\begin{lemma}
\label{lemma-direct}
Consider the containment implication ( \ref{aa}), where the $a_i$s are from a set of ACs $A$ and the $b_i$s are from a set of ACs $B$. Then, for the  pairs of $A$ and $B$ listed below the following is true: If the containment implication ( \ref{aa}) is true and is in minimal form, then  the rhs has one disjunct.

\squishlist
\item [1.]
$A$ is a set of ACs of AC-type  $T_{AC}$, $B$ is a set of ACs of AC-type $\{var\! \leq \!const
 \}$.



\item [2.]
$A$ is a set of ACs of AC-type  $T_{AC}\setminus \{ var \neq var, var \neq const \}$, $B$ is a set of ACs of AC-type $\{var\! \leq \!const, var\! < \!const
 \}$.

\item [3.]
$A$ is a set of ACs of AC-type  $T_{AC}$, $B$ is a set of ACs of AC-type $\{var\! \leq \!const, var\! < \!const
 \}$.
and the following condition is satisfied: 
For 
any $X\neq Y$ that appears in  $A$,     if a constant (say $c_0$) is related by an AC to both $X$ and $Y$  in $A$   then, either (i) $c_0$ does not relate to both $X$ and $Y$ by a closed AC in   $A$ or  (ii) $c_0$ does not appear in an  open AC in   some $B$. 

\squishend
\end{lemma}


%

%



\begin{proof}
For the first case, 
we apply Lemma \ref{lem-orac} to prove a closed LSI, $b_i $ (let it be  $X\leq c$).  When we move each  LSI except $b_i$  to the  lhs, this becomes an RSI, so (according to    Lemma \ref{lemm-3clauses}, clause 2) it will not be used to the derivation of  LSI $X\leq c$.
Hence, the AC $b_i$ is implied by applying the elemental implications on the ACs of the set $A$ only.
Therefore,
the containment implication is in minimal form only with one AC on the rhs.
The second case is similar with the first case, since the absence of ``$\neq$''  in $A$ results in an open LSI in $B$ being implied by using a chain (Lemma \ref{lemm-3clauses}, clause 3) as in the first case.

Third case: If there is a closed LSI, $b_i$,  among the ACs in set $B$, then, according to Lemma \ref{lem-orac}, we move all other ACs of the containment implication on the lhs, then the argument is the same as above, except the following fine point: An open LSI, when moves on the lhs, becomes a closed RSI which together with a closed
LSI among the ACs in the set $A$ produces equality, which may prove $b_i$. However, if so, and taking into account  Lemma \ref{lemm-3clauses} clause 4, AC $b_i$ can be already proven by using only ACs in the set $A$.
So, in this case, the containment implication in minimal form has one AC on the rhs.
If there are only  open ACs in the set $B$, then, according to Lemma \ref{lemm-3clauses} clause 3, there is the possibility of proving a certain $b_i$ by using one closed RSI resulting from a  $b_j$ moved to the lhs and one closed LSI from the set $A$ together with a inequality, and a closed LSI from $A$, i.e.,
say the inequality mentioned  in  Lemma \ref{lemm-3clauses} clause 3, is 
 the one chain  mentioned in   Lemma \ref{lemm-3clauses} clause 3, is $X\neq Y$, one chain is $X\leq c$ and the other is $Y=c$ which is resulting from $Y\leq c$ from $A$ and $Y\geq c$ moved from $B$ to the lhs. For this to happen we need two closed LSIs in $A$ and two open LSIs in $B$, all sharing the same constant. This is the condition we excluded in the statement of this lemma, hence, in this case too, the containment implication has only one AC in the rhs when in minimal form.

%
%
%
\end{proof}

%
%
%
%
%
%

If the condition mentioned in clause 3 of Lemma \ref{lemma-direct}  is violated, then the conclusion of the  lemma is not  guaranteed to be true. Here is a simple  example of a violation where the conclusion of the lemma is not true.:

\begin{example}
\label{ex-one-ac-constant}
Consider the following containment implication: 

$$X\neq Y ~\wedge~ 5\leq Y ~\wedge~ 5\leq X
\Rightarrow   
X>5 \vee Y>5$$
The above  containment implication is in minimal form but it has two ACs on the rhs.
\end{example}

The above implication appears in the following example: 

$Q_1() :- a(X,Y),  X>  5 $~~~~~$Q_2() :- a(X,X' ), a(Y,X''), X\neq Y,X\geq 5,Y\geq 5 $

\subsection{Analysing the containment entailment}
\label{sec-analysis}
%
Consider the containment entailment (as in Theorem \ref{thm:cont-CQAC}).
$$ \beta_2  \Rightarrow \mu_1(\beta_1) \vee \cdots \vee
\mu_k(\beta_1).$$

where $\beta_2 $ is the conjunction of ACs in the contained query and $\beta_1$ is the conjunction of ACs in the containing query. Applying the distributive law, this containment entailment can be equivalently viewed as  a collection of {\em containment implications}, each containment implication being:
$$\beta_2 \Rightarrow  a_1 \vee a_2  \vee \cdots  a_k$$ where $a_i$ is one of the ACs in the conjunction $ \mu_i(\beta_1)$, $i=1,2,\ldots, k$.

\begin{example}
	\label{ex-firstLSI}
	For an example, consider the following (normalized) CQACs.
\vspace*{-.5cm}
	%
	\begin{center}
		\begin{tabular}{lll}
			$Q_1: q()$ & $\symif$ &$a(X_1,Y_1,Z_1), X_1=Y_1, Z_1< 5$ \\
			$Q_2: q()$ & $\symif$ &$a(X,Y,Z'), a(X',Y',Z), X\leq 5, Y\leq X, Z\leq Y, $$ $$X'=Y', Z'<5$\\
		\end{tabular}
	\end{center}
	
	Testing the containment $Q_2
	\sqsubseteq  Q_1$, it is easy to see that there are the following two containment mappings:
\vspace*{-.4cm}
	\squishlist
		\item $\mu_1:  X_1\rightarrow X, Y_1\rightarrow Y, Z_1\rightarrow Z'$
		\item $\mu_2:  X_1\rightarrow X', Y_1\rightarrow Y', Z_1\rightarrow Z$
	\squishend
	
	Hence, the containment entailment is given as follows:
	
	\begin{center}
		\begin{tabular}{l}
			$X\leq 5  \wedge Y\leq X\wedge  Z\leq Y\wedge    X'=Y'\wedge  Z'<5\Rightarrow$\\
			$\big(\; \mu_1(X_1)\!\!=\!\!\mu_1(Y_1) \;\wedge\; \mu_1(Z_1)\!\!<\!5\;\big)\;\vee$ \\
			$\big(\;\mu_2(X_1)\!\!=\!\!\mu_2(Y_1) \;\wedge\; \mu_2(Z_1)\!\!< 5\;\big)$\\
		\end{tabular}
	\end{center}
	
	which is equivalently written:
	
	\begin{center}
		\begin{tabular}{l}
			$X\leq 5  \wedge Y\leq X\wedge  Z\leq Y\wedge  X'=Y'\wedge  Z'<5  \Rightarrow$\\
			$(X=Y \wedge Z'< 5)  \vee (X'=Y' \wedge Z< 5)$\\
		\end{tabular}
	\end{center}
	Now we consider the containment entailment we built above. According to
	what we analyzed in this section, we can equivalently rewrite this containment entailment  by
	transforming its right hand side into a conjunction, where each conjunct is a disjunction of ACs. The transformed entailment is the following, where $\beta$   is the conjunction  $X\!\leq \!5  \wedge Y\!\leq \!X\wedge  Z\!\leq \!Y\wedge  X'\!\!=\!\!Y'\wedge  Z'\!<\!5$:
\vspace*{-.5cm}
	\begin{center}
		\begin{tabular}{lll}
			$\beta \Rightarrow \; (X\!\!=\!\!Y\vee X'\!\!=\!\!Y')\wedge(X\!\!=\!\!Y\vee Z\!< \!5)\wedge(Z'\!< \! 5\vee X'\!=\! Y')\wedge(Z'\!< \! 5\vee Z\!<\! 5)$ \\
		\end{tabular}
	\end{center}
So, we have, in this case, four containment implications, one of which is, e.g., \\$\beta\Rightarrow X\!\!=\!\!Y\vee X'\!\!=\!\!Y.$

\begin{lemma}
\label{lem-forgott}
Consider a containment entailment, $E$. 
The following two are equivalent:

a) One disjunct in the rhs of $E$ suffices to make the containment entailment  $E$ true.

b) Any containment implication in minimal form produced by $E$   has one disjunct in the rhs.

\end{lemma}
\begin{proof}
(b)$\Rightarrow$(a):  Suppose (b) is true and (a) is not true. Then, consider, from each disjunct of the containment entailment,  the AC which is not implied by the lhs of the entailment. These ACs (taken over all disjuncts) build a containment implication for which clause (b)  in the lemma is not true, hence contradiction.
\end{proof}
	
	%
	%
	%
	%
	%
	%
	%
	%
\end{example}

\section{On the transformation of CQAC queries to Datalog and CQ queries. The tree-like structure of the containment entailent for CQASI1 containing query}
\label{subsec-examples}

This appendix offers intuition on the proof of Theorem~\ref{thm:main123}.
Proposition
\ref{trick-pro}  begins to show a tree-like structure of the containment entailment and it gives the first intuition for constructing a Datalog query from the containing query that will help in deciding query containment. The following example gives an illustration of this intuition.

\begin{example}
	\label{ex:description-of-prop52}
	Let us consider the following two Boolean queries.
	\begin{center}
		\begin{tabular}{lll}
			$Q_1: q()$ & $\symif$ &$a(X,Y,Z),X\leq 8,Y\leq 7,Z\geq 6.$ \\
			$Q_2: q()$ & $\symif$ &$a(X,Y,Z),a(U_1,U_2,X),a(V_1,V_2,Y),$\\
			& &$ a(Z,Z_1,Z_2),a(U_1',U_2',U_1),a(V_1',V_2',V_1),$\\
			& &$ U_1'\leq 8,U_2'\leq 7,U_2\leq 7,V_1'\leq 8,$\\
			& &$ V_2'\leq 7,V_2\leq 7,Z_1\leq 7,Z_2\geq 6.$\\
		\end{tabular}
	\end{center}
	
	\begin{figure}
		\centering
		\includegraphics[width=0.30\linewidth]{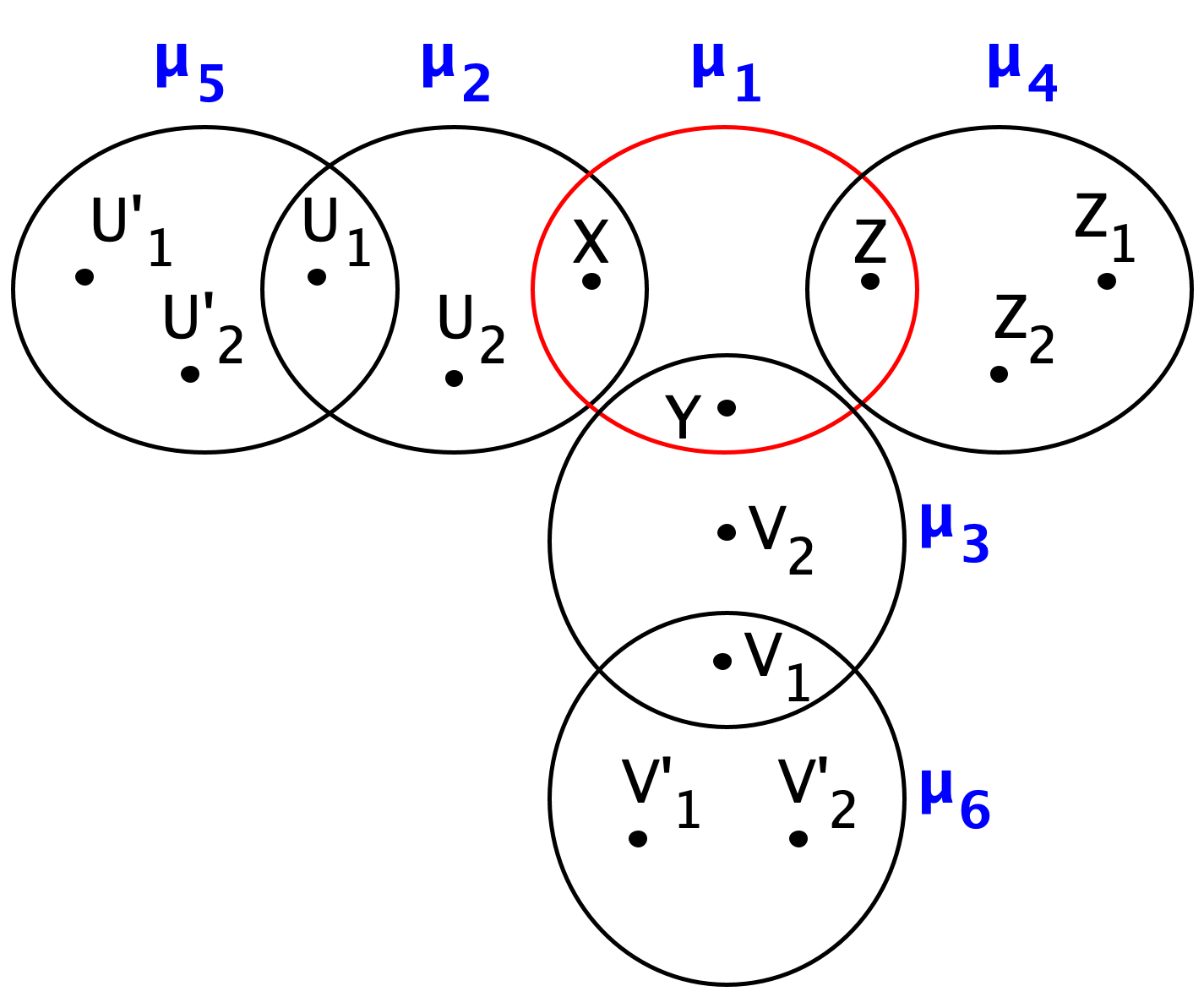}
		\caption{Illustration of containment entailment of Example~\ref{ex:description-of-prop52}}
		\label{fig:intuition-bodyq2}
	\end{figure}

	The query $Q_2$ is contained in the query $Q_1$. To verify this, notice that there are $6$ containment mappings from
$Q_1$ to $Q_2$. These mappings are given as follows: $\mu_1:$ $(X,Y,Z)\rightarrow(X,Y,Z)$, $\mu_2:$ $(X,Y,Z)\rightarrow(U_1,U_2,X)$, $\mu_3:$ $(X,Y,Z)\rightarrow(V_1,V_2,Y)$, $\mu_4:$ $(X,Y,Z)\rightarrow(Z,Z_1,Z_2)$, $\mu_5:$ $(X,Y,Z)\rightarrow(U_1',U_2',U_1)$, and $\mu_6:$ $(X,Y,Z)\rightarrow(V_1',V_2',V_1)$. After replacing the variables as specified by the containment mappings, the query entailment is $\beta\Rightarrow\beta_1\vee\beta_2\vee\beta_3\vee\beta_4\vee\beta_5\vee\beta_6$, where:
	
	\begin{center}
		\begin{tabular}{ll}
			\multicolumn{2}{l}{$\beta:$ $U_1'\leq 8\;\wedge U_2'\leq 7\;\wedge U_2\leq 7\;\wedge V_1'\leq 8\;\wedge V_2'\leq 7\;\wedge V_2\leq 7\;\wedge Z_1\leq 7\;\wedge Z_2\geq 6$.}\\
			$\beta_1:$ $X\leq 8\;\wedge Y\leq 7\;\wedge Z\geq 6$. & $\beta_4:$ $Z\leq 8\;\wedge Z_1\leq 7\;\wedge Z_2\geq 6$.\\
			$\beta_2:$ $U_1\leq 8\;\wedge U_2\leq 7\;\wedge X\geq 6$. & $\beta_5:$ $U_1'\leq 8\;\wedge U_2'\leq 7\;\wedge U_1\geq 6$.\\
			$\beta_3:$ $V_1\leq 8\;\wedge V_2\leq 7\;\wedge Y\geq 6$. & $\beta_6:$ $V_1'\leq 8\;\wedge V_2'\leq 7\;\wedge V_1\geq 6$.\\
		\end{tabular}
	\end{center}
	
%
	We now refer to Figure~\ref{fig:intuition-bodyq2} for visualization regarding Proposition \ref{trick-pro} using the above queries. The circles in the figure represent the mappings $\mu_1, \dots, \mu_6$, and the dots are the variables of $Q_2$. Notice now the intersections between the circles.
	Proposition \ref{trick-pro} refers to these intersections, such as the one between $\mu_3$ and $\mu_6$ (or,  the one between $\mu_2$ and $\mu_5$).
	
	The AC $V_1\geq 6$ ($V_1$ is included in the intersection between $\mu_3$ and $\mu_6$) is the one that is not implied  by $\beta$, as stated in the case (ii) of the Proposition \ref{trick-pro}. In particular, it is easy to verify that the following are true:
	
	\squishlist
		\item $\beta\wedge \neg (V_1\geq 6)\Rightarrow\beta_1\vee\beta_2\vee\beta_3\vee\beta_4\vee\beta_5$.
		\item $\beta\Rightarrow\beta_6\vee \neg (V_1\geq 6)$ (i.e., $\beta\Rightarrow(V_1'\leq 8\;\wedge V_2'\leq 7\;\wedge V_1\geq 6)\vee \neg (V_1\geq 6)$).
		\item $\beta\Rightarrow(V_1'\leq 8)$ and $\beta\Rightarrow(V_2'\leq 7)$.
	\squishend
\end{example}

%
%
%
%

%
%

\section{Proof of Theorem \ref{thm:main123}}
\label{prf:thm-main123p}
\begin{proof}
	We consider the canonical database, $D$, of $Q_2^{CQ}.$ For convenience, the constants in the canonical database use the lower case letters of the variables they represent. Thus, constant $x$ is used in the canonical database to represent the variable $X$. We will use the containment test according to which a
	Datalog query contains a conjunctive query $Q$  if and only if the Datalog query computes the head of $Q$  when applied on the canonical database of the conjunctive query $Q$. All the containment mappings we use in this proof are from the relational subgoals of CQAC $Q_1$ to the relational subgoals of CQAC $Q_2$. Hence, we will refer to them simply as ``mappings.''

\squishlist
\item
We refer to the fact 
 $I_{\theta
		c}(x)$ (atom  $J_{\theta c}(x)$, respectively) as the {\em associated} fact of SI AC $X\theta c$ and vice versa.
\item We will refer to an $I$ fact or  a $J$ fact  when we do not have any use of specifying which specific fact or what is the subscript of $I$ or $J$. When we want to refer to a specific constant, $c$, but not in the subscript, we will use the notation $I(c)$ or $J(c)$.
\squishend
	
	%
	%
	


We assume: 	 $\beta _1= \bigwedge_{i=1\ldots M} e^{\beta_1}_i$, where $M$ is the number of ACs in $Q_1$.

	
	{\bf ``If'' direction.}
%
%
%
%
	{\bf Inductive Hypothesis:}
Let $k<n$. Suppose 
  a $J$ fact associated with AC $e$, is computed using the
	mappings $\mu_1, \mu_2, \ldots, \mu_k$ (via applications of mapping rules). Then the following is true:
	\begin{equation}\label{imply-yy-A}
\beta_2\Rightarrow \mu_1(\beta_1) \vee \mu_2(\beta_1) \vee \cdots \vee \mu_k(\beta_1)\vee \neg \mu_{h}(e^{\beta_1}_i)
\end{equation}
where $\mu_h$ is one of the $\mu_1,\ldots, \mu_k$ and $e=  \mu_h(e^{\beta_1}_i)$.
	
	{\sl Proof of Inductive Hypothesis: }
Suppose the inductive hypothesis is true for $k< n$. We will prove it for $k=n$.
For the base case, suppose we compute a $J$ fact associated with AC $\mu(e^{\beta_1}_i)$ by using only one mapping rule, $\mu$. Hence, we use only link rules
to compute the $I$ facts in the body of the mapping rule. It is easy to see that the following is true:
$\beta_2\Rightarrow \mu(\beta_1) \vee \neg \mu(e^{\beta_1}_i).$

Now we prove the inductive case.
We compute a new $J$ fact firing a mapping rule using mapping $\mu$ and the already computed  $I$ facts, $I(x_1),\ldots, I(x_{m-1})$.  Each such I fact
is computed in turn  via a J fact using either a link rule or a coupling rule. 
%

Suppose $I(x_i)$ is associated with AC $ \mu(e^{\beta_1}_{i}) $ and is  computed either via a link rule or via a coupling rule which used the already computed $J$ fact $J(x_{k_i})$. Suppose     fact $J(x_{k_i})$ is computed via a mapping rule  using mapping $\mu_{m_i}$ and hence $J(x_{k_i})$ is associated with an
AC $\mu_{m_i}(e^{\beta_1}_{l_i})$. 

From
Lemma 
\ref{lem-coupling-implication} we have:
  $\beta_2 \Rightarrow \mu(e^{\beta_1}_{i}) \vee  \mu_{m_i}(e^{\beta_1}_{l_i})$
or equivalently 
$\beta_2 \wedge  \neg\mu(e^{\beta_1}_{i}) \Rightarrow   \mu_{m_i}(e^{\beta_1}_{l_i})$
or equivalently 
\begin{equation}
\label{implyy-C2}   
\beta_2 \wedge   
\neg  \mu_{m_i}(e^{\beta_1}_{l_i}) \Rightarrow   \mu(e^{\beta_1}_{i})
\end{equation}

%
%
%


Inductively, implication \ref{imply-yy-A} for to the already computed fact $J(x_{k_i})$ can be equivalently rewritten as 
	\begin{equation}\label{imply-yy-A1}
\beta_2 \bigwedge_{\mu_j \in Si}\neg \mu_j(\beta_1) \Rightarrow \neg  \mu_{m_i}(e^{\beta_1}_{l_i}) 
\end{equation}

Summing up over all $J$ facts that produce the $I $ facts that participate in firing a mapping rule via mapping $\mu$ to produce the new $J $ fact  we have:

%
%
%


$$\beta_2 \bigwedge_{\mu_j\in S_1\cup S_2\cdots \cup S_{m-1}} \neg  \mu_j(\beta_1) \Rightarrow \bigwedge_{all~i}
 \neg \mu_{m_i}(e^{\beta_1}_{l_i}))$$

From implication \ref{implyy-C2}    
and the above 

$$\beta_2 \bigwedge_{\mu_j\in S_1\cup S_2\cdots \cup S_{m-1}} \neg  \mu_j(\beta_1) \Rightarrow \bigwedge_{all~i}
 \neg \mu_{m_i}(e^{\beta_1}_{l_i}) ))\wedge \beta_2$$
$$ \Rightarrow \bigwedge_{all~i ~except~ i=1}
  \mu (e^{\beta_1}_{i})=\mu(\beta_1) \vee \neg\mu (e^{\beta_1}_{1})$$

The above is equivalent to 

$$\beta_2 \Rightarrow \bigvee_{\mu_j\in S_1\cup S_2\cdots \cup S_{m-1}}  \mu_j(\beta_1)  ~\vee ~
\mu(\beta_1) ~\vee ~\neg\mu (e^{\beta_1}_{1})$$

which proves the inductive hypothesis.

	{\bf  ``Only-if'' direction}.
	{\bf Inductive hypothesis }
Let $l\leq n$. 
There are mappings, $\mu_i, i=1,\ldots m+l$\footnote{We assume $\{    \mu_1, \dots , \mu_m \}\cap \{ \mu _{m+1}, \ldots,   \mu _{m+l} \}=\emptyset$},  such that  the following is true\footnote{Without loss of generality, we use the notation $e^{\beta_1}_{m+j}$, instead of the more precise $e^{\beta_1}_{f_{m+j}}$, since the specific value  plays no role in the proof, only how ACs ``interact'' with each other.}:
\begin{equation}
\label{eq-A}
\beta_2  \Rightarrow \mu_1(\beta_1) \vee \cdots \vee \mu_m(\beta_1) \vee 
\neg\mu _{m+1}(e^{\beta_1}_{m+1}) \vee \neg\mu _{m+2}(e^{\beta_1}_{m+2}) \vee  \cdots  \vee  \neg\mu _{m+l}(e^{\beta_1}_{m+l}) 
\end{equation}

%

where $\mu _{m+i}(e^{\beta_1}_{m+i}), i=1,\ldots,l$ is an AC that uses a variable  corresponding to the canonical database constant $x_{m+i}$ such that $J(x_{m+i})$ has been computed by firing a mapping rule using $\mu _{m+i}$.

	\textit{Proof of the inductive hypothesis:}
We will prove the inductive hypothesis for $l=n$.
For the base case, the following  is true 
	$\beta_2  \Rightarrow \mu_1(\beta_1) \vee \cdots \vee \mu_l(\beta_1)$,  and there are no facts computed yet.

%

%

For the inductive step,  when we consider any  implication   as in \ref{eq-A}.
 We rewrite it
 by transfering 
$\mu _{m+1}(e^{\beta_1}_{m+1}) \ldots  \mu _{m+l}(e^{\beta_1}_{m+l}) $
 on the lhs. Thus, we have again a 
containment entailment on which 
 we use Proposition \ref{trick-pro} to argue that, 
there is a mapping, say $\mu_1$,  among the mappings $\mu_1, \ldots, \mu_m$, for which all the conjuncts are implied by the lhs of the rewritten implication \ref{eq-A}  except one, say $\mu _1(e^{\beta_1}_{1})$.  

Thus, if we consider  $\mu _1(e^{\beta_1}_{i})$ for all $i\neq 1$, the following is true:
\begin{equation}
\label{eq-zz}
\beta_2 \bigwedge_i   \mu _{m+i}(e^{\beta_1}_{m+i}) \Rightarrow  \mu _{1}(e^{\beta_1}_{i}) 
\end{equation}
 or equivalently
\begin{equation}
\label{eq-zz-above}
\beta_2 \Rightarrow \bigvee_i   \neg\mu _{m+i}(e^{\beta_1}_{m+i}) \vee \mu _{1}(e^{\beta_1}_{i}) 
\end{equation}

We consider implication \ref{eq-zz}
and sum up for all  $i$ in $\mu _{1}(e^{\beta_1}_{i})$:

 \begin{equation}
\label{eq-zz-total}
\beta_2 \bigwedge_i   \mu _{m+i}(e^{\beta_1}_{m+i}) \Rightarrow  \bigwedge_{i\neq 1} \mu _{1}(e^{\beta_1}_{i}) 
\end{equation}
We observe that $ \bigwedge_{i\neq 1} \mu _{1}(e^{\beta_1}_{i})$ is equivalent to $\mu_1(\beta_1)  \vee \neg\mu _1(e^{\beta_1}_{i})$.  Thus, the following is true:
\begin{equation}
\label{eq-B.4}
\beta_2  \Rightarrow \mu_2(\beta_1) \vee \cdots \vee \mu_m(\beta_1) \vee \mu _{1}(e^{\beta_1}_{1})  \vee \\
	\neg\mu _{m+1}(e^{\beta_1}_{m+1}) \vee \neg\mu _{m+2}(e^{\beta_1}_{m+2}) \vee  \cdots  \vee \neg\mu _{m+l}(e^{\beta_1}_{m+l})
\end{equation}
 which proves  implication    \ref{eq-A}  for $l=n+1$.
Now we need to prove that some coupling rules can be fired to compute $I$ facts and then a mapping rule can be fired using mapping $\mu_1$ to compute a new $J$ fact on the constant $x_{new}$ represented by the variable 
 $\mu _1(e^{\beta_1}_{1})$, and, thus, conclude the proof of the inductive hypothesis.

We use  Proposition \ref{pro-lsi-rsi-any-closed}, to argue that, in  implication \ref{eq-zz-above}, 
when in  minimal form
there are at most two SIs on the rhs.

We denote $\delta =\bigwedge_i \neg  \mu _{m+i}(e^{\beta_1}_{m+i}) $ and rewrite  implication \ref{eq-zz}
as $\beta_2 \wedge \delta \Rightarrow  \mu _{1}(e^{\beta_1}_{i}) $ for all $i\neq 1$. 
%


According to  Lemma \ref{lemm-3clauses}(4), when we prove a closed AC  is in the closure of $F$ using a chain, then there is at most one constant in this chain. Since $ \mu _{1}(e^{\beta_1}_{i})$ is an SI, this constant is in the end of the chain, hence at most one SI appears in this chain, call this SI $ a_{special}  $.   $ a_{special}  $ is either one of the conjuncts of $\beta_2$ or one of $\delta$. In the former case, we have that $\beta_2\Rightarrow $   $a_{special}. $ In the latter case, we have  that  $\beta_2  \Rightarrow \mu _{1}(e^{\beta_1}_{i}) \vee  \mu _{j}(e^{\beta_1}_{j}) $   where  $a_{special}=    \mu _{j}(e^{\beta_1}_{j}) $ and   $\mu_j$ is one of $\mu _{m+1}, \ldots,   \mu _{m+l}$. 

Now, the already computed J fact associated with 
$ \mu _{j}(e^{\beta_1}_{j}) $  can be used to fire either a link rule or a coupling rule and compute the I associated with $\mu _{1}(e^{\beta_1}_{i})$. Now we have the $m-1$ $I $ facts to compute the new J fact which is associated with $\mu _{1}(e^{\beta_1}_{1})$.

In the end, we need to observe that when the query rule is fired then, the containment entailment can be proven  taking into account the  the inductive hypothesis in the ``if'' direction.
For the other direction, we want to prove  that: if the containment entailment  is true, then there is a computation. We use the inductive hypothesis of the ``only if'' direction. 
When in implication \ref{eq-A} there is one  mapping, $\mu_0(\beta_1)$, then we apply the query rule and compute the head of the query.
	\end{proof}
	%
	%
	
	%
	%
	%


The following is an example on which we can argue that the EDB predicate $U$ is needed.  We leave the analysis to the reader.
\begin{center}
		\begin{tabular}{lll}
			$Q_1({})$ & $\symif$ &$ e(X,Y),e(Y,Z),X\geq 6,Z\leq 7.$ \\
		\end{tabular}
	\begin{tabular}{ll}
		$Q_2()\symif$ & $e(A,B),e(B,C),e(C',D),e(D,E), C'\geq C, A\geq 6, E\leq 7.$ \\
	\end{tabular}
\end{center}
This is the containment entailment:
$ C'\geq C \wedge A\geq 6  \wedge E\leq 7 \Rightarrow (A\geq 6\wedge C\leq 7) \vee (C'\geq 6\wedge E\leq 7). $

\section{Building MCR for RSI1+ query}
\label{sec-extend-single-mapp}
Now, we exploit the results of Section \ref{sec-exte-single-mappinga-var}. We consider queries that are allowed to have any ACs among  head variables.
For ease of reference, we define a CQAC  RSI1+ query to be a CQAC query that
the ACs on  nondistinguished variables are closed semi-interval ACs  and there is a single right semi-interval AC.

Now, we present the algorithm for building an MCR in the language of (possibly infinite) union of CQACs for the case of CQAC views and queries that are RSI1+. The algorithm for building  an MCR  for query $Q$ and viewset $\V$ is the following:

{\bf Algorithm MCR-RSI1+}:

\squishlist
	\item [1.] We consider the query $Q'$ which results from the given query $Q$ after we have removed the ACs that contain only head variables.

\item [2.] We apply the algorithm for building MCR for query $Q'$ and views $\V$ (from previous subsection). Let this MCR be
$R'_{MCR}$.

\item [3.] We add a new rule in $R'_{MCR}$ (and obtain $R_{MCR}$ ) to compute the query predicate $Q$ as follows:
	$$Q():- Q'(), ac_1, ac_2, \ldots$$
	where $ac_1, ac_2, \ldots$ are the ACs that we removed in the first step of the present algorithm.
\squishend

\subsection{Proof that the algorithm {\bf MCR-RSI1+} is correct}
We consider the found by the  {\bf Algorithm MCR-RSI1+}  Datalog$^{AC}$ program, $R_{MCR}$.
Theorem	\ref{thm-mcr-cont1} below proves that every CQAC contained rewriting is contained in $R_{MCR}$ and Theorem
\ref{thm-mcr-in-query1} proves that $R_{MCR}$ is a contained rewriting.

%
%
%


\begin{theorem}
	\label{thm-mcr-cont1}
Given a query $Q$ which is RSI1+ and views $\V$ which are
	CQACs, the following is true:
	Let $R$ be a  CQAC contained rewriting to $Q$ in terms of $\V$. Then $R$ is contained in the one found by the  {\bf Algorithm MCR-RSI1+}  Datalog$^{AC}$ program, $R_{MCR}$.
\end{theorem}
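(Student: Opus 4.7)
The plan is to reduce this theorem to the RSI1 case (Theorem~\ref{thm-mcr-cont}) by isolating the head-variable ACs and showing that attaching them to $R'_{MCR}$ does not remove any tuple that a contained rewriting $R$ of $Q$ could have produced. I would establish $R^{exp}\sqsubseteq (R_{MCR})^{exp}$, which is equivalent to $R\sqsubseteq R_{MCR}$, via three observations. The first is that, viewed as a CQAC, the view-expansion $R^{exp}$ only produces output tuples whose head values satisfy the removed ACs $ac_1,\ldots,ac_\ell$. This is because $R^{exp}\sqsubseteq Q$ (the hypothesis that $R$ is a contained rewriting of $Q$), and every tuple in $Q(D)$ must satisfy the ACs $ac_i$, which involve only head variables of $Q$. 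Consequently $R^{exp}$ is semantically equivalent to the CQAC obtained by conjoining its body with $ac_1\wedge\cdots\wedge ac_\ell$; call this query $R^{exp+}$.

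Second, since $Q'$ is obtained from $Q$ by deleting the head-variable ACs, we have $Q\sqsubseteq Q'$. Therefore $R^{exp}\sqsubseteq Q\sqsubseteq Q'$, which means $R$ is also a contained rewriting of $Q'$ using $\V$. Since $Q'$ is an RSI1 query and $\V$ is a CQAC view set, Theorem~\ref{thm-mcr-cont} applies and yields $R\sqsubseteq R'_{MCR}$, i.e., $R^{exp}\sqsubseteq (R'_{MCR})^{exp}$.

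Third, by construction of the algorithm, the Datalog$^{AC}$ program $R_{MCR}$ differs from $R'_{MCR}$ only by the additional top rule that computes the query predicate using $Q'$ and the conjuncts $ac_1,\ldots,ac_\ell$. Hence every Datalog-expansion of $R_{MCR}$ is a Datalog-expansion of $R'_{MCR}$ with $ac_1,\ldots,ac_\ell$ appended at the top. Combining with the previous two steps, $R^{exp}\equiv R^{exp+}\sqsubseteq (R'_{MCR})^{exp}\wedge ac_1\wedge\cdots\wedge ac_\ell \equiv (R_{MCR})^{exp}$, which yields $R\sqsubseteq R_{MCR}$.

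The main obstacle I expect is the first step: the equivalence $R^{exp}\equiv R^{exp+}$ is a semantic consequence of $R^{exp}\sqsubseteq Q$ rather than a syntactic one, so the argument must be conducted at the level of query semantics (or, equivalently, using the AC-rectified view of the rewriting discussed in Section~\ref{subsec-rectified}) rather than on the surface syntax of the expansions.
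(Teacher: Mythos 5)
Your skeleton matches the paper's: pass to the query $Q'$ obtained by stripping the head-variable ACs, invoke Theorem~\ref{thm-mcr-cont} to get containment in $R'_{MCR}$, and then argue that the extra conjuncts $ac_1,\ldots,ac_\ell$ in the top rule of $R_{MCR}$ cannot exclude anything $R$ produces. Your second and third observations are fine, and the semantic fact in your first observation (every output tuple of $R^{exp}$ satisfies the head ACs because $R^{exp}\sqsubseteq Q$) is correct and is essentially the same fact the paper uses.

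The genuine gap is the opening claim that $R^{exp}\sqsubseteq (R_{MCR})^{exp}$ ``is equivalent to'' $R\sqsubseteq R_{MCR}$. It is not: expansion containment does not imply containment of the rewritings as queries over the view schema, and the paper's own Example~\ref{ex:export-nondist1} exhibits two contained rewritings whose expansions are equivalent but which are incomparable as view-level queries. The containment asserted by the theorem is, per the convention of Section~\ref{subsec-rectified}, AC-containment of $R$ in $R_{MCR}$ --- i.e., view-level containment of the AC-rectified $R$ --- and this stronger notion is what is actually needed so that $R(\I)\subseteq R_{MCR}(\I)$ on \emph{partial} view instances $\I\subsetneq\V(D)$; your chain only controls $R^{exp}(D)$, hence $R_{MCR}(\V(D))$, not $R_{MCR}(\I)$. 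You flag at the end that rectification may be needed, but you locate the difficulty in your first step (which is actually sound) rather than here, and you do not carry the repair through. The paper closes this gap by working syntactically: it splits the containment entailment of $R^{exp}\sqsubseteq Q$ as in Section~\ref{sec-exte-single-mappinga-var} into a reduced entailment for $Q'$ plus $\beta_2\Rightarrow\mu_1(\beta_{Q\text{-}head})$, observes that the latter is witnessed by ACs on the head variables in the closure of $\beta_2$ which therefore appear \emph{in the AC-rectified $R$ itself}, and then combines this with the view-level containment mapping from a Datalog-expansion of $R'_{MCR}$ to $R$ supplied by Theorem~\ref{thm-mcr-cont}. To fix your proof you would need to replace ``every tuple of $R^{exp}$ satisfies $ac_1,\ldots,ac_\ell$'' by ``the rectified $R$ contains ACs on its head variables that logically imply $ac_1,\ldots,ac_\ell$,'' and then conclude at the view level rather than at the expansion level.
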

\begin{proof}
Let $R$ be a contained rewriting to query $Q$. Since $Q'$ contains $Q$, $R$ is a contained rewriting of $Q'$ too. Hence, according to the results Theorem \ref{thm-mcr-cont}, $R$ is contained to $R'_{MCR}$.

Since $R$ is contained to $Q$, we consider the view-expansion of $R$, let it be $R_{exp}$ and we know that this is contained in $Q$, hence the containment entailment is true. However, $Q$ is a RSI1+ query, hence we can, according to Section \ref{sec-exte-single-mappinga-var} break the containment entailment in two as follows:

%

	\begin{center}
	\begin{tabular}{ll}
		$\beta_2\Rightarrow \mu_1(\beta_{Q'}) \vee \cdots $&{\em  reduced containment entailment}\\
		$\beta_2\Rightarrow \mu_1(\beta_{Q-head}) $ & eq. (1)\\
	\end{tabular}
\end{center}

where $\beta_2 $ is the conjunction of ACs in the closure of ACs in $R_{exp} $ and $\beta_{Q'}$ is the conjunction of ACs in $Q'$, $\beta_{Q-head}$  is the conjunction of ACs that use only head variables, and $\mu_i$'s are all the mappings from $Q$ to $R_{exp} $.

Observe that in equation (1), we can replace $\beta_2$ with only those ACs in the closure of $\beta_2$ that involve head variables. Because $R$ is AC-rectified, all these ACs appear in $R$; let us denote them by $\beta_{head}$
Thus $\beta_{head}$ logically implies $ \beta_{Q-head}$.
Now, $R'_{MCR}$ and $R_{MCR}$ have the same Datalog-expansions, except that the latter has the ACs in $ \beta_{Q-head}$ as well. 

Hence we have concluded that a) $R$ is contained to $R'_{MCR}$ and b) the ACs in $R$ imply the added ACs in each Datalog-expansion of $R'_{MCR}$ to make a Datalog-expansion of $R_{MCR}$. Hence $R$ is contained in $R_{MCR}$ according to the results in Section \ref{sec-exte-single-mappinga-var}.
%
\end{proof}

\begin{theorem}
\label{thm-mcr-in-query1}
Given a query $Q$ which is CQAC-RSI1 and views $\V$ which are
	CQACs, the following is true:
	The found by the algorithm in Subsection~\ref{subsec-buildingMCRs} Datalog$^{AC}$ program, $R_{MCR}$,   is a contained rewriting.
\end{theorem}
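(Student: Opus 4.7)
The plan is to show that every Datalog-expansion of $R_{MCR}$, once view-expanded, is a CQAC contained in $Q$. The algorithm MCR-RSI1+ obtains $R_{MCR}$ from $R'_{MCR}$ (the MCR for the stripped query $Q'$, where $Q'$ is $Q$ with the head-only ACs $ac_1,\ldots,ac_m$ removed) by adding a single top-level rule $Q(\bar X) \symif Q'(\bar X), ac_1,\ldots,ac_m$. Consequently, each Datalog-expansion $E'$ of the $Q$-predicate of $R_{MCR}$ has the form $E' = E \wedge ac_1 \wedge \cdots \wedge ac_m$ for some Datalog-expansion $E$ of the $Q'$-predicate of $R'_{MCR}$, and the relational parts of $E'$ and $E$ coincide.

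First I would invoke the contained-rewriting half of the argument in Section \ref{subsec-proof-mcr1} (the portion preceding Theorem \ref{thm-mcr-cont}, which already shows that the output of MCR-RSI1 is a contained rewriting) to conclude that $R'_{MCR}$ is itself a contained rewriting of $Q'$ using $\V$. In particular, for every Datalog-expansion $E$ of $R'_{MCR}$ the view-expansion $E^{exp}$ satisfies $E^{exp} \sqsubseteq Q'$; by Theorem \ref{thm:cont-CQAC} there are containment mappings $\mu_1,\ldots,\mu_k$ from the relational subgoals of $Q'$ to those of $E^{exp}$ with $\beta_{E^{exp}} \Rightarrow \bigvee_{i=1}^{k} \mu_i(\beta_{Q'})$.

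Next I would transfer this entailment from $Q'$ to $Q$. Since $E$ and $E'$ share their relational subgoals and the relational parts of $Q$ and $Q'$ are identical, the $\mu_1,\ldots,\mu_k$ are also exactly the containment mappings from $Q$'s relational part to $E'^{exp}$'s. Moreover, by construction the head of $R_{MCR}$'s top rule is $Q(\bar X)$, so each $\mu_i$ fixes the distinguished variables $\bar X$ pointwise; hence $\mu_i(ac_j) = ac_j$ for every $i,j$. Since each $ac_j$ already occurs as a conjunct in $\beta_{E'^{exp}} = \beta_{E^{exp}} \wedge \bigwedge_j ac_j$, combining the $Q'$-entailment with these trivial implications yields
\[
\beta_{E'^{exp}} \;\Rightarrow\; \bigvee_{i=1}^{k} \Bigl(\mu_i(\beta_{Q'}) \wedge \bigwedge_{j=1}^{m} ac_j\Bigr) \;=\; \bigvee_{i=1}^{k} \mu_i(\beta_Q),
\]
which is the containment entailment certifying $E'^{exp} \sqsubseteq Q$. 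Varying $E$ over all Datalog-expansions of $R'_{MCR}$ then shows $R_{MCR}$ is a contained rewriting of $Q$.

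The main obstacle I anticipate is the careful bookkeeping around AC-rectification (Section \ref{subsec-rectified}) and the head-variable identification: one must verify that every containment mapping from $Q$ to $E'^{exp}$ really does act as the identity on $\bar X$ (so that the $ac_j$ transfer verbatim), and that no additional mappings are introduced after normalization whose image would require extra disjuncts not already covered by the inherited $Q'$-entailment.
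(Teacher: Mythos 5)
Your proposal is correct and follows essentially the same route as the paper's proof: decompose each Datalog-expansion of $R_{MCR}$ into an expansion of $R'_{MCR}$ plus the head ACs $ac_1,\ldots,ac_m$, invoke the containment of that expansion's view-expansion in $Q'$, observe that the head ACs are conjuncts of the expansion's own AC set and are fixed by every containment mapping (since head variables are single-mapping), and combine via the distributive law to obtain the containment entailment for $Q$. Your version is merely more explicit about why $\mu_i(ac_j)=ac_j$, a point the paper compresses into ``trivially we have $\beta_{E}\Rightarrow\beta_{Q\text{-}head}$.''
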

\begin{proof}


Let $E$ be a CQAC query which is a Datalog-expansion of $R_{MCR}$. Let $E'$ be the CQAC that results from
$E$ by removing the head ACs.  $E'$ is a contained rewriting in $Q'$.
Hence if we consider the view-expansion, $E'_{exp}$, of $E'$, the containment entailment is true for $E'_{exp}$ and $Q'$.

Moreover, trivially we have $\beta_{E}\Rightarrow \beta_{Q-head}$
and using the distributive law, we derive the containment entailment that shows containment of the view-expansion
$E_{exp}$ of $E$ to $Q$.
%
\end{proof}


A straightforward consequence of the above two theorems is the following theorem which is the main result of this section.
\begin{theorem}
	\label{thm-mainsec6}
	Given a query $Q$ which is CQAC-SI1+ and views $\V$ which are
	CQACs, the algorithm in Subsection~\ref{subsec-buildingMCRs} finds an MCR of $Q$ using $\V$ in the language of (possibly infinite) union of CQACs which is expressed by a Datalog$^{AC}$ query.  Hence, for this case of query and views, we can compute certain answers in polynomial time.
\end{theorem}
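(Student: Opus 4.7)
The plan is to combine the two immediately preceding theorems with the general result of Section~\ref{sec-6}. Specifically, Theorem~\ref{thm-mcr-in-query1} asserts that the Datalog$^{AC}$ program $R_{MCR}$ produced by Algorithm \textbf{MCR-RSI1+} is a contained rewriting of $Q$ in terms of $\V$, and Theorem~\ref{thm-mcr-cont1} asserts that every CQAC contained rewriting of $Q$ in terms of $\V$ is contained in $R_{MCR}$. Taken together, these two facts are exactly the definition (Definition~\ref{def-mcr146}) of $R_{MCR}$ being a maximally contained rewriting of $Q$ using $\V$, where the ambient rewriting language is taken to be the class of CQACs (and hence, by closure under union, the class of (possibly infinite) unions of CQACs, since a Datalog$^{AC}$ query is equivalent to the union of its Datalog-expansions, each of which is a CQAC).

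To turn the MCR statement into the certain-answers statement I would invoke Theorem~\ref{owa_mcrs}, which says that for a CQAC query $Q$ and CQAC views $\V$, whenever an MCR in the language of union of CQACs exists it computes exactly the set of certain answers under the open world assumption. Since we have just identified $R_{MCR}$ as such an MCR, we conclude that $R_{MCR}(\I) = \mathrm{certain}(Q,\I)$ for every view instance $\I$ for which $\mathrm{certain}(Q,\I)$ is defined.

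For the complexity claim, I would appeal to the standard fact that evaluating a Datalog$^{AC}$ query has polynomial-time data complexity: the number of derivable facts over the active domain of $\I$ (augmented with the finitely many constants mentioned in the program) is polynomial in $|\I|$, and each rule firing uses a bounded number of subgoals and ACs, each of which is checkable in constant time. Since $R_{MCR}$ is a fixed Datalog$^{AC}$ program, evaluating it on $\I$ takes time polynomial in $|\I|$, and by the previous paragraph this evaluation is precisely $\mathrm{certain}(Q,\I)$.

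The only non-routine point is ensuring that the language ``Datalog$^{AC}$'' produced by the algorithm legitimately falls inside ``union of CQACs'' as required by Theorem~\ref{owa_mcrs}; this is immediate because every derivation of a fact in $R_{MCR}$ corresponds to a finite Datalog-expansion, which is a CQAC over the view predicates, so $R_{MCR}$ is semantically equivalent to the (possibly infinite) union of its Datalog-expansions. I do not anticipate a substantive obstacle: once Theorems~\ref{thm-mcr-cont1}, \ref{thm-mcr-in-query1}, and \ref{owa_mcrs} are in hand, the proof is essentially a three-line assembly.
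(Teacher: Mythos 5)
Your proposal is correct and follows essentially the same route as the paper: the paper presents Theorem~\ref{thm-mainsec6} as a direct assembly of Theorems~\ref{thm-mcr-cont1} and~\ref{thm-mcr-in-query1} (which together give the MCR property), combined with the Section~\ref{sec-6} result that such an MCR computes exactly the certain answers and the polynomial-time data complexity of Datalog$^{AC}$ evaluation. Your additional remark that the Datalog$^{AC}$ program is semantically the (possibly infinite) union of its CQAC Datalog-expansions is exactly the justification the paper relies on implicitly.
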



\end{document}